\documentclass[11pt,letterpaper]{scrartcl}
\usepackage{hyperref}
\usepackage{enumerate}
\usepackage{amsthm}
\usepackage{amsmath}
\usepackage{amssymb}
\usepackage{mathrsfs}
\usepackage{mathtools}
\usepackage{verbatim}
\usepackage[capitalise]{cleveref}
\usepackage{relsize}
\usepackage{xspace}
\usepackage{nicematrix}
\usepackage{fullpage}
\usepackage{float}

\allowdisplaybreaks

\theoremstyle{definition}
\newtheorem{definition}{Definition}[section]
\theoremstyle{plain}
\newtheorem{lemma}[definition]{Lemma}
\newtheorem{theorem}[definition]{Theorem}
\newtheorem{corollary}[definition]{Corollary}
\theoremstyle{remark}
\newtheorem{remark}{Remark}
\newtheorem{observation}[definition]{Observation}

\usepackage{macros}
\bibliographystyle{plainurl}

\title{A tight Monte-Carlo algorithm for Steiner Tree parameterized by clique-width}
\author{Narek Bojikian\hspace{2cm}Stefan Kratsch}

\begin{document}

\maketitle

\thispagestyle{empty}

\begin{abstract}
 \small

 \noindent Recently, Hegerfeld and Kratsch [ESA 2023] obtained the first tight algorithmic results for hard connectivity problems parameterized by clique-width. Concretely, they gave one-sided error Monte-Carlo algorithms that given a $k$-clique-expression solve \textsc{Connected Vertex Cover} in time $6^kn^{\Oh(1)}$ and \textsc{Connected Dominating Set} in time $5^kn^{\Oh(1)}$. Moreover, under the Strong Exponential-Time Hypothesis (SETH) these results were showed to be tight. Their work builds on a research program of determining tight complexity bounds that was initiated by work of Lokshtanov et al.~[SODA 2011 \& TALG 2018], about the first tight lower bounds relative to treewidth (modulo SETH), and work of Cygan et al.~[FOCS 2011 \& TALG 2022], about the cut-and-count framework and tight bounds for connectivity problems parameterized by treewidth.
 
 Hegerfeld and Kratsch [2023] leave open, however, several important benchmark problems, whose complexity relative to treewidth had been settled by Cygan et al., namely \textsc{Steiner Tree}, \textsc{Connected Odd Cycle Transversal}, and \textsc{(Connected) Feedback Vertex Set}. As a key obstruction they point out the exponential gap between the rank of certain compatibility matrices, which is often used for algorithms, and the largest triangular submatrices therein, which is essential for current lower bound methods. E.g., for \textsc{Steiner Tree} parameterized by clique-width the GF(2)-rank is at least $4^k$, while no triangular submatrix larger than $3^k\times 3^k$ was known. This at best yields an upper bound of time $4^kn^{\Oh(1)}$, while the obtainable lower bound of time $(3-\varepsilon)^kn^{\Oh(1)}$ under SETH was already known relative to pathwidth.
 
 We close this gap by showing that, somewhat surprisingly, \Stp\ can be solved in time $3^kn^{\Oh(1)}$ when given a $k$-clique-expression. Hence, for all parameters between cutwidth and clique-width it has the same tight complexity. We first show that there is a ``representative submatrix'' of GF(2)-rank $3^k$ (also ruling out larger than $3^k\times 3^k$ triangular submatrices). In principle, this could be used for an algorithm via the representative sets-based (or rank-based) approach of Bodlaender et al.~[ICALP 2013 \& IANDC 2015] but such an algorithm would not be sufficiently fast. It is tempting to additionally leverage the ``low'' GF(2)-rank of the submatrix but, at first glance, this leaves us with a useless combination of algorithmic approaches: Representative sets-based algorithms are slower and preserve \emph{existence} of a solution, whereas leveraging ``low'' GF(2)-rank, e.g., by cut-and-count, usually gives faster algorithms but preserves the \emph{parity} of the number of solutions. An outright combination of the two approaches cannot be expected to always yield the correct outcome because representation may change the parity of solutions (even when assuming isolation of a single solution). We nevertheless show how to reconcile these two worlds, thereby obtaining the claimed time complexity. We believe that our approach will be instrumental for settling further open problems in this research program.
\end{abstract}
\setcounter{page}{0}
\newpage

\section{Introduction}

Leveraging input structure to get faster algorithms is a fundamental strategy for coping with \classNP-hardness and other forms of intractability. This quickly leads to asking \emph{``how does structure affect complexity?''} This question lies at the heart of \emph{parameterized complexity}, where one quantifies input structure through so-called \emph{parameters} and seeks to determine the complexity of the corresponding \emph{parameterized problems} as a function of input size $n$ \emph{and} parameter value $k$. Initially, this has led to the classification of many hard problems as either being solvable in time $f(k)\cdot n^{\Oh(1)}$ for some function $f$, i.e., being \emph{fixed-parameter tractable} (\classFPT), or as being \classW{1}-hard and thereby unlikely to be \classFPT. Later, it became possible to prove tight bounds on the parameter dependence $f(k)$ for many problems subject to the (Strong) Exponential-Time Hypothesis (ETH/SETH).\footnote{ETH is the hypothesis that there is $c>1$ such that \textsc{3-CNF SAT} with $n$ variables cannot be solved in time $\Oh(c^n)$, while SETH is the hypothesis that for each $c<2$ there is $q\in\naturals$ such that \textsc{$q$-CNF SAT} cannot be solved in time $\Oh(c^n)$. It is known that SETH implies ETH, which in turn implies $\classFPT\neq\classW{1}$ and $\classP\neq\classNP$.} Typically, for a problem solvable in time $\alpha^k\cdot n^{\Oh(1)}$ one may be able to rule out time $2^{o(k)}\cdot n^{\Oh(1)}$ assuming ETH, while for some problems even time $(\alpha-\varepsilon)^k\cdot n^{\Oh(1)}$ can be ruled out for all $\varepsilon>0$ assuming SETH. In the following, we focus on the latter type of tight lower bounds, i.e., getting exact bases~$\alpha$.

This endeavor has been most successful for problems parameterized by \emph{treewidth},\footnote{Intuitively, the treewidth of a graph $G$, denoted $\tw(G)$, is the smallest value $k$ such that $G$ can be completely decomposed by non-crossing vertex separators of size at most $k$ each. Two important obstructions are cliques and grids as minors of $G$. Generally, only sparse graphs may have small treewidth as number of edges $m\leq \tw(G)\cdot n$.} a ubiquitous measure of graphs not just in parameterized complexity. The program of determining exact complexity was initiated by Lokshtanov et al.~\cite{DBLP:conf/soda/LokshtanovMS11a,DBLP:journals/talg/LokshtanovMS18} who showed that many well-known algorithms for problems parameterized by treewidth are essentially optimal modulo SETH, e.g., \textsc{Vertex Cover[$\tw$]}\footnote{We often write \textsc{Foo[$bar$]} to mean the parameterization of problem \textsc{Foo} by parameter $bar$.} can be solved in time $2^{tw}\cdot n^{\Oh(1)}$ but not in time $(2-\varepsilon)^{tw}\cdot n^{\Oh(1)}$, for any $\varepsilon>0$, assuming SETH. Another breakthrough was obtained by Cygan et al.~\cite{DBLP:journals/talg/CyganNPPRW22} who used their novel \emph{cut-and-count} technique to show, surprisingly, that many important connectivity-related problems have single-exponential (randomized) algorithms too. Moreover, they showed the obtained bounds to be optimal modulo SETH, e.g., \textsc{Connected Vertex Cover[$\tw$]} and \textsc{Steiner Tree[$\tw$]} can be solved in time $3^{tw}\cdot n^{\Oh(1)}$ but not in time $(3-\varepsilon)^{tw}\cdot n^{\Oh(1)}$ for any $\varepsilon>0$, assuming SETH. Since then, there has been much progress on tight bounds relative to treewidth (and the closely related \emph{pathwidth}), see, e.g.,~\cite{DBLP:conf/iwpec/BorradaileL16,DBLP:conf/soda/CurticapeanM16,DBLP:conf/soda/CurticapeanLN18,DBLP:journals/jacm/CyganKN18,DBLP:conf/esa/OkrasaPR20,DBLP:journals/tcs/HanakaKS21,DBLP:conf/icalp/MarxSS21,DBLP:journals/siamcomp/OkrasaR21,DBLP:journals/corr/abs-2210-10677/EsmerFMR22,DBLP:conf/soda/FockeMR22,DBLP:journals/dam/KatsikarelisLP22,DBLP:conf/soda/FockeMINSSW23}.

While treewidth and pathwidth quantify how well a graph can be decomposed along (small) vertex separators, it is natural to ask for decomposition along (small) edge cuts, which is more restrictive and may permit faster algorithms for well-decomposable graphs. This leads us to \emph{cutwidth} as an analogue of pathwidth (along with several tree-like variants, e.g.,~\cite{DBLP:conf/mfcs/GanianKS15,DBLP:conf/wg/BrandCGHK22,DBLP:conf/iwpec/GanianK22}). There are a number of tight bounds known relative to cutwidth~\cite{DBLP:journals/jgaa/GeffenJKM20,DBLP:journals/tcs/JansenN19,DBLP:conf/icalp/MarxSS21,DBLP:conf/stacs/PiecykR21,DBLP:conf/stacs/GroenlandMNS22,DBLP:conf/stacs/BojikianCHK23}, including tight bounds for
\textsc{Connected Vertex Cover}, \textsc{Connected Dominating Set}, \textsc{Feedback Vertex Set}, \textsc{Steiner Tree}, and \textsc{Connected Odd Cycle Transversal}~\cite{DBLP:conf/stacs/BojikianCHK23} as well as for counting connected edge sets~\cite{DBLP:conf/stacs/GroenlandMNS22}.
As treewidth (and more restrictive parameters like cutwidth and its variants) can only be small on sparse graphs, however, we need to look towards more general structure/parameters to understand the complexity of problems relative to structure present in dense/general graphs.

For this goal, \emph{clique-width} and \emph{rank-width} clearly stand out as canonical targets. The former measures how well the graph can be constructed when adjacency is controlled via $k$ labels/colors; the latter measures how well the graph can be decomposed by non-crossing cuts of small $\bin$-rank. Recently, Bergougnoux et al.~\cite{DBLP:conf/stacs/BergougnouxKN23} gave the first lower bounds for problems parameterized by rank-width, assuming ETH, but we are not aware of any results with tight bases even under SETH.\footnote{It should be noted that typical upper bounds relative to rank-width are not analyzed to exact bases and constants in the exponents, nor do the algorithms seem likely to be optimal. Such tight bounds may still be far off.} Complementing this, there is by now a handful of tight complexity bounds relative to clique-width (modulo SETH)~\cite{DBLP:conf/esa/IwataY15,DBLP:conf/soda/CurticapeanM16,DBLP:journals/dam/KatsikarelisLP19,DBLP:journals/siamdm/Lampis20,DBLP:conf/icalp/GanianHKOS22,DBLP:conf/esa/HegerfeldK23}. 
However, among these tight bounds there are almost none about connectivity-related problems (or problems with other kinds of non-local constraints), which are a natural target for theory development, especially when compared to what we know relative to treewidth/pathwidth as well as cutwidth. Only very recently did Hegerfeld and Kratsch~\cite{DBLP:conf/esa/HegerfeldK23} show how to solve \textsc{Connected Vertex Cover[$\cw$]} in time $6^{\cw}\cdot n^{\Oh(1)}$ and \textsc{Connected Dominating Set[$\cw$]} in time $5^{\cw}\cdot n^{\Oh(1)}$; moreover, they proved matching lower bounds modulo SETH. They conclude, however, that using their techniques they are unable to settle the complexity of other benchmark problems for which tight bounds are known relative to other width parameters, among them \textsc{Steiner Tree}, \textsc{Connected Odd Cycle Transversal}, and \textsc{Feedback Vertex Set}. What are these techniques and what seems to be the obstruction?

\paragraph{State of the art.}
Dynamic programming (DP) is the dominant (if not sole) algorithmic paradigm for dealing with problems on graphs of small width, i.e., problems parameterized by treewidth, cutwidth, clique-width, etc. The corresponding decomposition of the graph is traversed in a bottom-up manner while managing a sufficiently large selection of partial solutions or just some fingerprint thereof. Naive DP often relies on understanding the different ways in which a partial solution can interact with the rest of the graph (at cut or separator that is small or structured, depending on the used width parameter), and maintaining information for each type of interaction: 
Existence, optimal cost/value, or number of such partial solutions. 
Advanced DP, such as the cut-and-count technique~\cite{DBLP:journals/talg/CyganNPPRW22} or the so-called \emph{rank-based} approach~\cite{DBLP:journals/iandc/BodlaenderCKN15}, often relies on properties of certain \emph{matrices} that are already implicit in naive DP:
Rows of the matrix correspond to the possible interactions of partial solutions for a subgraph (for part of the decomposition); columns correspond to the ways of completing a partial solution to a solution for the full graph (not all of these need to be possible for any given graph). The entries of the matrix capture which combinations are possible. In this way, the dimensions of these matrices are usually equal to the number of types of partial solutions that are considered in a naive DP. Crucially, however, such a matrix may have favorable properties such as low rank (or even a useful low-rank factorization) or at least absence of large triangular/permutation submatrices, i.e., these may be much smaller than implied by its dimensions. (Now seems a good time to direct the reader to the insightful survey by Nederlof~\cite{DBLP:conf/birthday/Nederlof20}.)

So far, specific matrix properties seem aligned with the different problem types: Complexity of counting solutions over a field \fieldF depends on the \fieldF-rank of the matrix, complexity of finding an optimum cost/value solution (for small weights) depends on the maximum size of triangular submatrices, and complexity of decision depends on the maximum size of permutation submatrices.\footnote{A recent survey talk by Jesper Nederlof at Lorentz Center in Leiden (NL) helped (re)enforce this perspective.} Fortunately for settling complexity of optimization/decision problems, in most cases rank over some sensible field and dimensions of largest triangular/permutation submatrices coincide.\footnote{Overwhelmingly, such as with cut-and-count, one uses GF(2)-rank, which, however, corresponds to counting solutions modulo two. The well-known Isolation Lemma~\cite{MulmuleyVV87,DBLP:journals/talg/CyganNPPRW22} allows the necessary reduction of (small weight) optimization/decision to counting solutions of specified small weight modulo two.} This often enables the use of low-rank factorizations, i.e., intuitively a transformation of the space of partial solutions to a more convenient (i.e., smaller) one, with dimension then matching the rank, leading to the required fast DP algorithm. Apart from this, we still lack general methods for leveraging directly the absence of large triangular/permutation submatrices when the rank is larger. In principle, triangular/permutation submatrices are strongly related to bounds for so-called \emph{representative (sub)sets} of partial solutions,\footnote{Roughly, this means that having any set $\calS$ of partial solutions, there is a ``small'' \emph{representative} subset $\calS'\subseteq\calS$ that permits the same extensions to complete solutions, though not the same number of complete solutions.} so upper bounds on the dimensions of such submatrices are upper bounds on the required size of representative sets. Several results for (non-tight) DP algorithms rely on representative sets~\cite{DBLP:journals/iandc/BodlaenderCKN15,DBLP:journals/jacm/FominLPS16,DBLP:journals/talg/FominLPS17,DBLP:journals/tcs/BergougnouxK19,DBLP:journals/siamdm/BergougnouxK21}, often building on the \emph{rank}-based approach of Bodlaender et al.~\cite{DBLP:journals/iandc/BodlaenderCKN15}, so, not at all surprisingly, upper bounds on the rank are the means by which one leverages absence of the large triangular/permutation submatrices.\footnote{On the positive side, this works over any field, solves optimization with arbitrary weights, and works without randomization, such as required for the Isolation Lemma.}
For lower bounds, the story so far seems simpler, by comparison at least. Since the work of Lokshtanov et al.~\cite{DBLP:conf/soda/LokshtanovMS11a,DBLP:journals/talg/LokshtanovMS18} and Cygan et al.~\cite{DBLP:journals/corr/abs-1103-0534/CyganNPPRW11,DBLP:journals/talg/CyganNPPRW22}, lower bounds have certainly become increasingly complicated but the underlying principles largely remain the same. In particular, with current techniques there is no reason to hope for higher lower bounds than what we can get via the corresponding matrix property, e.g., not more than dimensions of largest triangular submatrix for optimization problems.\footnote{Notably, the situation for counting problems over some field $\fieldF$ seems easier as complexity appears to depend directly on the \fieldF-rank. That being said, crafting lower bound gadgets for these may be harder (see, e.g.,~\cite{DBLP:conf/soda/CurticapeanM16,DBLP:conf/soda/CurticapeanLN18,DBLP:conf/stacs/GroenlandMNS22,DBLP:conf/icalp/MarxSS21,DBLP:conf/soda/FockeMINSSW23}).}

\paragraph{Steiner Tree.}
Returning to \textsc{Steiner Tree[$\cw$]}, the $GF(2)$-rank of the relevant matrix, using graphs with $k=\cw$ labels, turns out to be (at least) $4^k$ while the largest known triangular submatrix is $3^k\times 3^k$ (basically inherited from \textsc{Steiner Tree} parameterized by treewidth or pathwidth). Current methods (such as in~\cite{DBLP:conf/esa/HegerfeldK23}) therefore yield an upper bound of time $4^{cw}\cdot n^{\Oh(1)}$ while there is no $(3-\varepsilon)^{\cw}\cdot n^{\Oh(1)}$ time algorithm, for any $\varepsilon>0$, assuming SETH. The most convenient solution would be to identify a $4^k\times 4^k$ triangular submatrix as this would likely lead to ruling out time $(4-\varepsilon)^{\cw}\cdot n^{\Oh(1)}$ (spoiler: this does not exist). Otherwise, one needs to leverage the absence of larger triangular submatrices but without the usual help of having a matching rank bound. Using representative sets (like in the rank-based approach) is unlikely to give a tight bound, due to the cost of reducing partial solutions to representative subsets via Gaussian elimination. We are aware of only two works so far that manage to overcome such a gap between rank and maximum size of relevant submatrices and obtain tight complexity bounds; both are for problems parameterized by cutwidth~\cite{DBLP:journals/tcs/JansenN19,DBLP:conf/stacs/BojikianCHK23}. Their approaches do not seem to transfer due to dependence on the specific problems and, crucially, on properties of graphs of small cutwidth.

\begin{table}
\newcommand{\expk}[1]{$\Oh^*({#1}^k)$}%
\newcommand{\tablespacing}{\hspace{0.25cm}}
\centering
\begin{tabular}{l@{\tablespacing}|@{\tablespacing}c@{\tablespacing}c@{\tablespacing}c@{\tablespacing}c}%
     & cutwidth & treewidth & modular-tw & clique-width\\%
    \hline%
    \\[-1em]
    \textsc{\textsc{$q$-Coloring}} & \expk{2} & \expk{q} & \expk{\binom{q}{\lfloor q/2\rfloor}} & \expk{(2^q-2)} \\%
    \textsc{Vertex Cover} & \expk{2} & \expk{2} & \expk{2} & \expk{2} \\
    \textsc{Connected Vertex Cover} & \expk{2} & \expk{3} & \expk{5} & \expk{6} \\
    \textsc{Connected Dominating Set} & \expk{3} & \expk{4}	& \expk{4} & \expk{5} \\
    \textsc{Steiner Tree}  & \expk{3} & \expk{3} & \expk{3} & $\boldsymbol{\ostar(3^k)}$ \\
    \textsc{Feedback Vertex Set} & \expk{2} & \expk{3} & \expk{5} & ? \\
    \textsc{Connected Odd Cycle Transversal} & \expk{4} & \expk{4} & ? & ? \\
    \hline%
    \\[-1em]
    References & \cite{DBLP:conf/stacs/BojikianCHK23,DBLP:journals/tcs/JansenN19,DBLP:journals/jgaa/GeffenJKM20} & \cite{DBLP:journals/corr/abs-1103-0534/CyganNPPRW11,DBLP:journals/talg/CyganNPPRW22,DBLP:journals/talg/LokshtanovMS18} & \cite{DBLP:conf/wg/HegerfeldK23,DBLP:journals/siamdm/Lampis20} & \cite{DBLP:conf/esa/HegerfeldK23,DBLP:journals/siamdm/Lampis20}%
\end{tabular}%
\caption{\label{table:tight-bounds}Tight complexity bounds (modulo SETH) for a selection of (connectivity) problems relative to cutwidth, treewidth, modular-treewidth, and clique-width. (Table adapted from~\cite{DBLP:conf/esa/HegerfeldK23}.)}
\end{table}

\paragraph{Our work.}
We close the gap for \textsc{Steiner Tree[$\cw$]} by developing a one-sided Monte Carlo algorithm (with false negatives only) that runs in time $3^{\cw}\cdot n^{\Oh(1)}$. Because $\cw(G)\leq\pw(G)+2$ for each graph $G$ (folklore), the known lower bound for \textsc{Steiner Tree[$\pw$]}~\cite{DBLP:journals/corr/abs-1103-0534/CyganNPPRW11} immediately rules out time $(3-\varepsilon)^{\cw}$ for all $\varepsilon>0$ assuming SETH. In fact, this lower bound holds already relative to cutwidth~\cite{DBLP:conf/stacs/BojikianCHK23}, so we get the same tight complexity of time $3^{k}\cdot n^{\Oh(1)}$ relative to all parameters between cutwidth and clique-width. This is a surprising behavior when compared with the behavior of other connectivity problems (see \cref{table:tight-bounds}) where complexity increases significantly for more general input structure. The following theorem formally states our main result.

\begin{theorem}\label{theorem:intro:mainresult}
 There is a one-sided error Monte Carlo algorithm (no false positives) that, given a graph $G=(V,E)$, a set of terminals $T\subseteq V$, a number $\budget$, and a $k$-clique-expression of $G$, takes time $3^k\cdot n^{\Oh(1)}$ and determines, with high probability, whether a connected subgraph $H$ of $G$ of exactly $\budget$ vertices exists that spans all of $T$.
\end{theorem}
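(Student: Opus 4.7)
The plan is to combine cut-and-count with a representative-set reduction over $\mathbb{F}_2$, ultimately running a DP on only $3^k$ states over the clique-expression. First, I would apply the Isolation Lemma to assign random polynomially bounded weights to the vertices, so that with high probability each weight class contains a unique minimum-weight candidate. Fixing a designated terminal $t^* \in T$, I would count modulo $2$ pairs $(H, s)$ with $|H| = \budget$, $T \subseteq H$, $s : H \to \{L, R\}$, $s(t^*) = L$, and no edge of $G[H]$ bichromatic under $s$. By the standard cut-and-count argument, across all weight classes this parity is odd with high probability iff a connected witness $H$ exists.

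Next, I would set up a dynamic program over the given clique-expression. At each subexpression node, partial marked solutions are classified by their label signature $\sigma \in \{0, L, R, LR\}^k$, where $\sigma_i$ records whether label $i$ has no solution vertex, only $L$-vertices, only $R$-vertices, or vertices of both sides. A naive DP maintains, for each signature, the parity of its partial count, updating under each clique-expression operation (introduce, relabel, disjoint union, join) in time $4^k n^{\Oh(1)}$. In particular, joining labels $i, j$ forbids new bichromatic solution edges and forces conservative updates on signatures containing $LR$.

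The core step is to collapse the state space from $4^k$ to the $3^k$ pure signatures in $\{0, L, R\}^k$. I would analyze the interaction matrix $M$ whose rows are left signatures and whose columns parameterize the possible completions (and final accept condition), and show that, for each label coordinate, the $LR$ row block can be written as an $\mathbb{F}_2$-linear combination of the $0, L, R$ row blocks. This exhibits a representative submatrix of $\mathbb{F}_2$-rank $3^k$ and, as a by-product, rules out triangular submatrices of $M$ larger than $3^k \times 3^k$. The DP is then implemented on vectors in $\mathbb{F}_2^{3^k}$ indexed by pure signatures, with each clique-expression operation replaced by its projected image.

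The main obstacle, as flagged in the introduction, is to reconcile this reduction, which ordinarily preserves only \emph{existence} of a solution, with the \emph{parity} counting required by cut-and-count. A naive combination can flip parities even under isolation and so give wrong answers. I would resolve this by fixing an explicit $\mathbb{F}_2$-linear change of basis between the $4^k$-state and $3^k$-state vectors and verifying that it commutes with every clique-expression operation over $\mathbb{F}_2$, so the projected DP computes the correct count modulo $2$ at every node, not merely existence. The delicate case is join, where the non-local ``no bichromatic solution edge'' constraint interacts with the coordinate-wise projection; here I would check that the projection exactly annihilates the $LR$-generated subspace and that the induced join map on pure signatures remains parity-correct. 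Combining the pieces yields the claimed $3^k n^{\Oh(1)}$ Monte Carlo algorithm.
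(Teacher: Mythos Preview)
Your proposal correctly locates the obstacle but the resolution you sketch is precisely the step that fails. You assert that one can exhibit an explicit $\mathbb{F}_2$-linear change of basis from the $4^k$ signatures $\{0,L,R,LR\}^k$ to the $3^k$ pure signatures that \emph{commutes with every clique-expression operation}, and you defer the join case to ``I would check''. The paper's analysis shows this check does not go through. The relevant $4\times 4$ per-label compatibility matrix has full $\mathbb{F}_2$-rank (so the $LR$ row is \emph{not} a combination of the other three in the sense you need), and more to the point, the map that sends a pattern to its representatives is one-to-many at join nodes: a single partial solution acquires several representatives, and the number of those representatives consistent with a given completion need not agree modulo $2$ with the number of solutions. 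Hence even after isolating a unique solution, the projected DP can report the wrong parity. This is exactly the ``outright combination cannot be expected to yield the correct outcome'' warning in the introduction, and your plan does not get past it.

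What the paper actually does, and what your proposal is missing, is a \emph{second} application of the Isolation Lemma. After a join $\eta_{i,j}$, a complete pattern may become non-complete with $\{i,j\}$ as its incomplete labels; it is then represented by up to four complete patterns obtained by independently fixing or forgetting $i$ and $j$. The paper assigns independent random weights to these four \emph{actions} at every join (and introduce) node, and tracks an additional action-weight coordinate in the DP. With high probability this isolates a unique action sequence, i.e., a unique representation of the unique solution, so that counting representations modulo $2$ becomes sound. Only after this second isolation does the parity-representation by the $3^k$ CS-patterns (which is not your $\{0,L,R\}^k$ but a basis of the complete-pattern submatrix) give a correct $3^k\cdot n^{\Oh(1)}$ algorithm. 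Your plan has one isolation layer where two are needed, and no mechanism to control the multiplicity of representations created at joins.
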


Apart from settling the complexity of \textsc{Steiner Tree[$\cw$]}, the main interest of course lies in how the gap between $\bin$-rank and largest triangular submatrix could be overcome. We should remark that the description of the algorithm does not explicitly talk about any matrices and works entirely on the level of \emph{(connectivity) patterns}, i.e., on how partial solutions connect the different label classes in a partial graph corresponding to a part of the $k$-clique-expression.\footnote{We will still delve a little into the matrix perspective in \cref{sec:technicalcontribution} because we find it more instructive. After all, the project started by trying (and failing) to computationally find larger than $3^k\times 3^k$ triangular submatrices.} We identify a family of patterns, called \emph{complete patterns}, that are representative for the class of all patterns in a strong and constructive sense: For each pattern $p$ there is a set $R_p=\{q_1,\ldots,q_{\ell}\}$ of complete patterns that together completes into exactly the same solutions as $p$ does.
Crucially, we also identify a specific basis $\CSP$ of size $3^k$ to the submatrix induced by complete patterns only; so it has rank~$3^k$.

Hence, one might feel compelled to apply known techniques to leverage this small basis in a tight algorithm. However, a direct application might fall short of achieving this, since using the resulting basis, one can only count representations of partial solutions, but not the solutions themselves (see \cref{sec:technicalcontribution}). Instead, in addition to the usual step of isolating some optimal solution, we set up a second layer of isolation that assigns small random weights to actions in the DP, i.e., to the different contributions in the DP recurrences. The crux is that adding, e.g., a join between labels $i$ and $j$ creates connectivity patterns that are no longer complete and, in general, it takes several complete patterns to represent them. By giving each possible action at a node in the expression tree (there are never more than four different ones) its own random weight, we effectively isolate a single representation of the previously isolated solution (with sufficient probability). Beyond these technical contributions, we apply established tools for fast DP, e.g., fast convolutions~\cite{DBLP:conf/esa/RooijBR09,BjorklundHKKNP16, DBLP:conf/csr/Rooij21, DBLP:conf/esa/HegerfeldK23}. Let us nevertheless remark that our low-rank transformation is not via a corresponding cut-and-count basis but using a different set of (connectivity) states that is related to states in lower bounds in previous work~\cite{DBLP:conf/stacs/BojikianCHK23}.\footnote{This is purely for convenience though it seems interesting that exact cut-and-count is not mandatory. There is no reason to believe that this part of the algorithm could not be done via a cut-and-count-like factorization/basis.}

We think that our work will be instrumental for settling the (parameterized) complexity of further problems relative to less restrictive parameters like clique-width: First, \textsc{Steiner Tree[$\cw$]} isolates the hardness of connectivity and our approach via complete patterns may transfer more or less directly to other connectivity problems parameterized by clique-width. Second, more generally, we think that our method of isolating different representations of a solution via weighted actions in DP is a general and robust way of dealing with further cases, not just for connectivity, where we find a gap between the rank and the largest triangular/permutation submatrices, i.e., when the rank of the matrix is not the correct answer for the (parameter dependence in the) complexity.

\paragraph{Further related work.}
Clique-width was introduced by Courcelle and Olariu~\cite{CourcelleO00} building on work of Courcelle et al.~\cite{DBLP:journals/jcss/CourcelleER93} and it is similar to the NLC-width of Wanke~\cite{DBLP:journals/dam/Wanke94}. Courcelle et al.~\cite{DBLP:journals/mst/CourcelleMR00} showed that every graph problem expressible in MSO$_1$ logic (monadic second order logic of graphs with quantification over vertex sets but not edge sets) can be solved in linear time for graphs with a given $k$-clique-expression. In other words, all these problems are FPT when parameterized by clique-width, time $f(k)\cdot n$, though the function $f$ depends on the formula capturing the problem and may be non-elementary (cf.~\cite{DBLP:conf/icalp/Lampis13}), so likely far from being tight.\footnote{This is not an artifact of clique-width but fully analogous to Courcelle's theorem for graphs of bounded treewidth~\cite{DBLP:journals/iandc/Courcelle90,DBLP:journals/ita/Courcelle92} and corresponding lower bounds~\cite{DBLP:journals/apal/FrickG04}.} This tractability is not restricted to MSO$_1$-expressible problems but many other important problems are \classFPT with respect to clique-width or at least in \classXP, i.e., admitting a time $n^{f(\cw)}$ algorithm (see, e.g.,~\cite{DBLP:conf/wg/EspelageGW01}). Nevertheless, some problems, like \textsc{Disjoint Paths}, are \classNP-complete on graphs of bounded clique-width, while being \classFPT with respect to treewidth (cf.~\cite{DBLP:journals/tcs/GurskiW06}). The first single-exponential time algorithms for connectivity problems parameterized by clique-width were given by Bergougnoux and Kant\'e~\cite{DBLP:journals/tcs/BergougnouxK19,DBLP:journals/siamdm/BergougnouxK21}, e.g., \textsc{Steiner Tree}, \textsc{Connected Dominating Set}, and \textsc{Connected Vertex Cover} each in time $2^{\Oh(\cw)}\cdot n$, but building on the rank-based approach these bounds are likely not tight.

The main drawback of clique-width lies in the difficulty of finding good expressions, i.e., good bounds on the clique-width of given graphs, in reasonable time. Fellows et al.~\cite{DBLP:journals/siamdm/FellowsRRS09} showed that it is \classNP-complete to determine, on input of graph $G$ and integer $k$, whether the clique-width of $G$ is at most $k$. It is open, however, whether for each fixed value of $k$ there is an efficient algorithm for recognizing graphs of clique-width at most $k$; such algorithms are known only for $k\leq 3$~\cite{DBLP:journals/dam/CorneilHLRR12}. In particular, there is neither an \classFPT-algorithm for \textsc{Clique-Width[$k$]} known, nor is it known to be \classW{1}-hard with respect to $k$. The arguably best way for using low clique-width is an exponential-ratio \classFPT-algorithm due to Seymour and Oum~\cite{DBLP:journals/jct/OumS06} (made faster by Oum~\cite{DBLP:journals/talg/Oum08}). That being said, better algorithms for computing clique-width are still possible, and there may be variants of clique-width that give similar complexity (used as parameters) but are easier to compute.

Regarding further parameters relative to which there are tight bounds (modulo SETH) there is another width parameter called \emph{modular-treewidth/pathwidth}~\cite{DBLP:journals/siamdm/Lampis20,DBLP:conf/wg/HegerfeldK23}, which essentially falls between treewidth/pathwidth and clique-width.\footnote{It should be noted that the clique-width of a graph is at most exponential in its treewidth and, unfortunately, this bound is tight, making some comparisons of parameters and bounds a little awkward.} Generally, tight bounds modulo SETH seem mostly confined to parameterization by width parameters, where building gadgets for lower bounds seems easier. There are, though, a few examples where a tight bound holds for a so-called \emph{modulator} parameter or even for parameterization by \emph{solution size} (see, e.g.,~\cite{DBLP:conf/swat/Cygan12,DBLP:conf/stacs/PiecykR21,DBLP:journals/dam/JaffkeJ23,DBLP:conf/iwpec/HegerfeldK22}).

\paragraph{Organization.}
In \cref{sec:technicalcontribution} we provide a short review of our technical contribution, and outline our methods more formally. \cref{sec:pre} introduces the required notation.
In \cref{sec:pats} we define \emph{patterns} and prove a handful of their properties.
In \cref{sec:sol-pat}, we show how to represent partial solutions using patterns, and show that one can build all patterns corresponding to partial solutions over a syntax tree of a clique-expression recursively.
In \cref{sec:rep}, we show how to represent any set of patterns using complete patterns only.
In \cref{sec:rep-count} we show that one can compute the parity of weighted representations of weighted partial solutions using complete patterns recursively over the syntax tree. We also show that using the isolation lemma, one can isolate a single representation of the minimum weight solution with high probability.
In \cref{sec:parity-rep} we show that one can reduce any set of complete patterns into a set of $CS$-patterns in a way that preserves counting (over $\bin$). 
In \cref{sec:algo} we show how to compute the parity of representations of partial solutions using $CS$-patterns efficiently, proving the main theorem of this work.
We conclude in \cref{section:conclusion}.

\section{Technical contribution}\label{sec:technicalcontribution}

In this section we review the technical contribution of this paper. The goal of the section is twofold. The first is to explain the motivation and the intuition behind the presented methods in a concise way. The second is to show the significance of the methods by explaining, on a high level, why existing techniques seem unable to close the gap for \textsc{Steiner Tree[$\cw$]}.

\subparagraph{Patterns.}
Not surprisingly, connectivity patterns play an important role for our work. They get more complex in the context of clique-width (cf.~\cite{DBLP:conf/esa/HegerfeldK23}) and identifying the right families of patterns is crucial. For a natural number $k$, let $[k]_0 = \{0,1,\dots, k\}$. We define a \emph{pattern} as a subset of the power-set of $[k]_0$ with exactly one set containing the element $0$, called the zero-set (\cref{def:pattern}). Let $\Pat$ be the family of all patterns. A labeled graph is an undirected graph $G=(V,E)$ together with a function $\lab\colon V\rightarrow [k]$. Let $\term\subseteq V$ be a non-empty set of terminals (see \cref{sec:pre} for reference). 
Let us fix an arbitrary vertex $v_0 \in \term$.
Then each set $S\subseteq V$ defines a pattern $\pat^S_G$ in $G$, given by the labels appearing in each connected component of $G[S]$. We add the label $0$ to the set corresponding to the connected component of $v_0$, if $v_0\in S$ or as a singleton otherwise (\cref{def:sol-pat}). It holds that $S$ is a Steiner tree in $G$ if and only if $\term \subseteq S$ and $p^S_G$ contains a single set (\cref{lem:solution-if-one-set}). Given a $k$-expression of the graph $G$, and a corresponding syntax tree $\syntaxtree$ (see \cref{sec:pre}). For each $x\in V(\syntaxtree)$ and $\budget\in[|V|]_0$ we define the family (\cref{def:sol})
\[\sol_x[\budget] = \{p^S_{G_x} \colon S\subseteq V_x, |S| = \budget, \term\cap V_x\subseteq S\}.\]
Notably, the patterns $p^S_{G_x}$ carry enough information, so that we could compute $\sol_x[\budget]$, from the families $\sol_{x'}[b']$ for all nodes $x'$ children of $x$ in $\syntaxtree$, and all values of $b'$ only (\cref{lem:recursive-sol}).
However, since the number of patterns is approximately $2^{2^k}$, and the number of subsets $S$ of $V$ is $2^n$, these families are too large to compute. Hence, we follow previous techniques to approach this problem from an algebraic perspective.
For this we require a natural \emph{join} operation for patterns that is suitable for clique-expressions (\cref{def:patops}). Intuitively, the result can be obtained by taking two labeled graphs with partial solutions matching the two patterns and adding all edges between vertices of the same label (i.e., connect vertices of label $i$ of the first graph with vertices of label $i$ in the second); the pattern of the combined (partial) solution is the result of the join. We say that two patterns $p,q$ are consistent ($p\sim q$), if it holds that $p\join q$ contains one set only.

\subparagraph{Algebraic perspective.}
As mentioned before, most algorithmic approaches in this area use some algebraic techniques on certain (consistency) matrices, defined by the given problem and the parameter, to prove some upper-bound on the rank of the matrix, possibly after manipulating the underlying graph \cite{DBLP:conf/stacs/BojikianCHK23} or the matrix itself \cite{DBLP:journals/tcs/JansenN19}, and use the resulting bound to achieve a running time that matches the given rank. 
On the other hand, matching lower bounds mostly use the fact that for the given matrix the size of a maximum triangular (or even permutation) submatrix matches the rank of the matrix itself. This allows to use partial solutions corresponding to the submatrix to encode assignments to variables in a reduction from \textsc{$q$-CNF SAT}, while the triangular/permutation form ensures that these can be propagated through the gadgets in a sufficiently controlled way.

We define the binary matrix $\mat \subseteq \{0,1\}^{\Pat\times \Pat}$, whose rows and columns are indexed by the family of all patterns, where for $p,q\in \Pat$, we define
\[\mat[p,q]=[p\sim q].\]
Intuitively, the matrix indicates for each pair of patterns whether they are consistent or not. This encodes for all labeled graphs, and all possible solutions, whether the union of two solutions induces a connected subgraph. Similar to previous work in this area, we aim to leverage the algebraic properties of this matrix to reduce the space of solutions in a dynamic programming scheme over $\syntaxtree$. However, as we shall see next, the rank of this matrix over $\bin$ is at least $4^k$, while a maximum triangular submatrix has size $3^k$. 
This is the aforementioned gap between the rank and the size of triangular submatrices, which obstructs the application of known techniques.

We define a set of states $\states=\{\stateO, \stateS, \stateC, \stateCS\}$. Let $s\colon [k]\rightarrow \states$ be some state-assignment (we write such assignments as $k$-tuples in the natural way). We define the pattern
\[
p_s = \Big\{ \big\{0\big\}\cup\big\{i\colon s(i)\in\{\stateC, \stateCS\}\big\}\Big\}
\cup \Big\{ \big\{i\big\} \colon s(i)\in \{\stateS, \stateCS\}\Big\},
\]
which intuitively corresponds to a solution consisting of a single (large) connected component containing $v_0$, and some additional isolated vertices. The mapping $s$ then specifies which labels are touched by the connected component containing $v_0$ (and hence, appear in the zero-set), indicated by the state $\stateC$, and which labels appear in components of size one, indicated by $\stateS$, that is confined to isolated vertices with that label. The state $\stateCS$ simply corresponds to both cases at the same time;
We emphasize that $p_s$ consists only of a zero-set and singletons, where labels with $\stateS$ in their state appear as singletons, and labels with $\stateC$ in their state belong to the zero-set of $p_s$.
Let $\Pat_B = \{p_s \colon s\in \states^{[k]}\}$ be the family of patterns corresponding to all state-assignments. It holds that $|\Pat_B| = 4^k$. We show that the submatrix $\mat_B$ of $\mat$, whose rows and columns are indexed by $\Pat_B$ has full rank over $\bin$, which proves that the rank of $\mat$ is at least $4^k$. For a single label, we get the matrix $\mat'$ (See \cref{fig:consistency-matrix}).
\begin{figure}[H]
\[
\hfill
\mat' = 
\begin{bNiceMatrix}[first-row, first-col]
& (\stateC) & (\stateCS) & (\stateO) & (\stateS)\\
(\stateS)  & 1 & 1 & 0 & 0\\
(\stateO)  & 1 & 0 & 1 & 0\\
(\stateCS) & 1 & 1 & 0 & 1\\
(\stateC)  & 1 & 1 & 1 & 1\\
\end{bNiceMatrix}
\hfill
\]
\caption{ 
    \label{fig:consistency-matrix}
    The consistency matrix $\mat'$ indexed by $\Pat_B$ patterns over a single label. We indicate by the given indices the patterns $p_s$ defined by state assignments $s$ as tuples of arity one.}
\end{figure}
We call two states $X,Y\in\states$ \emph{consistent}, if it holds that $\mat'[p_{s_X},p_{s_Y}] = 1$, where $s_Z$ is the mapping that assigns the state $Z\in \states$ to the single label.
It is not hard to see that $\mat'$ has full rank over $\bin$. Now for two arbitrary state-assignments $s,t \in \states^{[k]}$, it holds that $p_s$ and $p_t$ are consistent if and only if $s(i)$ and $t(i)$ are consistent for each $i\in[k]$. Hence, $\mat_B$ is the $k$th Kronecker power of $\mat'$.
It follows that the rank of $\mat_B$ is indeed $4^k$.

\subparagraph{Triangular submatrix.}
Although this does not prove that rank of the whole matrix $\Pat$ is $4^k$, but only sets a lower bound on it, it already suggests that a direct application of known methods would only yield a running time of $\alpha^k\cdot n^{\Oh(1)}$ for $\alpha\geq 4$. On the other hand, it was open whether $3^k$ is the largest achievable dimension of a triangular submatrix of $\Pat$. Let $\statescs = \{\stateS, \stateO, \stateCS\}$, and $\CSP = \{p_s\colon s \in \statescs^{[k]}\}$, the set of all patterns corresponding to state-assignments that map only to $\statescs$. Then the matrix $\mat_{CS}$, the submatrix of $\mat$ induced by patterns in $\CSP$, is a triangular submatrix of size $3^k$. This follows from the fact that the matrix $\mat'_{CS}$ given by one label is a triangular matrix (see \cref{fig:CS-matrix}), and that $\mat_{CS}$ is the $k$th Kronecker power of $\mat'_{CS}$ by the same argument above. We call $\CSP$ the family of $CS$-patterns (\cref{def:cs-pat}).

\begin{figure}[H]
    \[
        \hfill
        \mat'_{CS} = 
    \begin{bNiceMatrix}[first-row, first-col]
    & (\stateCS) & (\stateO) & (\stateS)\\
    (\stateS)  & 1 & 0 & 0\\
    (\stateO)  & 0 & 1 & 0\\
    (\stateCS) & 1 & 0 & 1\\
    \end{bNiceMatrix}
        \hfill
    \]
    \caption{ 
        \label{fig:CS-matrix}
        The matrix $\mat'_{CS}$ indexed by $CS$-patterns over a single label. We indicate by the given indices the patterns defined by assigning the given state to the single label.}
    \end{figure}

Our project of determining the complexity of \textsc{Steiner Tree[$\cw$]} started by trying to find larger triangular submatrices by computational means. While no larger triangular submatrices were found, this led to reduction rules that allowed the submatrix search to safely forget rows/columns corresponding to certain patterns because a largest triangular submatrix could avoid them. For two patterns $p,q \in \Pat$, we say that $p$ \emph{dominates} $q$ ($p\pdom q$), if for each pattern $r$ it holds that $q \sim r$ implies $p\sim r$. For a pattern $p$, if there exists a set of patterns $\{q_1,\dots q_{\ell}\}$, all dominated by $p$, and such that for each pattern $r$ with $p\sim r$, there exists $i\in[\ell]$ with $q_i\sim r$, then there exists an optimal triangular submatrix of $M$ that excludes the pattern $p$ from its rows.
By inspecting the structure of the largest resulting triangular submatrices after exhaustively deleting such patterns, we were able to identify a family of patterns $\Cp \subseteq \Pat$, called the \emph{complete patterns}, such that for each pattern $p\in \Pat$, there exists a set of complete patterns $R_p\subseteq \Cp$, where $p$ dominates all patterns in $R_p$, and for each pattern $r\in \Pat$ with $p\sim r$, there exists $q\in R_p$ with $q\sim r$ as well (\cref{lem:pattern-complete-rep}). We say that $R_p$ \emph{represents} $p$ (\cref{def:representation}), and call the set $R_p$ a \emph{complete representation} of $p$.

On a more technical level, a pattern $p$ is complete if and only if each label that appears in $p$ appears as a singleton as well (\cref{def:complete-patterns}). In order to see why a complete representation always exists, let $p\in \Pat\setminus \Cp$, and let $i$ be a label that appears in $p$ but not as a singleton. Let $r$ be some pattern with $p\sim r$. Let $p_0$ be the pattern resulting from $p$ by removing $i$ from all sets of $p$, and $p_1 = p \cup \big\{\{i\}\big\}$ be the pattern resulting from $p$ by adding the singleton $\{i\}$. We say that $p_0$ results from $p$ by \emph{forgetting} the label $i$, and $p_1$ results from $p$ by \emph{fixing} the label $i$ (\cref{def:forget-fix}). If the label $i$ appears in $r$, then it holds that $p_1\sim r$, while if $i$ does not appear in $r$, then it holds that $p_0\sim r$ (\cref{lem:forget-and-fix-rep}). On the other hand, it holds that both $p_0$ and $p_1$ are dominated by $p$, since by removing an element from a set of $p$ (\cref{lem:removing-element-dom}), or by adding a subset of an existing set to $p$ (\cref{lem:adding-subset-dom}), we only get patterns that are dominated by $p$. Hence, by independently forgetting or fixing each label that appears as non-singleton only in $p$, we get a complete representation of $p$.

Since each non-complete pattern admits a complete representation, this already implies that there exists an optimal triangular submatrix, whose rows avoid all non-complete patterns. Hence, there exists an optimal triangular submatrix, whose rows are indexed by complete patterns only. Let $\mat_C$ be the submatrix of $\mat$ whose rows and columns are indexed by the family of complete patterns. Note that $\CSP$ is a subset of $\Cp$. In \cref{lem:pat-parity-rep} we show that $\CSP$ builds a row basis of $\mat_C$ over $\bin$, by presenting an explicit basis representation for each pattern $p\in\Cp$. Since the rank of a matrix is an upper bound of the size of a maximum triangular submatrix of this matrix, this already implies that $3^k$ is the optimal size of a maximum triangular submatrix of $\mat$.

\subparagraph{Why previous techniques likely do not apply.}
This leaves a gap between the best achievable algorithm using known techniques, with time no better than $4^k\cdot n^{\Oh(1)}$, and the best achievable lower bound (modulo SETH), which only excludes algorithms with running time $(3-\varepsilon)^kn^{\Oh(1)}$. We are only aware of two previous works that managed to overcome such a gap between the size of a triangular submatrix, and the rank of the matrix, solving the \textsc{Feedback Vertex Set} problem~\cite{DBLP:conf/stacs/BojikianCHK23}, and the \textsc{Chromatic Number} problem~\cite{DBLP:journals/tcs/JansenN19}, both parameterized by cutwidth. The former result is based on manipulating the underlying graph, namely subdividing each edge twice, restricting the set of states that can be assigned to a subdivision vertex, and additionally bounding the number of non-subdivision vertices in each cut along a linear arrangement of the resulting graph. The latter manipulates the consistency matrix, indexed by the colorings of the vertices on the two sides of a cut using the polynomial $f_G(x_1,\dots,x_n)=\prod_{(u,v)\in E(G)}(x_u-x_v)$. The idea behind this change, as mentioned by the authors of this paper, is to lower the rank of the matrix, preserving its support. The authors then provide a decomposition of the resulting matrix and bound its inner dimension by $2^k$, where $k$ is the cutwidth of the underlying graph. 

First of all, both techniques are problem-specific, and it is not clear how to use similar ideas to approach the \Stp\ problem. In particular, while all existing optimal algorithms for connectivity problems, to the best of our knowledge, deal with consistency matrices defined over $\bin$, it is clearly impossible to manipulate a binary matrix without altering its support. This leaves open, whether changing the underlying field would allow such a manipulation (similar to \cite{DBLP:journals/tcs/JansenN19}). However, we are not aware of any development in this direction so far, even when considering the broader class of \emph{connectivity} problems, and different structural parameters. Second, and more critical, both techniques are heavily reliant on the small edge cuts along a linear arrangement, and it is not clear how to generalize these techniques to clique-width, especially, since as one can see in \cref{table:tight-bounds}, these techniques already do not generalize to pathwidth and treewidth.

All is not lost, however, because the fact that the family $\Cp$ represents the family of all patterns, and that it induces a ``low-rank'' submatrix of $\mat$, hints at a possible solution. For example, one can try to apply a technique similar to \cite{DBLP:journals/iandc/BodlaenderCKN15}, where, in a bottom-up manner over $\syntaxtree$, we represent the family of solutions using complete patterns only $\rep$, and follow that by computing a basis subset $\rep^*$ of the resulting family of size $3^k$ using Gaussian elimination (noting that the basis must be a subset of $\rep$, since a basis given as a subset of $\CSP$ (the basis for all complete patterns) might result in false positives). Still, it is unclear whether we can avoid the $\omega$ factor (the matrix multiplication exponent) in the exponent of the running time, from applying Gaussian elimination.

Another approach that makes use of low $\bin$-rank matrices is based on defining an explicit family of solutions (usually an explicit basis) that, while not necessarily preserving representation, is preserving the parity of the number of solutions. By a direct application of the Isolation Lemma, one can assume with high probability, that there exists a unique solution of the minimum weight, reducing the decision version of the problem to the counting version. This approach covers, among others, different applications of the \cnc\ technique \cite{DBLP:conf/stacs/BojikianCHK23,DBLP:journals/talg/CyganNPPRW22,DBLP:conf/esa/HegerfeldK23,DBLP:conf/wg/HegerfeldK23}.
While a direct application of such a technique would require at least time $4^k\cdot n^{\Oh(1)}$, we can again try to leverage the aforementioned representation properties to design a faster algorithm. The difficulty in applying such an approach lies in the fact that multiple patterns in a complete representation of a single solution can be consistent with a given pattern. For example, if a label $i$ appears as non-singleton only in a pattern $p$, and for a pattern $r\sim p$, if $i$ appears only in the zero-set of both $r$ and $p$, then both patterns resulting from forgetting or fixing $i$ in $p$ are consistent with $r$. Hence, applying a similar approach to the \cnc\ technique over the set of complete patterns, would only allow us to count (modulo two) the number of representations of solutions, which does not necessarily match the number of solutions themselves.

\subparagraph{Isolating a representative solution.} In order to overcome this problem, we first introduce a set of pattern operations (\cref{def:patops,def:patadd}), that for a node $x\in V(\syntaxtree)$, given a pattern corresponding to some partial solution at each child of $x$, correctly compute the pattern corresponding to the union of these solutions in $G_x$ (\cref{lem:recursive-sol}). We show that all these operations preserve representation (\cref{lem:ops-preserve-rep,lem:patadd-preserve-rep}), noting that complete patterns are closed under the operations corresponding to relabel and union nodes. Join nodes are the key obstruction because the additional connectivity, say from edges between label $i$ and label $j$ vertices, removes singletons in $i$ and $j$ if both labels are in the pattern. In this case, we show that the resulting non-complete patterns can be represented by at most four complete patterns, by independently fixing or forgetting either label (\cref{obs:complete-closed-ops}). Hence, we can make use of the Isolation Lemma, not only to isolate a single minimal-weight solution (similar to previous work), but also by defining a new weight function $\actionf:V(\syntaxtree)\times[4]\rightarrow [\D]$ for some fixed value $\D$, we can isolate a single representation of this solution with high probability (\cref{lem:iso-rep}). We achieve this by defining four new pattern operations called \emph{actions}: Given a pattern $p$, that results from applying the operation corresponding to a join node on a complete pattern, these actions output the four different complete patterns in a complete representation of $p$ (\cref{def:repactions}). By assigning different weights to different actions, we can assign a weight to a pattern at the root node, given by the sum of the weights of all the actions taken to build this pattern along $\syntaxtree$ (\cref{def:action-sequence}).

Finally, we show that all mentioned operations preserve counting (\cref{lem:ops-preserve-parity-rep,lem:patadd-preserve-parity-rep,lem:ac-preserve-parity-rep}) and hence, we can apply a dynamic programming scheme, in a bottom-up manner over $\syntaxtree$ to build a basis-representation that counts the number of representations of weight $\actionweight$ of all solutions of weight $\weight$ that are consistent with a given pattern. We show in \cref{cor:bas-if-drep,lem:no-sol-no-drep} and \cref{lem:iso-total-prob} that it suffices to count such representations to get a one-sided error Monte-Carlo algorithm, that solves the decision version of the \Stp\ problem with high probability. While for a relabel node, the tables can be extended in the natural way, for a join node, we show how to compute a complete representation of the resulting patterns, and a basis representation of it efficiently. Finally, for a union node, we make use of fast convolution over lattices~\cite{BjorklundHKKNP16}.

\section{Preliminaries}\label{sec:pre}
In this work we deal with undirected graphs only. Given a natural number $k$, we denote by $[k] = \{1,2,\dots k\}$ the set of natural numbers smaller than or equal to $k$, and $[k]_0 = [k]\cup\{0\}$. A \emph{labeled graph} is a graph $G=(V, E)$ together with a \emph{labeling function} $\lab\colon V\rightarrow \mathbb{N}$. We usually omit the function $\lab$ and assume that it is given implicitly when defining a labeled graph $G$. We say that $G$ is \emph{$k$-labeled}, if it holds that $\lab(v)\leq k$ for all $v\in V$. We define a \emph{clique-expression} $\mu$ as a well-formed expression that consists only of the following operations on labeled graphs:
\begin{itemize}
    \item \emph{Introduce} $i(v)$ for $i\in\mathbb{N}$. This operation constructs a graph containing a single vertex and assigns label $i$ to this vertex.
    \item The \emph{union} operation $G_1 \union G_2$. The constructed graph consists of the disjoint union of the labeled graphs $G_1$ and $G_2$.
    \item The \emph{relabel} operation $\relabel{i}{j}(G)$ for $i,j \in \mathbb{N}$. This operations changes the labels of all vertices in $G$ labeled $i$ to the label $j$.
    \item The \emph{join} operation $\add i j (G)$ for $i,j\in\mathbb{N}, i\neq j$. The constructed graph results from $G$ by adding all edges between the vertices labeled $i$ and the vertices labeled $j$, i.e.
    \[\add{i}{j}(G) = (V, E \cup \{\{u, v\}\colon\lab(u)=i \land \lab(v)=j\}).\]
\end{itemize}
We denote the graph resulting from a clique-expression $\mu$ by $G_{\mu}$, and the constructed labeling function by $\lab_{\mu}$. We associate with a clique-expression $\mu$ a syntax tree $\syntaxtree_{\mu}$ (we omit the symbol $\mu$ when clear from the context) in the natural way, and to each node $x\in V(\syntaxtree)$ the corresponding operation. For $x\in V(\syntaxtree)$, the subtree rooted at $x$ induces a subexpression $\mu_x$. We define $G_x = G_{\mu_x}$, $V_x = V(G_x)$, $E_x=E(G_x)$ and $\lab_x = \lab_{\mu_x}$.
Finally, we say that a clique-expression $\mu$ is a \emph{$k$-expression} if $G_x$ is a $k$-labeled graph for all $x\in V(\syntaxtree)$. We define the clique-width of a graph $G$ (denoted by $\cw(G)$) as the smallest value $k$ such that there exists a $k$-expression $\mu$ with $G_{\mu}$ is isomorphic to $G$.
We can assume without loss of generality, that any given $k$-expression defining a graph $G=(V,E)$ uses at most $O(|V|)$ union operations, and at most $O(|V|k^2)$ unary operations~\cite{DBLP:journals/tcs/BergougnouxK19, CourcelleO00}.

We define the \textsc{Steiner Tree} problem as follows: Given an undirected graph $G=(V,E)$ a terminal set $\term\subseteq V$ and some value $\budget \in \mathbb{N}$, asked is whether there exists a set of vertices $S\subseteq V$ of size at most $\budget$, such that $\term\subseteq S$ and $G[S]$ is connected.
For $k\in\mathbb{N}$, and $f$ some computable function, we denote by $\ostar(f(k))$ the running time $f(k)\operatorname{poly}(n)$, where $n$ is the size of the input.

Given two sets $S,T$, we denote by $S\triangle T$ the symmetric difference of $S$ and $T$, i.e.\
$S\triangle T = (S\setminus T)\cup (T\setminus S)$. It holds by a simple counting argument that $|S\triangle T|\bquiv |S|+|T|$.

Let $\mathbb{F}$ be a field, and let $S$ be some set. For a vector $T\in \mathbb{F}^{S}$, and some value $x\in S$, we denote by $T[x]$ the element of $T$ indexed by $x$. In general, we will only deal with binary vectors, i.e. $\mathbb{F}=\bin$. By $\overline{0}_{S}$ we denote the vector $T$ with $T[x] = 0$ for all $x\in S$. We omit the subscript $S$ when clear from the context. For two sets $S,T$, let $M \in \mathbb{F}^{S\times T}$ be some matrix, whose rows are indexed by $S$ and columns by $T$. We denote by $M[s,t]$ the element of $M$ indexed by $s$ and $t$.

For some integer $k$, given sets $S_1,T_1,\dots,S_k,T_k$, let $M_i \in \mathbb{F}^{S_i\times T_i}$ for each $i\in[k]$. We define the Kronecker product of $M_1, \dots M_k$ as the matrix $M = M_1\otimes \dots \otimes M_k \in \mathbb{F}^{(S_1\times \dots \times S_k) \times (T_1\times \dots \times T_k)}$, where for $s_i\in S_i$ and $t_i\in T_i$ for all $i\in[k]$ it holds that
\[
  M[(s_1,\dots,s_k),(t_1,\dots,t_k)] = \prod\limits_{i\in[k]}M_i[s_i, t_i].
\]
It is well-known that $\rank(M_1\otimes \dots \otimes M_k) = \rank(M_1) \cdot \rank(M_2) \dots \cdot \rank(M_k)$.

Given a finite ground set $U$ (we call its elements labels), the power-set of $U$ (denoted by $\pow{U}$) is the set of all subsets of $U$.
A partition is a family of pairwise disjoint subsets of $U$ that span all its elements, i.e.\ $P = \{S_1,\dots S_k\}$ is a partition of $U$, if the two following conditions hold:
\begin{itemize}
    \item $S_i \cap S_j = \emptyset$ for all $i,j \in [k], i\neq j$,
    \item and $\dot\bigcup_{i\in[k]}S_i = U$.
\end{itemize}

A \emph{lattice} is a partial order $(\mathcal{L}, \preceq)$ over a set $\mathcal{L}$, such that any two elements have a greatest lower bound (called \emph{meet}), and a least upper bound (called \emph{join}). We call two lattices isomorphic, if the underlying orders are isomorphic.

Let $f:U\rightarrow V$ be a mapping between two sets $U$ and $V$, and $x\notin U$ a new element. Let $i\in V$. We define the extension $f|_{x\mapsto i}:U\cup \{x\}\rightarrow V$ of $f$ by
\[
f|_{x\mapsto i}(y)=
\begin{cases}
i&\colon y = x,\\
f(y)&\colon \text{otherwise}.
\end{cases}    
\]
Let $U_1, U_2$ be two disjoint sets, and $V$ be another set. For two mappings $f_1:U_1\rightarrow V$ and $f_2:U_2\rightarrow V$, we define the disjoint union $f_1\dot\cup f_2:U_1\cup U_2\rightarrow V$ as
\[f_1\dot\cup f_2(x)=\begin{cases}
    f_1(u) &\colon u\in U_1,\\
    f_2(u) &\colon \text{otherwise}.
\end{cases}\]

Let $U, V$ be two ground sets and $f:U^k\rightarrow V$ a mapping. Given  sets $S_1,\dots S_k \subseteq U$, we define
\[
    f(S_1, \dots S_k) = \{f(s_1,\dots s_k)\colon (s_1,\ \dots s_k)\in S_1\times S_2 \times \dots S_k\}.
\]
For a set $S\subseteq V$ we also define
\[
  f^{-1}(S)=\{(s_1,\ \dots s_k)\in S_1\times S_2 \times \dots S_k\colon f(s_1,\dots s_k) \in S\}.  
\]
An exception to this notation, is when we explicitly mention that $f:U\rightarrow V$ is a weight function, and $V\subseteq\mathbb{Z}$. In this case, for some set $S\subseteq U$ we define 
\[f(S) = \sum\limits_{u\in S}f(u),\]
where we compute the sum over $\mathbb{Z}$.

Given a logical expression $\rho$, we denote by $[\rho]$ the Iverson bracket of $\rho$, i.e.
\[
    [\rho] = \begin{cases}
        1 &\colon \operatorname{Val}(\rho) = \operatorname{True},\\
        0 &\colon\text{otherwise}.
    \end{cases}
    \]

\section{Patterns}\label{sec:pats}

\subsection{Definition and terminology}

Along this work, let $U$ be a finite totally ordered ground set. We assume that $U$ is totally ordered, and fix an arbitrary such order over $U$ if none is provided.
In this section we define the main structures we build this paper on, we call them patterns. A pattern, as we shall see later, represents the state of a partial solution, and carries enough information to extend it to a solution over the whole graph.

\begin{definition}\label{def:pattern}
    Let $0$ be an element not in $U$. A \emph{pattern} $p$ is a subset of the power-set of $U\cup\{0\}$, such that there exists exactly one set of $p$ containing the element $0$. With $\Pat(U)$ we denote the family of all patterns over $U$. We omit $U$ when clear from the context. 
    By $Z_p \in p$ we denote the only set of $p$ containing the element $0$, and we call it the \emph{zero-set} of $p$.
    
    Given a pattern $p\in \Pat$, we define $\lbs(p) \subseteq U$ as the set of all labels in $U$ occurring in $p$, and with $\sing(p) \subseteq U$ the set of all labels in $U$ appearing as singletons in $p$, i.e.\
    \begin{itemize}
        \item[] $\lbs(p) = \bigcup\limits_{S\in p} S\setminus \{0\}$,
        \item[]  $\sing(p) = \big\{u\colon \{u\}\in p, u\neq 0\big\}$.
    \end{itemize}
    Note that we define both sets over $U$, and hence exclude the element $0$ from both of them.
\end{definition}

\begin{definition}
    We define a more concise way to write patterns, that we use quite often along this paper. For a pattern $p=\{\{u^1_1, \dots, u^1_{r_1}\}, \dots, \{u^{\ell}_1,\dots, u^{\ell}_{r_{\ell}}\}\}$, we write
    \[p= [u^1_1 u^1_2\dots u^1_{r_1}, \dots, u^{\ell}_1\dots u^{\ell}_{r_{\ell}}],\]
    where we use square brackets to enclose the pattern, we do not use any separator between the elements of the same set, and separate different sets with comas. We also sometimes omit the symbol $0$ when it appears as a singleton.
    
    When we use this notation, we assume that each element is represented by a single symbol only, and hence, there is a unique way of interpreting such a pattern. For example, both $[12]$ and $[0,12]$ denote the pattern $\{\{0\}, \{1, 2\}\}$, while $[01,23]$ denotes the pattern $\{\{0,1\}, \{2,3\}\}$. On the other hand, the pattern $\{\{0, 10\}\}$ cannot be written in a concise way since the element $10$ consists of more than one symbol.
\end{definition}

Next, we introduce pattern operations. These operations will allow us to extend families of patterns over the syntax tree of a clique-expression to build larger solutions recursively. We define these operations in a similar way to the operations on partitions defined in~\cite{DBLP:journals/iandc/BodlaenderCKN15}.

\begin{definition}\label{def:patops}
    Let $U$ be a finite ground sets, and $p,q \in \Pat(U)$. Let $r \in \Pat(U)$ be the pattern resulting from each of the following operations. We define
\begin{center}
    \begin{tabular}{ l p{.82\linewidth} }
        Join:&$r = p\join q$. Let $\sim_I$ be the relation defined over $p\cup q$ where for $S \in p$ it holds that $S\sim_I T$ if $T \in q$ and $S\cap T \neq \emptyset$. We define the equivalence relation $\reachable[p,q]$ (omitting $p$ and $q$ when clear from the context) as the reflexive, transitive and symmetric closure of $\intersects$. Let $\mathcal{R}$ be the set of equivalence classes of $\reachable$, then we define $r = \{\bigcup_{S \in P} S\colon P\in \mathcal{R}\}$, as the unions of the sets in each equivalence class of $\reachable$.\\
        Relabel:&$r = p_{i\rightarrow j}$, for $i,j\in U$, where $r$ results from $p$ by replacing $i$ with $j$ in each set of $p$ that contains $i$.\\
        Union:&$r=p\oplus q$, where $r = (p\setminus \{Z_p\}) \cup (q \setminus \{Z_q\}) \cup \{Z_p\cup Z_q\}$.\\
    \end{tabular}
\end{center} 
\end{definition}

\begin{definition}\label{def:patadd}
    For $i,j\in U, i\neq j$, we define the operation $\patadd_{i,j} p$ for $p \in \Pat$ as 
    \[
    \patadd_{i,j}p =
    \begin{cases}
        p &\colon \{i,j\}\not\subseteq \lbs(p),\\
        p\join [ij] &\colon \text{otherwise}.
    \end{cases}
    \]
    This operation combines all sets containing the labels $i$ or $j$ into one set, if both labels appear in $p$.
\end{definition}

In the following, we provide a new characterization of the operation join, that makes it easier to prove technical results on patterns.

\begin{definition}
Given two patterns $p, q \in \Pat$, we define a \emph{$(p,q)$-alternating walk} as a sequence of sets $S_1, \dots S_n$, such that $S_i \cap S_{i+1} \neq \emptyset$ for all $i\in[n-1]$ and either $S_i \in p$ and $S_{i-1}\in q$ for all even values of $i$ or $S_i \in q$ and $S_{i-1}\in p$ for all even values of $i$. We define the length of a walk as the number of sets on this walk. We call two sets $S,T \in p\cup q$ to be $(p,q)$-connected, if there exists a $(p,q)$-alternating walk from $S$ to $T$.
\end{definition}

\begin{observation}
    It is not hard to see that $(p,q)$-connectedness defines an equivalence relation on the sets in $p\cup q$. Actually, this equivalence relation is exactly the relation $\reachable$ we get from the join operation. Hence, we can characterize the join operation by the unions of sets $S,T \in p\cup q$ such that there exists a $(p,q)$-alternating walk from $S$ to $T$.
\end{observation}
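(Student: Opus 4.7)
The plan is to prove the second, more substantive claim first: that $(p,q)$-connectedness coincides with the relation $\reachable$ from Definition \ref{def:patops}. The first claim---that $(p,q)$-connectedness is an equivalence relation on $p \cup q$---then comes for free, since $\reachable$ is an equivalence relation by construction (it is the reflexive--transitive--symmetric closure of $\sim_I$).

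For the forward direction, I take an alternating walk $S_1, \dots, S_n$ witnessing that $S_1$ and $S_n$ are $(p,q)$-connected. By the alternation condition, every consecutive pair $(S_i, S_{i+1})$ has one set belonging to $p$ and the other to $q$ (with sets lying in both $p$ and $q$ free to play either role at any position), and $S_i \cap S_{i+1} \neq \emptyset$ by the walk condition. Hence each consecutive pair is related by $\sim_I$ up to orientation, which the symmetric closure absorbs, and so the entire walk witnesses $S_1 \reachable S_n$.

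For the reverse direction, I take any chain $R_0, R_1, \dots, R_n$ arising from the closure definition of $\reachable$, with each consecutive pair related by $\sim_I$ (possibly with roles swapped). Since $R_i \sim_I R_{i+1}$ forces the two sets to lie on opposite sides of $p$ and $q$ and to intersect, the chain is by definition a $(p,q)$-alternating walk from $R_0$ to $R_n$, with the starting side determined by where $R_0$ sits. Both directions combined give that $(p,q)$-connectedness and $\reachable$ are the same relation.

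With the two relations identified, the equivalence-relation assertion follows, and since $p \join q$ is defined in Definition \ref{def:patops} as the family of unions over equivalence classes of $\reachable$, the stated characterization of the join as unions of $(p,q)$-connected components of $p \cup q$ drops out. The only mild subtlety to watch for---one that the detour through $\reachable$ neatly avoids---is the handling of sets lying in $p \cap q$: in a direct proof of transitivity by concatenating two walks $W_1, W_2$ meeting at a common set $T$, the side on which $T$ is used in $W_1$ may clash with the side used in $W_2$; one then has to allow $T$ to appear twice consecutively at the junction to repair the alternation. This is the only genuine obstacle to an entirely elementary argument, and is exactly what going through $\reachable$ side-steps.
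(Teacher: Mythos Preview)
The paper provides no proof for this observation, so there is nothing to compare against; I will simply assess correctness. Your forward direction is fine. The reverse direction, however, has a genuine gap: you assert that a chain $R_0,\dots,R_n$ witnessing $R_0 \reachable R_n$---with each consecutive pair related by $\sim_I$ or its inverse---``is by definition a $(p,q)$-alternating walk.'' This need not hold. Take $R_0 \sim_I R_1$ and $R_1 \sim_I R_2$ with $R_0 \in p \setminus q$, $R_1 \in p \cap q$, $R_2 \in q \setminus p$; then neither global alternation pattern fits the length-three chain $R_0,R_1,R_2$ (the first forces $R_0\in q$, the second forces $R_2\in p$).

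The repair is exactly the duplication you describe in your last paragraph (replace the offending $R_1$ by $R_1,R_1$, which is legal since $R_1\cap R_1=R_1\neq\emptyset$), but you have the attribution backwards: the duplication is needed precisely \emph{inside} the reverse direction you claim is clean, not only in the direct transitivity argument you believe you are bypassing. Going through $\reachable$ does hand you the equivalence-relation structure for free, but it does not spare you the work of turning a $\reachable$-chain into an honest alternating walk---that is exactly where the parity mismatch resurfaces and where the duplication trick must be invoked.
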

\begin{lemma}\label{lem:join-is-group}
    The join operation over $\Pat$ is commutative and associative.
\end{lemma}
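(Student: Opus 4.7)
For commutativity, observe that although $\sim_I$ is written asymmetrically, once we take its reflexive, transitive, symmetric closure the resulting relation $\reachable$ depends only on the unordered pair $\{p, q\}$: two sets $S, T \in p \cup q$ are $\reachable$-related iff there is a sequence of consecutively intersecting sets between them that alternates between $p$ and $q$. This data is identical whether we compute $\reachable$ for $(p, q)$ or $(q, p)$, so the equivalence classes and their unions agree, giving $p \join q = q \join p$.

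For associativity, the plan is to introduce a manifestly symmetric three-pattern connectedness relation and show that both associations collapse to it. Given patterns $p, q, r$, define $\sim'$ on $p \cup q \cup r$ by $S \sim' T$ iff $S \cap T \neq \emptyset$ and $S, T$ come from two different patterns among $\{p, q, r\}$; let $\approx$ denote its reflexive, transitive, symmetric closure, and let $J$ be the family of all unions of single $\approx$-equivalence classes of $p \cup q \cup r$. The target claim is $(p \join q) \join r = J$; since $J$ depends on $\{p, q, r\}$ symmetrically and commutativity is already in hand, this also yields $J = p \join (q \join r)$, proving associativity.

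To prove $(p \join q) \join r = J$ I would show both inclusions by translating between outer $(p \join q, r)$-alternating walks and inner $\sim'$-walks on $p \cup q \cup r$. For ``$\supseteq$'', a $\sim'$-walk $S_1, \ldots, S_n$ splits at its $r$-members into maximal stretches of sets from $p \cup q$; each stretch is a $(p, q)$-alternating walk, so all its members lie inside a common set of $p \join q$, while the interleaving $r$-sets intersect those combined sets, together producing a $(p \join q, r)$-alternating walk from $S_1$ to $S_n$. For ``$\subseteq$'', an equivalence class $\mathcal{C} \subseteq (p \join q) \cup r$ unfolds by replacing each $V \in \mathcal{C} \cap (p \join q)$ with the underlying $(p, q)$-class $\mathcal{D}_V \subseteq p \cup q$ satisfying $V = \bigcup \mathcal{D}_V$; the key pointwise observation is that whenever $R \in r \cap \mathcal{C}$ meets $V$, it must also meet some concrete $X \in \mathcal{D}_V$, which lets us splice $(p, q)$-alternating walks inside $\mathcal{D}_V$ into the outer walk to obtain a $\sim'$-walk witnessing the same union. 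The main technical obstacle is precisely this splicing, i.e.\ bookkeeping how an outer step through $V$ refines to an inner walk inside $\mathcal{D}_V$ connecting the $X \in \mathcal{D}_V$ meeting the incoming $r$-set to the $X' \in \mathcal{D}_V$ meeting the outgoing $r$-set; beyond that point the argument is mechanical.
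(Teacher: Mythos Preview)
Your proposal is correct and takes essentially the same approach as the paper: both introduce a symmetric three-pattern connectedness notion (your $\sim'$-walks are exactly the paper's $(p,q,r)$-alternating walks) and show that either association of the join equals the resulting family of class-unions by splicing outer walks through $p\join q$ (resp.\ $q\join r$) into inner $(p,q)$- (resp.\ $(q,r)$-) alternating walks and back. Your presentation is slightly tidier in explicitly naming the symmetric object $J$ and invoking symmetry once, whereas the paper carries out the walk translation for $p\join(q\join r)$ and then notes the same characterization holds for $(p\join q)\join r$; the technical content is identical.
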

\begin{proof}
    Commutativity follows from the fact that $\reachable$ is an equivalence relation. For associativity, we show that $p\join(q\join r) = (p\join q)\join r$.

    For $S \in p, T \in q \cup r$, let $T'$ be the union of the sets in the equivalence class of $\reachable[q,r]$ containing $T$. We claim that there exists a $(p,q\join r)$-alternating walk from $S$ to $T'$ if and only if there exists a walk from $S$ to $T$ such that for any two consecutive sets $X, Y$ on the walk it holds that there exists $x,y \in \{p,q,r\}, x\neq y, X\in x, Y \in y$. We call such a walk a $(p,q,r)$-alternating walk.

    In order to see this, let $W$ be a $(p,q\join r)$-alternating walk from $S$ to $T'$. For each segment $X,Y,Z$ in $W$ with $Y \in (q\join r)$, we can replace $W$ with a $(q,r)$-alternating walk between two sets $Y_0, Y_k$ such that $X\cap Y_0 \neq \emptyset$ and $Z \cap Y_k \neq \emptyset$. Finally, if one an endpoint of the walk belongs to $r\join q$, let $T'$ be this endpoint, and let $X\in p$ be the set preceding $T'$ if $T'$ is the last set on $W$, or the set following $T'$ if $T'$ is the first set on $W$. Then we can replace $T'$ with a segment $T_0, \dots T_k = T$ that is a $(q,r)$-alternating walk between a set $T_0$ where $T_0\cap X \neq \emptyset$ and any set $T$ in the equivalence class of $\reachable[q,r]$ that unifies to $T'$.

    For the other direction, let $W$ be a $(p,q,r)$-alternating walk from $S$ to $T$. Then divide the walk into segments (possibly of length one) of $(q,r)$-alternating walks separated by single sets of $p$. All segments of each such walk belong to the same equivalence class of $\reachable[q,r]$. By Replacing each such segment with the set resulting from the union of the sets in the equivalence class corresponding to this segment we get a $(p,q\join r)$-alternating walk from $S$ to $T'$.

    Hence, the union of the sets of an equivalence class of $\reachable[p, q\join r]$ is the union of all sets $S, T \in p \cup q \cup r$ such that there exists a $(p,q,r)$-alternating walk from $S$ to $T$, which is the union of the sets of an equivalence class of $\reachable[p\join q, r]$.
\end{proof}
\begin{definition}\label{def:consistency}
Let $p,q\in \Pat$. We say that $p$ and $q$ are \emph{consistent} (denoted by $p\sim q$), if for $r = p\join q$ it holds that $r = \{Z_r\}$ contains the zero-set only, i.e.\ the join operation merges all sets into one set only.
We define the consistency matrix $\mat(U) \subseteq \{0 ,1\}^{\Pat\times \Pat}$, (omitting $U$ when clear from the context), whose rows and columns are indexed by all patterns over $U$, such that $\mat[p, q] = 1$ if and only if $p\sim q$. We have already mentioned this matrix in the introduction, whose rank and its maximum triangular submatrix inspire the results of this paper.
\end{definition}
\begin{definition}\label{def:domination}
    A pattern $p$ \emph{dominates} a pattern $p'$ (denoted by $p \pdom p'$), if for each pattern $q \in \Pat$ it holds $p' \sim q$ implies $p \sim q$. A pattern $p$ strictly dominates $q$ (denoted by $p\sdom q$), if $p \pdom q$ but $q \not \pdom p$. We call $p$ and $p'$ equivalent (denoted by $p\pequiv q$), if both $p\pdom q$ and $q\pdom p$ hold. Clearly, domination is a transitive relation. It is also not hard to see that strict domination is a strict partial ordering, while $\pequiv$ is an equivalence relation. Moreover, domination builds a partial ordering over the equivalence classes of $\pequiv$.
\end{definition}
\subsection{Characteristics of patterns}
In this section, we prove some results on the structure of patterns and the relations between them, in a way that will allow us to manipulate patterns and prove domination and consistency results that will become useful later in this work.
\begin{lemma}\label{lem:patadd-dominates-orig}
    For $i,j\in U, i\neq j$ and $p\in \Pat$, it holds that $\patadd_{i,j}p \geq p$.
\end{lemma}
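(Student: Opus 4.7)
The plan is to split into the two cases given by the definition of $\patadd_{i,j}$. If $\{i,j\}\not\subseteq \lbs(p)$, then $\patadd_{i,j}p = p$ and domination is immediate by reflexivity of $\pdom$. So the substantive case is when $\{i,j\}\subseteq\lbs(p)$ and $\patadd_{i,j}p = p\join [ij]$, where I need to show that for every $q\in\Pat$ with $p\sim q$, we also have $(p\join [ij])\sim q$.

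The key step will be to leverage \cref{lem:join-is-group} (commutativity and associativity of $\join$) to rewrite
\[
    (p\join [ij])\join q \;=\; (p\join q)\join [ij].
\]
By the assumption $p\sim q$, the pattern $p\join q$ equals $\{Z\}$ for a single zero-set $Z$ that contains $0$ together with every element of $\lbs(p)\cup \lbs(q)$. In particular, because we are in the case $\{i,j\}\subseteq\lbs(p)$, both $i$ and $j$ lie in $Z$.

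It then remains to compute $\{Z\}\join [ij] = \{Z\}\join \{\{0\},\{i,j\}\}$ directly from \cref{def:patops}: the set $Z$ shares the element $0$ with $\{0\}$ and the elements $i,j$ with $\{i,j\}$, so all three sets lie in a single equivalence class of $\reachable$, and their union is $Z\cup\{0\}\cup\{i,j\} = Z$. Hence $(p\join[ij])\join q = \{Z\}$ is a single set, which means $(p\join [ij])\sim q$ by \cref{def:consistency}, establishing $\patadd_{i,j}p \pdom p$.

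I do not expect any real obstacle here; the only thing to be careful about is to note explicitly that the hypothesis $\{i,j\}\subseteq\lbs(p)$ is exactly what guarantees $i,j\in Z$, which in turn is what makes the final join collapse back to a single set. Everything else is a direct application of associativity/commutativity from \cref{lem:join-is-group} and the definitions.
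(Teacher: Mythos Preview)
Your proof is correct and takes a genuinely different route from the paper. The paper argues directly at the level of alternating walks: assuming $p\sim q$ and $\{i,j\}\subseteq\lbs(p)$, it takes any $X\in p\join[ij]$ and $Y\in q$, picks a preimage $X'\in p$ of $X$, uses the existing $(p,q)$-alternating walk from $X'$ to $Y$, and converts it into a $(p\join[ij],q)$-alternating walk by replacing every set of $p$ containing $i$ or $j$ with the merged set $S$. Your argument instead works algebraically, using \cref{lem:join-is-group} to rewrite $(p\join[ij])\join q=(p\join q)\join[ij]$ and then observing that joining the single-set pattern $\{Z\}$ with $[ij]$ collapses back to $\{Z\}$ because $i,j\in Z$. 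Your approach is shorter and avoids manipulating walks directly; the paper's approach is more hands-on but does not rely on the associativity lemma. Both are fine, and yours is arguably the cleaner one once \cref{lem:join-is-group} is available.
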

\begin{proof}
    If $\{i, j\}\not \subseteq \lbs(p)$, then it holds that $\patadd_{i,j}p = p \geq p$. Otherwise, it holds that $\patadd_{i,j}p = p\join[i,j]$. Let $q\in \Pat$ with $p\sim q$. Let $S\in p\join[ij]$ be the set that all sets of $p$ containing $i$ or $j$ combine into. Let $X\in p\join[ij]$ and $Y\in q$. We show that there exists a $(p\join[ij],q)$-alternating walk between $X$ and $Y$. For this sake, let $X' = X$ if $X\neq S$ or $X'$ be some arbitrary set of $p$ containing $i$ or $j$ otherwise. Then there exists a $p,q$ alternating walk between $X'$ and $Y'$ we can turn this walk into a $(p\join[ij],q)$-alternating walk between $X$ and $Y$ be replacing each set of $p$ containing $i$ or $j$ by $S$ over this path.
\end{proof}

\begin{lemma}\label{lem:patadd-is-union-single-otherside}
Given $i,j \in U, i\neq j$, and $p,q\in \Pat$, with $\{i,j\}\subseteq \lbs(p)$. Then $p\join[ij]\sim q$ if and only if $p\sim q\punion[ij,i,j]$
\end{lemma}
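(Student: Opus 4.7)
The approach is to interpret consistency via connectivity of intersection graphs: for any two patterns $r, s$, the join $r \join s$ consists of a single set (equivalently, $r \sim s$) if and only if the bipartite graph $H_{r,s}$ with parts $r$ and $s$ (taken as disjoint copies) and edges between $S \in r, T \in s$ whenever $S \cap T \neq \emptyset$ is connected. This is essentially a reformulation of the $(r,s)$-alternating walk characterization of $\join$. The plan is to apply this perspective to both sides of the biconditional and to exhibit a connectivity-preserving correspondence between the two associated graphs.

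Write $A = \{S \in p : S \cap \{i,j\} \neq \emptyset\}$ and $B = p \setminus A$. The pattern $p \join [ij]$ consists of the sets in $B$ together with one merged set $T = \bigcup A \cup \{i, j\}$, and the hypothesis $\{i,j\} \subseteq \lbs(p)$ even gives $T = \bigcup A$. On the other side, a direct check yields $q \oplus [ij,i,j] = q \cup \{\{i,j\}, \{i\}, \{j\}\}$, the zero-set $\{0\}$ of $[ij,i,j]$ being absorbed into $Z_q$ without further effect. Set $H := H_{p \join [ij], q}$ and $H' := H_{p, q \oplus [ij,i,j]}$.

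The crux is a contraction argument on $H'$. Because every set in $A$ contains an element of $\{i,j\}$ and the hypothesis supplies $S_i, S_j \in A$ containing $i$ and $j$ respectively, the subgraph of $H'$ induced on $A \cup \{\{i,j\},\{i\},\{j\}\}$ is connected, with $\{i,j\}$ serving as a hub adjacent to all of $A$ and $\{i\},\{j\}$ attached via $S_i, S_j$. Contracting this subgraph to a single vertex $t'$ yields a graph $\widetilde{H}'$ on vertex set $B \cup \{t'\} \cup q$, and I claim that the map $b \mapsto b$, $q' \mapsto q'$, $t' \mapsto T$ is an isomorphism onto $H$: edges between $B$ and $q$ are preserved by definition; an edge between $t'$ and $q' \in q$ in $\widetilde{H}'$ corresponds to $q'$ meeting some $A$-set (the $q$-side vertices $\{i,j\}, \{i\}, \{j\}$ contribute no new adjacencies to $q'$ since they lie on the same side), i.e.\ to $q' \cap \bigcup A = q' \cap T \neq \emptyset$, matching the adjacency of $T$ and $q'$ in $H$; and no $b \in B$ is adjacent to $t'$, since $b$ contains neither $i$ nor $j$ and so meets none of $\{i,j\}, \{i\}, \{j\}$, while any $A$-set lies on the $p$-side of $H'$ and therefore cannot be adjacent to $b$ either. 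Since contracting a connected subgraph preserves connectivity of the remaining graph, $H'$ is connected if and only if $\widetilde{H}'$ is, which by the isomorphism is if and only if $H$ is connected; this closes the biconditional.

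The step I expect to be the main subtlety is the edge analysis at the contracted vertex, which is precisely where the hypothesis $\{i,j\} \subseteq \lbs(p)$ enters twice: it is needed both to make the contracted subgraph connected (otherwise $\{i\}$ or $\{j\}$ would be isolated in $H'$) and to guarantee $T = \bigcup A$, so that the neighborhoods of $t'$ in $\widetilde{H}'$ and of $T$ in $H$ truly line up. Without this hypothesis, the contraction collapses and the biconditional itself can fail.
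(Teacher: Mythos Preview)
Your argument is correct and takes a cleaner, more structural route than the paper's. The paper proves each direction by explicitly transforming $(p\join[ij],q)$-alternating walks into $(p,q\punion[ij,i,j])$-alternating walks and back, splicing in or collapsing detours through the set $\{i,j\}$ by hand. You instead encode consistency as connectivity of the bipartite intersection graph and observe that the two graphs in question differ by contracting a single connected blob, which makes the equivalence immediate and treats both directions at once. This makes transparent exactly where the hypothesis $\{i,j\}\subseteq\lbs(p)$ enters: once to make the contracted subgraph connected, once to get $T=\bigcup A$ so that the neighborhood of $t'$ matches that of $T$.

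One small imprecision: your isomorphism $\widetilde{H}'\cong H$ is only literally a bijection when none of $\{i\},\{j\},\{i,j\}$ already belong to $q$. If, say, $\{i\}\in q$, then since $q\punion[ij,i,j]=q\cup\{\{i,j\},\{i\},\{j\}\}$ is a set union, there is only one copy of $\{i\}$ on the right-hand side of $H'$, and it gets swallowed by the contraction; yet $\{i\}$ still appears as a $q$-side vertex in $H$. This is harmless, because in $H$ any such vertex is a pendant attached only to $T$ (no $b\in B$ contains $i$ or $j$), so its presence or absence does not affect connectivity; but it deserves a sentence.
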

\begin{proof}
Assume that $p\join[ij]\sim q$.
First we show that $p\sim q\punion[0,ij]$. Since $\{i,j\}\cap \lbs(p)\neq \emptyset$, there exists a $(p,q\punion[0,ij])$-alternating walk between $\{i,j\}$ and some set of $p$. Now we show that there exist a $(p,q\punion[ij])$-alternating walk between each set of $p$ and each set of $q$. Let $S$ be the set containing $i$ and $j$ in $p\join[i,j]$. Let $X\in p$ and $Y\in q$. Let $X' = S$ if $\{i,j\}\cap S\neq \emptyset$, and $X' = X$ otherwise. Then there exists a $(p\join[ij],q)$-alternating walk from $X'$ to $Y$. We can turn this walk into a $p,q\punion[ij]$ walk from $X$ to $Y$ by replacing each segment $A,S,B$ on the walk, where $A,B\in q$, by the segment $A,A',\{i,j\},B',B$, where $A',B'\in p$ such that $A'\cap A \neq \emptyset$, and both $A'$ and $B'$ intersect $\{i,j\}$. The sets $A',B'$ exist, since $S$ is the union of all sets containing $i$ or $j$ in $p$, and it intersects both $A$ and $B$. Hence, it holds that $p\sim q\punion[0,ij]$.
Now since $i,j\in \lbs(p)$, it holds that there exists a $(p,q\punion[ij,i,j])$-alternating walk from each of $\{i\}$ and $\{j\}$ to some set of $p$.

Now assume that $p\sim q\punion\{i,j,ij\}$. Let $S$ be the set containing $i$ and $j$ in $p\join[ij]$. Let $X\in p\join [ij]$ and $Y\in q$. Let $X' = X$ if $X\neq S$, or $X'$ some arbitrary set of $p$ that contains $i$ or $j$ otherwise. Then there exists a $(p,q\punion\{i,j,ij\})$-alternating walk from $X'$ to $Y$. We can turn this walk into a $(p\join[ij],q)$-alternating walk by replacing each maximal segment $S_1, X_1, S_2, X_2\dots S_r$ with $S$, where $r\in\mathbb{N}$, $S_i \in p$ such that $\{i,j\}\cap S_i \neq \emptyset$, and $X_i \in \{\{i\},\{j\}, \{ij\}\}$.
\end{proof}

\begin{lemma}\label{lem:join-is-join-other-side}
    Given three patterns $p,q,r \in \Pat$. It holds that $p\join q \sim r$ if and only if $p\sim q \join r$.
\end{lemma}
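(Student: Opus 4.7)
The plan is to derive this lemma directly from the associativity of the join operation, which has already been established in Lemma~\ref{lem:join-is-group}. Unfolding the definition of consistency (Definition~\ref{def:consistency}), the statement $p\join q \sim r$ means that $(p\join q)\join r$ is a pattern consisting of a single set (necessarily the zero-set). Similarly, $p\sim q\join r$ means that $p\join(q\join r)$ is a pattern consisting of a single set.

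By associativity of join, $(p\join q)\join r = p\join(q\join r)$ as patterns (subsets of the powerset of $U\cup\{0\}$). In particular, one of them equals $\{Z\}$ for some set $Z$ containing $0$ if and only if the other does. This immediately gives the desired equivalence.

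The only subtle point to double-check is that our notion of consistency only inspects whether the resulting pattern has cardinality one; no side information about the provenance of sets is required. Since the two triple joins are literally the same pattern, the consistency tests must agree. Thus no obstacle arises and the proof is a one-line application of Lemma~\ref{lem:join-is-group}. This lemma will later be useful as a ``move-across-the-tilde'' rule when manipulating patterns in join nodes of the syntax tree.
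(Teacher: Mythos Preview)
Your proof is correct and is essentially identical to the paper's own argument: both unfold the definition of consistency and apply the associativity of $\join$ from Lemma~\ref{lem:join-is-group} to conclude that $(p\join q)\join r$ and $p\join(q\join r)$ are the same pattern, hence one is a singleton iff the other is.
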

\begin{proof}
    It holds that $p\join q \sim r$ if and only if the pattern $(p\join q)\join r = p \join (q\join r)$ is a singleton, where the equality holds by the associativity of the join operation \cref{lem:join-is-group}, which is the case if and only if $p\sim q \join r$.
\end{proof}

\begin{lemma}\label{lem:missing-singleton-not-consistent}
    Let $p, q \in \Pat$. If $\sing(p)\setminus \lbs(q)\neq \emptyset$, then $p\not \sim q$.
\end{lemma}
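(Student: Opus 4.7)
The plan is to exhibit an element of $p\join q$ that is distinct from the zero-set, showing that $p\join q$ cannot be a singleton. Pick any $i\in\sing(p)\setminus\lbs(q)$. By definition of $\sing(p)$, the singleton $\{i\}\in p$; by the assumption $i\notin\lbs(q)$, no set of $q$ contains $i$. Consequently $\{i\}\cap T=\emptyset$ for every $T\in q$, so the initial relation $\intersects$ relates $\{i\}$ to no set of $q$ at all.

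Next I would unfold the definition of $\reachable[p,q]$ as the reflexive/transitive/symmetric closure of $\intersects$, equivalently the relation of being connected by a $(p,q)$-alternating walk. Any alternating walk starting from $\{i\}\in p$ must have its second set in $q$ and intersect $\{i\}$, but we just observed that no such set of $q$ exists. Hence the equivalence class of $\{i\}$ under $\reachable[p,q]$ is exactly $\bigl\{\{i\}\bigr\}$, and its union contributes the set $\{i\}$ to $p\join q$.

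Finally I would observe that $0\notin\{i\}$ because $i\in U$ and $0\notin U$, so $\{i\}$ is distinct from the zero-set $Z_{p\join q}$. Thus $p\join q$ contains at least the two distinct sets $\{i\}$ and $Z_{p\join q}$, so it is not of the form $\{Z_{p\join q}\}$, and by \cref{def:consistency} we conclude $p\not\sim q$. No real obstacle here; the only care needed is to cleanly use the alternating-walk characterization of $\reachable$ to argue that the equivalence class of $\{i\}$ stays trivial.
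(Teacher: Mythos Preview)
Your proposal is correct and follows essentially the same argument as the paper: pick $u\in\sing(p)\setminus\lbs(q)$, observe that $\{u\}$ intersects no set of $q$ and hence forms its own equivalence class of $\reachable$, so $p\join q$ contains the non-zero set $\{u\}$. The paper's proof is simply a terser version of what you wrote.
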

\begin{proof}
    Let $u \in \sing(p)\setminus \lbs(q)$.
    It holds that $\{\{u\}\}$ is an equivalence class of the relation $\reachable$, since $\{u\}$ does not intersect any set of $q$. Hence, $p\join q$ contains a set that is not the zero-set.
\end{proof}

\begin{lemma}\label{lem:removing-element-dom}
    Let $p \in \Pat$, $S\in p$ and $u\in S$ some label. Let $S' = S\setminus\{u\}$. Then for $p' = (p\setminus \{S\})\cup\{S'\}$, it holds that $p \pdom p'$.
\end{lemma}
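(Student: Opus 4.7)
The plan is to use the alternating walk characterization of consistency: combining \cref{def:consistency} with the observation following the definition of alternating walks (both in \cref{sec:pats}), $p \sim q$ holds if and only if every two sets $X, Y \in p \cup q$ are joined by some $(p, q)$-alternating walk. I fix an arbitrary $q \in \Pat$ with $p' \sim q$ and exhibit, between every pair $X, Y \in p \cup q$, such a walk in $(p, q)$, thereby concluding $p \sim q$ and hence $p \pdom p'$.

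The key observation is monotonicity of intersection: since $S' \subseteq S$, any $T$ with $T \cap S' \neq \emptyset$ also satisfies $T \cap S \neq \emptyset$. Thus every adjacency in $(p', q)$ witnessed via $S'$ remains valid once $S'$ is replaced by $S$, with the side-role relabelled from $p'$ to $p$.

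Given $X, Y \in p \cup q$, I first translate the endpoints into $p' \cup q$: sets other than $S$ already belong to $p' \cup q$, and $S$ is replaced by $S'$. Applying the hypothesis $p' \sim q$ to the translated endpoints yields a $(p', q)$-alternating walk $W'$. I then build $W$ from $W'$ by substituting every occurrence of $S'$ used on the $p'$-side by $S$ used on the $p$-side; occurrences of $S'$ appearing on the $q$-side, in the edge case $S' \in q$, are left untouched. By the key observation each adjacency of $W$ is nonempty, and the strict alternation is preserved because $S$ and $S'$ occupy the same side-role. This yields the desired $X$-to-$Y$ walk in $(p, q)$.

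Two mild edge cases round out the argument. If $S' \in p$ already, the substitution still produces a valid alternating walk; the only subtlety is that $S'$ may also appear as a $p$-set in $W$, which causes no issue since sides alternate with $q$-sets in between. If $S = \{u\}$ is a singleton, then $S' = \emptyset$ is isolated in $p'$, so $p' \sim q$ fails for every $q \in \Pat$ and the implication is vacuous. I do not expect any deeper obstacle; the heart of the argument is simply lifting the set-wise inclusion $S' \subseteq S$ to a monotonicity statement about alternating-walk adjacencies.
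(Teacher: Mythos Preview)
Your proposal is correct and follows essentially the same approach as the paper: handle the degenerate case $S=\{u\}$ (where $\emptyset\in p'$ makes $p'$ inconsistent with every pattern) separately, and otherwise lift any $(p',q)$-alternating walk to a $(p,q)$-alternating walk by replacing each $p'$-side occurrence of $S'$ with $S$, using $S'\subseteq S$. Your write-up is in fact more careful than the paper's (which omits the endpoint translation and the $S'\in q$ / $S'\in p$ edge cases), but the underlying idea is identical.
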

\begin{proof}
    If $S = \{u\}$, then $p'$ is not consistent with any pattern and the lemma holds trivially. Otherwise, let $q\in\Pat$ with $p'\sim q$. Then $p\sim q$ since any $(p',q)$-alternating walk can be turned into a valid $(p,q)$-alternating walk between the same sets by replacing each occurrence of $S'$ by $S$.
\end{proof}

\begin{lemma}\label{lem:adding-subset-dom}
    Let $p \in \Pat$, and $S\in p$. Let $S' \subseteq S$.
    Then for $p' = p\cup \{S'\}$, it holds that $p\pdom p'$.
\end{lemma}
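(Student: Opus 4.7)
The plan is to apply the walk-based characterization of join, mirroring the proof of \cref{lem:removing-element-dom}. Fix an arbitrary pattern $q\in\Pat$ with $p'\sim q$; the goal is to derive $p\sim q$. Unfolding the definitions, $p\sim q$ amounts to showing that every pair of sets $A,B\in p\cup q$ belongs to the same equivalence class of $\reachable[p,q]$, i.e., that there is a $(p,q)$-alternating walk from $A$ to $B$.

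The key step will be to take the existing $(p',q)$-alternating walk $A=X_0,X_1,\ldots,X_n=B$ (which exists since $p\cup q\subseteq p'\cup q$ and $p'\sim q$ forces a single equivalence class of $\reachable[p',q]$) and rewrite it as a $(p,q)$-alternating walk by substituting $S$ for every occurrence of $S'$ on the walk. Because $S'\subseteq S$, any nonempty intersection with $S'$ is automatically a nonempty intersection with $S$, so every adjacency condition along the walk is preserved. Since $S\in p$, this replacement keeps the affected set on the $p$-side, and since sets from $q$ are untouched, the alternation structure is preserved. The endpoints $A,B\in p\cup q$ are themselves never equal to $S'$, unless $S'\in p$ already, in which case $p'=p$ and the statement is trivial.

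The only potential obstacle is the degenerate case $S'=\emptyset$: then $\{S'\}$ forms its own singleton equivalence class under any $\reachable[p',q]$ (the empty set intersects nothing), so $p'$ is consistent with no pattern and the claim holds vacuously. Beyond this, the argument is essentially a one-line monotonicity observation---nonempty intersection survives passing to a superset---and I expect no real obstacle. Structurally the proof is a near-mirror of \cref{lem:removing-element-dom}: there one shrank a set and rewrote the walk by enlarging it back; here one enlarges the pattern by a subset and rewrites by collapsing that subset into its superset.
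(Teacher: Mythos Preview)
Your proposal is correct and follows essentially the same approach as the paper: replace every $p'$-side occurrence of $S'$ on a $(p',q)$-alternating walk by $S$ to obtain a $(p,q)$-alternating walk, with the degenerate case $S'=\emptyset$ handled separately. The paper's proof is the same one-line monotonicity argument (and even phrases its degenerate case slightly less precisely than you do).
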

\begin{proof}
    If $\emptyset \in p$, then $p$ is not consistent with any pattern and the lemma holds trivially. Otherwise, let $q\in\Pat$ with $q \sim p'$. Then $q\sim p$ since any $(p',q)$-alternating walk can be turned into a $(p,q)$-alternating walk by exchanging each occurrence of $S'$ by $S$.
\end{proof}

\begin{lemma}\label{lem:missing-label-not-dom}
    Let $p, p' \in \Pat$. If $\lbs(p)\setminus \lbs(p')\neq \emptyset$, then $p' \not \pdom p$.
\end{lemma}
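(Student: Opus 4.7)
The plan is to prove $p' \not\pdom p$ by exhibiting an explicit witness pattern $q \in \Pat$ such that $p \sim q$ but $p' \not\sim q$; by \cref{def:domination} this suffices. First I fix an arbitrary $u \in \lbs(p)\setminus\lbs(p')$, which exists by hypothesis, and define
\[
q \;=\; \Big\{\,\{u\},\ \lbs(p)\cup\{0\}\,\Big\}.
\]
This is a valid pattern: since $u\in U$ we have $u\neq 0$, hence $\lbs(p)\cup\{0\}$ is the unique set of $q$ containing $0$. (The corner case $\emptyset\in p$ can be discarded upfront, as such a $p$ is consistent with no pattern at all, making the implication vacuous; in particular $\lbs(p)\neq\emptyset$ follows from $u\in\lbs(p)$.)

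Next I would establish $p\sim q$ directly from the characterization in \cref{def:consistency}. Every set $S\in p$ meets $\lbs(p)\cup\{0\}$: if $S = Z_p$ then $0\in S$, and otherwise $S\subseteq \lbs(p)\cup\{0\}$ with $S\neq\emptyset$, so $S\cap(\lbs(p)\cup\{0\})\neq\emptyset$. Consequently the big set $\lbs(p)\cup\{0\}\in q$ and every set of $p$ lie in a single equivalence class of $\reachable[p,q]$. Because $u\in\lbs(p)$, the singleton $\{u\}\in q$ also meets $\lbs(p)\cup\{0\}$ and joins that same class. Hence $p\join q$ consists of a single set, giving $p\sim q$.

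For the inconsistency $p'\not\sim q$, I would simply invoke \cref{lem:missing-singleton-not-consistent} on the pair $(q,p')$: since $u\in \sing(q)$ but $u\notin\lbs(p')$, we have $\sing(q)\setminus\lbs(p')\neq\emptyset$, so $q\not\sim p'$, and by commutativity of the join operation (\cref{lem:join-is-group}) also $p'\not\sim q$. Combining this with $p\sim q$ exhibits the required witness and concludes the proof.

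The argument is essentially a one-shot construction, so there is no real obstacle beyond choosing $q$ correctly; the subtle points are only to verify that $q$ has a unique zero-set and to reuse \cref{lem:missing-singleton-not-consistent} to block consistency with $p'$ via the forbidden singleton $\{u\}$.
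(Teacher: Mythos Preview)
Your proof is correct and essentially identical to the paper's: both exhibit the witness $q=\{\{u\},\,\lbs(p)\cup\{0\}\}$ for a chosen $u\in\lbs(p)\setminus\lbs(p')$, invoke \cref{lem:missing-singleton-not-consistent} for $p'\not\sim q$, and verify $p\sim q$ directly. One small wording fix: the singleton $\{u\}$ joins the common $\reachable[p,q]$-class not because it meets the other set of $q$ (intersections within $q$ do not contribute to $\sim_I$), but because $u\in\lbs(p)$ yields some $S\in p$ with $\{u\}\cap S\neq\emptyset$---which is precisely how the paper argues it.
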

\begin{proof}
    Fix an arbitrary element $u \in \lbs(p)\setminus \lbs(p')$. Let $R = \{0\}\cup\lbs(p)$, let $q = \{R, \{u\}\}$. Then $p'\not\sim q$ by \cref{lem:missing-singleton-not-consistent}, while $p\sim q$, since $R$ admits a $(p,q)$-alternating walk to each set of $p$, while $\{u\}$ admits a $(p,q)$-alternating walk to some set of $p$ containing $u$.
\end{proof}

\begin{lemma}\label{lem:different-sing-not-equiv}
    Let $p, p' \in \Pat$. If $\sing(p)\neq \sing(q)$ then $p \not \pequiv p'$.
\end{lemma}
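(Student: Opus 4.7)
The plan is to exhibit, for any two patterns $p, p' \in \Pat$ with $\sing(p) \neq \sing(p')$, an explicit pattern $q$ that is consistent with one of them but not the other; this breaks one direction of domination and hence rules out $p \pequiv p'$. Since $\pequiv$ is symmetric, I may assume without loss of generality that there exists $u \in \sing(p) \setminus \sing(p')$, and I will aim for a $q$ with $p' \sim q$ but $p \not\sim q$, which immediately gives $p \not\pdom p'$ and therefore $p \not\pequiv p'$.

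The witness I propose is the single-set pattern $q = \{Z_q\}$ with zero-set $Z_q = (\{0\} \cup \lbs(p')) \setminus \{u\}$. The first step is to observe that $u \in \sing(p)$ while $u \notin \lbs(q) = \lbs(p') \setminus \{u\}$ by construction, so $\sing(p) \setminus \lbs(q) \ni u$ is nonempty and \cref{lem:missing-singleton-not-consistent} immediately gives $p \not\sim q$. The second step is to verify $p' \sim q$ by showing that every set of $p' \cup \{Z_q\}$ is directly connected to $Z_q$. The zero-set $Z_{p'}$ shares the element $0$ with $Z_q$. For any other set $S \in p'$ we have $S \subseteq \lbs(p')$, and since $u \notin \sing(p')$ the set $S$ is not the singleton $\{u\}$; hence $S$ contains some element of $\lbs(p') \setminus \{u\} \subseteq Z_q$, so $S \cap Z_q \neq \emptyset$. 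All sets therefore collapse into a single equivalence class, so $p' \sim q$.

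The only delicate point is the step that rules out $S = \{u\}$ among non-zero sets of $p'$: this is precisely where the hypothesis $u \notin \sing(p')$ is consumed, and it is what makes the construction work uniformly without splitting into subcases based on whether $u \in \lbs(p')$ (if $u \notin \lbs(p')$, the isolation of the singleton $\{u\} \in p$ in the join with $q$ is even more transparent). Combining the two claims, $q$ distinguishes $p$ and $p'$ and the conclusion $p \not\pequiv p'$ follows.
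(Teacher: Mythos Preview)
Your proof is correct and follows essentially the same approach as the paper: both exhibit the single-set witness $q=\{\{0\}\cup\lbs(p')\setminus\{u\}\}$ and use \cref{lem:missing-singleton-not-consistent} for $p\not\sim q$ together with a direct check that every set of $p'$ meets $Z_q$. The only difference is cosmetic: the paper splits into the cases $u\notin\lbs(p')$ (handled via \cref{lem:missing-label-not-dom}, breaking $p'\pdom p$) and $u\in\lbs(p')$ (breaking $p\pdom p'$ with the same $q$ you use), whereas you observe that the same witness $q$ works uniformly to break $p\pdom p'$ in both cases, which is a mild streamlining.
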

\begin{proof}
    Assume that $\sing(p)\setminus \sing(p')\neq \emptyset$ (swap $p$ and $p'$ otherwise).
    Fix an arbitrary $u\in \sing(p)\setminus \sing(p')$. If $u \notin \lbs(p')$, then it holds by \cref{lem:missing-label-not-dom} that $p' \not \pdom p$. Otherwise, assume that $u\in \lbs(p')$. We claim that $p\not \pdom p'$. Choose $R = \{0\} \cup (\lbs(p')\setminus \{u\})$, and let $q = \{R\}$ It holds by \cref{lem:missing-singleton-not-consistent} that $p \not\sim q$. We show that $p' \sim q$ holds. Since $u\notin \sing(p')$, each set containing $u$ contains at least one other label as well. Hence, there exists a $(p',q)$-alternating walk of length $1$ from $R$ to each set of $p'$.
\end{proof}

\begin{corollary}\label{cor:quiv-implies-same-sing-lab}
    Given two patterns $p,q \in \Pat$. If $p\equiv q$ holds, then it must hold that $\lbs(p)=\lbs(q)$ and $\sing(p) = \sing(q)$.
\end{corollary}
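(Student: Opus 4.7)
The plan is to derive the corollary as a direct combination of the two immediately preceding lemmas (\cref{lem:missing-label-not-dom} and \cref{lem:different-sing-not-equiv}), using nothing more than the definition of $\pequiv$ (i.e., that $p\pequiv q$ unfolds as $p\pdom q$ and $q\pdom p$).

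First, I would handle the label equality $\lbs(p)=\lbs(q)$ by contraposition using \cref{lem:missing-label-not-dom}. Suppose, aiming for a contradiction, that $\lbs(p)\setminus\lbs(q)\neq\emptyset$. Then \cref{lem:missing-label-not-dom} gives $q\not\pdom p$, which contradicts $p\pequiv q$. Swapping the roles of $p$ and $q$, the same argument rules out $\lbs(q)\setminus\lbs(p)\neq\emptyset$. Therefore $\lbs(p)=\lbs(q)$.

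Next, I would handle the singleton equality $\sing(p)=\sing(q)$ in one line via \cref{lem:different-sing-not-equiv}: its contrapositive states that $p\pequiv q$ implies $\sing(p)=\sing(q)$, which is exactly what is required. Combining the two conclusions yields the corollary.

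There is no real obstacle here; the only thing to be careful about is that \cref{lem:missing-label-not-dom} is asymmetric in its hypothesis (only one direction of set difference), so I have to invoke it twice with the roles of $p$ and $q$ swapped in order to deduce set equality rather than just one inclusion. Everything else is a direct unfolding of definitions.
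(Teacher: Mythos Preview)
Your proposal is correct and is essentially the same argument as in the paper: the paper's proof simply says the corollary follows directly from \cref{lem:different-sing-not-equiv} and \cref{lem:missing-label-not-dom} together with the definition of $\pequiv$, and you have spelled out exactly how those two lemmas are applied (including the necessary swap of $p$ and $q$ when invoking \cref{lem:missing-label-not-dom}).
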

\begin{proof}
    This follows directly from \cref{lem:different-sing-not-equiv} and \cref{lem:missing-label-not-dom}, since $p\equiv q$ holds if and only if $p\pdom q$ and $q\pdom p$.
\end{proof}

\begin{lemma}\label{lem:remove-extra-labels}
    Let $p,q \in \Pat$. Let $q'$ be the pattern that results from $q$ by removing all labels in $\lbs(q)\setminus \lbs(p)$
    from each set of $q$. Then $p\sim q$ if and only if $p\sim q'$.
\end{lemma}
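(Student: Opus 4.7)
The proof hinges on one simple observation: every set $S\in p$ is contained in $\lbs(p)\cup\{0\}$ by definition of $\lbs$, so the labels being deleted, namely those in $\lbs(q)\setminus\lbs(p)$, appear in no set of $p$ at all. Writing $\phi(T)=T\setminus(\lbs(q)\setminus\lbs(p))$ for each $T\in q$ (so that $q'=\{\phi(T):T\in q\}$, noting that $\phi(Z_q)$ still contains $0$ and so becomes $Z_{q'}$), this immediately yields the identity
\[
S\cap T \;=\; S\cap \phi(T) \qquad \text{for all } S\in p,\; T\in q.
\]

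With this identity, $(p,q)$-alternating walks and $(p,q')$-alternating walks become essentially the same objects. For the direction $p\sim q \Rightarrow p\sim q'$, I would take, for any $X\in p\cup q'$, an alternating walk witnessing $p\sim q$ between $Z_p$ and either $X$ itself (if $X\in p$) or any preimage in $q$ of $X$ (if $X\in q'$), and replace every visited $T\in q$ along the walk by $\phi(T)\in q'$; consecutive intersections are preserved by the identity, and the resulting walk still alternates between $p$ and $q'$. For the converse $p\sim q'\Rightarrow p\sim q$, I would take, for any $T\in q$, an alternating walk in $p\cup q'$ from $\phi(T)$ to $Z_p$ and lift it by keeping the initial set as $T$ and replacing each subsequent set of $q'$ by an arbitrary preimage in $q$; the identity again guarantees every step has nonempty intersection.

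The one subtle point is that $\phi$ can produce the empty set, which happens precisely when some non-zero-set $T\in q$ satisfies $T\cap\lbs(p)=\emptyset$. In that case $\emptyset$ forms its own equivalence class on the $q'$-side, so $p\not\sim q'$; simultaneously $T$ itself is isolated on the $q$-side for the same reason, so $p\not\sim q$. Both sides of the claimed equivalence therefore fail together and the lemma holds trivially in this corner case. A distinct but harmless phenomenon is when two sets $T_1\ne T_2$ in $q$ collapse to the same nonempty $\phi(T_1)=\phi(T_2)$; such $T_1,T_2$ were already in the same equivalence class on the $(p,q)$-side because they meet exactly the same sets of $p$, so collapsing them in $q'$ does not alter connectivity. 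I expect this corner-case bookkeeping (empty image and collapses) to be the main thing to verify carefully; beyond that, the entire argument is driven by the single intersection identity above.
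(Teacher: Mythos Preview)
Your proposal is correct and follows essentially the same route as the paper's proof: both rest on the observation that the deleted labels lie outside every set of $p$, so intersections $S\cap T$ with $S\in p$ are unaffected, and hence $(p,q)$-alternating walks and $(p,q')$-alternating walks correspond. The paper compresses this into two sentences (``remove all labels in $S$ from each set on this walk; no intersection between two consecutive sets contains an element of $S$''), whereas you spell out the lifting direction explicitly (choosing preimages under $\phi$) and treat the empty-image and collapse corner cases separately; this extra care is sound but not strictly necessary, since whenever $p\sim q$ or $p\sim q'$ holds the empty-image case cannot occur.
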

\begin{proof}
    Let $S = \lbs(q)\setminus \lbs(p)$.
    Any $(p,q)$-alternating walk can be turned into a $(p,q')$-alternating walk and vice versa, by removing all labels in $S$ from each set on this walk. This holds since no intersection between two consecutive sets along this walk contains an element of $S$.
\end{proof}

\begin{definition}\label{def:op-add-i-to-j}
    Given a patten $p\in \Pat$ and two different labels $i, j \in U$. We define the operation $p' = p_{j \curvearrowleft i}$ over patterns, where $p'$ results from $p$ by removing the label $i$ from all sets, and then adding $i$ to each set that contains the label $j$, i.e.
	\[
		p_{j\curvearrowleft i} = \bigcup\limits_{\substack{S\in p\\j\notin S}} S\setminus \{i\}\cup
					\bigcup\limits_{\substack{S\in p\\j\in S}} S\cup \{i\}.
	\]
\end{definition}
\begin{lemma}\label{lem:rel-is-adding-label-other-side}
	It holds for all $p,q \in \Pat$ and $i,j \in U$ with $i\neq j$ that $p_{i\rightarrow j}\sim q$ if and only if $p\sim q_{j\curvearrowleft i}$.
\end{lemma}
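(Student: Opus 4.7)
The plan is to use the alternating-walk characterization of consistency introduced in \cref{sec:pats}: $r\sim s$ holds iff every pair of sets in $r\cup s$ is joined by an $(r,s)$-alternating walk. It therefore suffices to establish a correspondence between $(p_{i\rightarrow j},q)$-alternating walks and $(p,q_{j\curvearrowleft i})$-alternating walks that preserves endpoints up to the natural translations between the two sides.

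For the forward direction, assume $p_{i\rightarrow j}\sim q$ and fix sets $A,B\in p\cup q_{j\curvearrowleft i}$. Translate each endpoint to the $(p_{i\rightarrow j},q)$-world: a set from $p$ maps to its image under relabeling, while a set from $q_{j\curvearrowleft i}$ is the image of at least one preimage in $q$. By hypothesis there is a $(p_{i\rightarrow j},q)$-alternating walk between the translated endpoints, which I then translate back by replacing each $S\in p_{i\rightarrow j}$ along the walk with an arbitrary preimage $S^{*}\in p$, and each $T\in q$ with $T^{j\curvearrowleft i}\in q_{j\curvearrowleft i}$ (choosing $A$ and $B$ themselves as the translates of the two endpoints). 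The key step is verifying that each consecutive intersection remains nonempty; this reduces to a case analysis on the intersecting element $u$. If $u\notin\{i,j\}$, both transformations preserve membership, so the intersection persists. The case $u=i$ cannot occur since $p_{i\rightarrow j}$ contains no $i$. If $u=j$, then $j\in S$ forces any preimage $S^{*}\in p$ to contain at least one of $i$ or $j$, while $j\in T$ forces $T^{j\curvearrowleft i}$ to contain both $i$ and $j$, so the two translated sets share at least one element of $\{i,j\}$.

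The backward direction is symmetric. Assume $p\sim q_{j\curvearrowleft i}$, pick sets $A,B\in p_{i\rightarrow j}\cup q$, translate via any preimage in $p$ on one side and via the image $T\mapsto T^{j\curvearrowleft i}$ on the other, apply the hypothesis, and translate the resulting $(p,q_{j\curvearrowleft i})$-walk back by relabeling each $p$-set and by choosing a preimage in $q$ for each $q_{j\curvearrowleft i}$-set. The intersection analysis mirrors the forward case: for $u\in\{i,j\}$ the invariant enforced by the operation $q\mapsto q_{j\curvearrowleft i}$, namely that $i$ and $j$ appear together in every set, ensures that the chosen preimage in $q$ contains $j$, and the relabeling $i\rightarrow j$ ensures the corresponding $p_{i\rightarrow j}$-set also contains $j$, keeping the intersection nonempty.

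The main obstacle is that neither the relabel operation nor the $j\curvearrowleft i$ operation is injective on sets in general (relabeling can merge two sets of $p$ when one already contained $j$, and $j\curvearrowleft i$ can merge $T$ and $T\cup\{i\}$ when $j\notin T$), so preimages are not unique. The above case analysis shows that any choice of preimage preserves the required intersections, which is exactly what is needed. A minor subtlety is guaranteeing consistency across the walk when the same set appears multiple times; this is handled by fixing an arbitrary preimage-selection map before performing the translation.
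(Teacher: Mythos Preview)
Your proof is correct and follows essentially the same alternating-walk translation as the paper's. The only minor difference is that the paper first invokes \cref{lem:remove-extra-labels} to strip the label $i$ from all sets of $q$ (since $i\notin\lbs(p_{i\rightarrow j})$), which slightly simplifies the case analysis on the intersecting element; you instead handle the full $q$ directly, and your case analysis for $u\in\{i,j\}$ goes through for the same reasons.
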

\begin{proof}
	Let $q' = \{S\setminus \{i\}\colon S \in q\}$. Since $i\notin \lbs(p_{i\rightarrow j})$, it holds by \cref{lem:remove-extra-labels} that $p_{i\rightarrow j} \sim q$ if and only if $p_{i\rightarrow j}\sim q'$. We show that this is the case if and only if $p \sim q_{j\curvearrowleft i}$.
    Let $p'=p_{i\rightarrow j}$. For $S \in p$, let $S'$ be the set resulting from $S$ by replacing $i$ with $j$ (if $i\in S$).
    For $T \in q$, let $T' = T$ if $j\notin T$ or $T'=T\cup \{i\}$ otherwise.
    Assuming $p'\sim q'$, we show that $p\sim q_{j\curvearrowleft i}$. Let $S \in p$ and $T' \in q_{j\curvearrowleft i}$. There exists a $(p',q')$-alternating walk from $S'$ to $T$. We turn this walk into a $(p,q_{j \curvearrowleft i})$-alternating walk, from $S$ to $T'$, by replacing each set on odd position with the set it originates from, and by adding the label $i$ to each set with label $j$ at an even position of the walk. Clearly, if two consecutive sets had $j$ in there intersection, they will have $i$ or $j$ in their intersection on the resulting path.
    
    For the other direction, assume that $p \sim q_{j \curvearrowleft i}$. We show that $p'\sim q'$. Let $S' \in p'$ and $T \in q'$. There exists a $(p,q_{j \curvearrowleft i})$-alternating walk from $S$ to $T'$. We can turn this walk into a $(p',q')$-alternating walk form $S'$ to $T$ by replacing each set $S$ on an edd position by $S'$ (turning each $i$ into $j$) and replacing each set $T'$ at an even position with the set $T$ it originated from. Clearly, any two consecutive sets that had $i$ or $j$ in their intersection, have $j$ in their intersection in the resulting walk.
\end{proof}
\begin{lemma}\label{lem:union-to-disjoint}
    Let $p, q\in \Pat(U)$. Let $U'$ be a new label set disjoint from $U$ and $\rho$ a one-to-one mapping from $U$ to $U'$.
    Let $q'\in \Pat(U \cup U')$ the pattern resulting from $q$ by relabeling $u$ to $\rho(u)$ for each $u\in U'$. Then for each pattern $r\in\Pat(U)$ it holds that $p\punion q \sim r$ if and only if $p\punion q' \sim r'$ for $r'\in \Pat(U\cup U')$ the pattern resulting from $r$ by adding $\rho(u)$ to all sets containing $u$ for each $u\in U'$.
\end{lemma}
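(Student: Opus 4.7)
The plan is to use the alternating-walk characterization of the join operation and exhibit natural bijections between the sets appearing in $p\punion q$ and $p\punion q'$, and between the sets of $r$ and $r'$, that preserve pairwise intersections. Extend $\rho$ so that $\rho(0)=0$, and define $\Phi\colon (p\punion q)\to (p\punion q')$ by letting $\Phi$ be the identity on sets of $p\setminus\{Z_p\}$, be $\rho$-relabelling on sets of $q\setminus\{Z_q\}$, and send $Z_p\cup Z_q$ to $Z_p\cup\rho(Z_q)$. Define $\Psi\colon r\to r'$ by $\Psi(T)=T\cup\rho(T\cap U)$. Both maps are bijections by construction.

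The heart of the argument is the claim that for every $X\in p\punion q$ and $Y\in r$, one has $X\cap Y\neq\emptyset$ iff $\Phi(X)\cap\Psi(Y)\neq\emptyset$. I would verify this by a short case analysis on $X$. If $X\in p\setminus\{Z_p\}$, then $\Phi(X)=X\subseteq U$ has no labels in $U'$, so $\Phi(X)\cap\Psi(Y)=X\cap\bigl(Y\cup\rho(Y\cap U)\bigr)=X\cap Y$. If $X\in q\setminus\{Z_q\}$, then $\Phi(X)=\rho(X)\subseteq U'$, and since $\rho$ is a bijection, $\Phi(X)\cap\Psi(Y)=\rho(X)\cap\rho(Y\cap U)=\rho(X\cap Y)$, which is empty iff $X\cap Y$ is. Finally, for the combined zero-set $X=Z_p\cup Z_q$, the intersection with $\Psi(Y)$ equals $(Z_p\cap Y)\cup\rho\bigl((Z_q\setminus\{0\})\cap(Y\cap U)\bigr)$ together possibly with $\{0\}$ (which lies in both $X$ and $\Psi(Y)$ precisely when $Y=Z_r$, the case which also has $0\in (Z_p\cup Z_q)\cap Y$); hence this intersection is non-empty iff $(Z_p\cup Z_q)\cap Y$ is.

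Given this intersection-preservation, any $(p\punion q,r)$-alternating walk $S_1,S_2,\dots,S_n$ translates to the sequence obtained by applying $\Phi$ to each $S_i$ on one side of the bipartition and $\Psi$ to each $S_i$ on the other side; consecutive intersections remain non-empty, so this is a valid $(p\punion q',r')$-alternating walk between the images of $S_1$ and $S_n$. Applying $\Phi^{-1}$ and $\Psi^{-1}$ inverts the translation. Consequently, the equivalence classes of $\reachable[p\punion q,r]$ correspond bijectively to those of $\reachable[p\punion q',r']$. In particular the former collapses to the single class containing $Z_{p\punion q}$ iff the latter collapses to the single class containing $\Phi(Z_{p\punion q})=Z_{p\punion q'}$, which is exactly $p\punion q\sim r$ iff $p\punion q'\sim r'$.

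The main technical nuisance I anticipate is bookkeeping for the zero-set, since $Z_p\cup Z_q$ is not of the form ``a single set of $p$ or of $q$'' and requires its own case in the intersection check; extending $\rho$ by $\rho(0)=0$ and treating the marker $0$ uniformly makes this case a direct verification rather than a separate argument.
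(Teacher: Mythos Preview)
Your proposal is correct and follows essentially the same approach as the paper: set up the natural correspondence between the sets of $p\punion q$ and $p\punion q'$ (and between $r$ and $r'$), verify that pairwise intersections are preserved, and conclude that alternating walks translate in both directions. The paper's argument is terser and does not name the maps or separate out the zero-set case, but the content is the same; your only slight overreach is calling $\Phi$ a bijection---if some non-zero set lies in both $p$ and $q$ it appears once in $p\punion q$ but yields two distinct sets $S$ and $\rho(S)$ in $p\punion q'$---yet this does not harm the argument since $S$ and $\rho(S)$ intersect exactly the same sets $\Psi(T)$, so any walk reaching one also reaches the other.
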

\begin{proof}
    Let $S \in p\punion q$ and $T\in r$. Let $S', T'$ be the sets corresponding to $S$ and $T$ in $p\punion q'$ and $r'$ respectively. We show that one can turn a $(p\punion q, r)$-alternating walk from $S$ to $T$ to a $(p\punion q', r')$-alternating walk from $S'$ to $T'$ and vice versa. We achieve this by replacing each set in $q$ or $r$ by the corresponding set in $q'$ or $r'$. For two consecutive sets $X, Y \in (p\punion q) \cup r$ on the walk, let $X'$ and $Y'$ be the corresponding sets in $(p\punion q') \cup r'$ respectively. Then $X\cap Y$ contains an element $u$ of $\lbs(p)$, if and only if  $u \in X' \cap Y'$. Otherwise both sets must contain an element $u\in \lbs(q)$. But this is the case if and only if $\rho(u) \in X'\cap Y'$.
\end{proof}
\begin{lemma}\label{lem:disjoint-union-is-join-other-side}
    Given two patterns $p, q\in \Pat$ such that $\lbs(p)\cap \lbs(q) = \emptyset$. It holds for all $r\in \Pat$ that $p\punion q \sim r$ if and only if $p\sim q\join r$.
\end{lemma}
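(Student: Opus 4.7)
The plan is to reduce the statement to \cref{lem:join-is-join-other-side} by showing that, under the disjointness hypothesis on labels, the union operation and the join operation coincide, i.e.\ $p\punion q = p\join q$. Once this equality is in hand, \cref{lem:join-is-join-other-side} applied to $p$, $q$, $r$ immediately yields $p\join q \sim r$ iff $p \sim q\join r$, which is exactly the claim with $p\punion q$ in place of $p\join q$.

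To prove $p\punion q = p\join q$, I would unfold the definition of the join operation and analyze the equivalence classes of $\reachable[p,q]$. The generating relation $\intersects$ pairs a set $S\in p$ with a set $T\in q$ precisely when $S\cap T\neq\emptyset$. By the hypothesis $\lbs(p)\cap\lbs(q)=\emptyset$, the only element that can belong to both a set of $p$ and a set of $q$ is the special element $0$. Since by \cref{def:pattern} the element $0$ lies only in $Z_p$ inside $p$ and only in $Z_q$ inside $q$, the single intersecting cross-pair is $(Z_p,Z_q)$. Consequently the non-trivial equivalence classes of $\reachable[p,q]$ are exactly $\{Z_p,Z_q\}$, together with the singleton classes $\{S\}$ for each $S\in (p\setminus\{Z_p\})\cup(q\setminus\{Z_q\})$. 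Taking unions within each class yields
\[
 p\join q \;=\; \{Z_p\cup Z_q\}\cup (p\setminus\{Z_p\})\cup (q\setminus\{Z_q\}) \;=\; p\punion q,
\]
where the second equality is just \cref{def:patops} for the union.

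The main step of the argument is thus just the equivalence-class bookkeeping above; this is essentially routine once one observes that the only ``glue'' between $p$ and $q$ under the disjoint-labels assumption is the shared $0$ sitting in the two zero-sets. I do not anticipate any real obstacle: the only subtle point is making sure that the symmetric/transitive closure cannot chain further, which is immediate because every set outside $\{Z_p,Z_q\}$ intersects no set from the other pattern. After establishing the identity, the equivalence $p\punion q\sim r \iff p\sim q\join r$ follows by a single invocation of \cref{lem:join-is-join-other-side}.
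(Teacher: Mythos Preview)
Your proposal is correct and in fact cleaner than the paper's own proof. The paper does not observe the identity $p\punion q = p\join q$ under the disjoint-labels hypothesis; instead it argues both implications directly by manipulating $(p\punion q,r)$-alternating walks versus $(p,q\join r)$-alternating walks, decomposing each walk into $(q,r)$-alternating segments separated by sets of $p$ and reassembling. Your route---reducing to \cref{lem:join-is-join-other-side} via the equality of $\punion$ and $\join$ here---replaces all of that walk surgery by a short equivalence-class computation, and the only point to be careful about (that no chaining beyond $\{Z_p,Z_q\}$ is possible) you already address correctly using that $0$ is the sole shared element. The paper's approach has the minor advantage of being self-contained at the level of walks, but yours is shorter and makes the underlying reason for the lemma transparent.
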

\begin{proof}
    For a set $S$ in $q\cup r$, let $S^*$ the set in $q\join r$ that $S$ unifies to. Assume that $p\punion q \sim r$, then for each set $S$ of $q$ there exists a set $T\in r$ such that $S\cap T \neq \emptyset$. Hence for each set $S^* \in q\join r$ there is a set $S\in r$ such that $S\subseteq S^*$. Let $S\in p$ and $T^* \in p\join r$. Let $T\in r$ a subset of $T^*$. There exists a $(p\punion q, r)$-alternating walk from $S$ to $T$. This walk consists of segments of $(q,r)$-alternating walks separated by single sets of $p$. Note that all sets of each such segment belong to the same equivalence class of $\reachable[q,r]$. Hence, we can replace each such segment by $S^*$, the union of the sets in the corresponding equivalence class, getting a $(p,q\join r)$-alternating walk from $S$ to $T^*$.

    Now assume that $p\sim q\join r$. Since $\lbs(p)\cap \lbs(q) = \emptyset$, it must holds for each set $S\in q$ that there exists a set $T\in r$ with $S\cap T \neq \emptyset$. We show that there exists a $(p\punion q, r)$-alternating walk from each set $S\in p$ to each set $T\in r$. Since each set of $q$ intersects some set of $r$, the claim follows. Let $S\in p$ and $T \in r$. Let $T^* \in q \join r$ such that $T\subseteq T^*$. There exists a $(p,q\join r)$-alternating walk from $S$ to $T^*$. We get a $(p\punion q,r)$-alternating walk from $S$ to $T$ from this walk, by replacing $Z^* \in q\join r$ for each segment $X,Z^*,Y$, by a $(q,r)$-alternating walk from a set $L\in r$ that intersects $X$ to a set $R\in r$ that intersects $Y$. Such a walk exists since $\lbs(p) \cap \lbs(q) = \emptyset$ and $S^*$ intersects both $X$ and $Y$. Also, for $X, T^*$ the last two sets on the walk, we replace $T^*$ with a $(q,r)$-alternating walk from $X' \in r$ to $T$, where $X'\cap X\neq \emptyset$. Such a walk also exists since $X\cap T^* \neq \emptyset$, and $T\subseteq T^*$, and hence, it admits a $(q,r)$-alternating walk to each set that unifies to $T^*$.
\end{proof}
\begin{corollary}\label{cor:union-is-join-other-side}
    Given two patterns $p, q\in \Pat(U)$.
    Let $U'$ be a new label set disjoint from $U$ and $\rho$ a one-to-one mapping from $U$ to $U'$.
    Let $q'$ be the pattern resulting from $q$ by relabeling $u$ to $\rho(u)$ for each $u\in U$. Then it holds for each pattern $r\in\Pat(U)$ that $p\punion q \sim r$ if and only if $p \sim q' \join r'$ for $r'\in \Pat(U\cup U')$ the pattern resulting from $r$ by adding $\rho(u)$ to each set containing $u$ for each $u\in U$.
\end{corollary}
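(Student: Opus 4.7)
\medskip

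\noindent\textbf{Proof plan.} The corollary is a direct consequence of the two preceding results in this subsection, so my plan is simply to chain them in the correct order. First I would observe that \cref{lem:union-to-disjoke-union} (i.e.\ \cref{lem:union-to-disjoint}) already tells us exactly that
\[
 p\punion q \sim r \iff p\punion q' \sim r',
\]
for $q'$ and $r'$ as defined in the statement, regardless of whether the label sets of $p$ and $q$ overlap. This step does the ``renaming'' work and reduces the corollary to a statement in which the two parts of the disjoint union operator use disjoint label sets.

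Second, I would note that by construction $\lbs(q')\subseteq U'$ while $\lbs(p)\subseteq U$, and $U\cap U'=\emptyset$, so $\lbs(p)\cap\lbs(q')=\emptyset$. Hence the hypothesis of \cref{lem:disjoint-union-is-join-other-side} is satisfied for the pair $(p,q')$, giving
\[
 p\punion q' \sim r' \iff p \sim q'\join r'.
\]
Combining the two equivalences yields $p\punion q\sim r$ iff $p\sim q'\join r'$, which is exactly the claim.

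\medskip

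\noindent\textbf{Where the work lies.} There is essentially no new content in the corollary beyond the two lemmas; the only ``care'' needed is to check that the relabeling in \cref{lem:union-to-disjoint} produces patterns whose label sets are genuinely disjoint, so that \cref{lem:disjoint-union-is-join-other-side} can be applied. Since $\rho$ is a bijection from $U$ onto the fresh copy $U'$ and $q'$ lives entirely on $U'$ while $p$ lives on $U$, this disjointness is immediate. Thus the main (and only) obstacle is bookkeeping: making sure the $r'$ appearing in \cref{lem:union-to-disjoint} (the one obtained from $r$ by adding $\rho(u)$ to every set containing $u$) is literally the same $r'$ that appears in the statement of the corollary, which is visible by inspection of the two definitions. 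No walk-level argument is needed here, because both underlying lemmas were proven by alternating-walk surgery, and the corollary only concatenates their conclusions.
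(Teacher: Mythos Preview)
Your proposal is correct and follows exactly the paper's own proof: apply \cref{lem:union-to-disjoint} to reduce to the disjoint-label case, then invoke \cref{lem:disjoint-union-is-join-other-side} using $\lbs(p)\cap\lbs(q')=\emptyset$. The bookkeeping you flag (that the $r'$ in the two lemmas coincides) is the only thing to check, and you handle it correctly.
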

\begin{proof}
    It holds that $p\punion q \sim r$ if and only if $p\punion q' \sim r'$ by \cref{lem:union-to-disjoint}. Since $\lbs(p)\cap \lbs(q')=\emptyset$, it holds by \cref{lem:disjoint-union-is-join-other-side} that $p\punion q' \sim r'$ if and only if $p\sim q'\join r'$.
\end{proof}

\section{Partial solutions as patterns}\label{sec:sol-pat}
From now on, the label set $U$ will correspond to the labels of a labeled graph. Let $(G, \term, \budget)$ be the given instance of the \Stp\ problem, for some graph $G = (V, E)$, a set of terminals $\term \subseteq V$, and $\budget \in \mathbb{N}$. Let $n = |V|$ and $m = |E|$. Let $\mu$ be a $\cwval$-expression of $G$ for some value $\cwval\in\mathbb{N}$. Let $\syntaxtree$ be the corresponding syntax tree, and $r$ its root. 
Hence, we assume hereinafter that $U = [\cwval]$. In particular, $U$ is finite and totally ordered.

We also fix a value $\W\in\mathbb{N}$, and a weight function $\weightf:V\rightarrow[\W]$. Both will be chosen later. Let $v_0\in \term$ be an arbitrary but fixed terminal vertex. For a node $x\in V(\syntaxtree)$, we define $\term_x = \term \cap V_x$.

\begin{definition}\label{def:sol-pat}
    For a node $x\in V(\syntaxtree)$, we call a set $S\subseteq V_x$ where $\term_x \subseteq S$ a \emph{partial solution}. Each partial solution defines a pattern $\pat = \pat_x(S)$ over $G_x$ as follows: Let $C_1, \dots C_{\ell}$ be the connected components of $G_x[S]$. For $i\in[\ell]$, let $S_i = \lab_x(C_i)$ be the set of all labels appearing in $C_i$. If $v_0\notin S$, we define 
    \[\pat = \big\{S_1, \dots, S_{\ell}\big\} \cup \big\{\{0\}\big\}.\]
    Otherwise, assume that $v_0 \in C_{\ell}$ (otherwise swap $C_{\ell}$ and the component containing $v_0$). We define
    \[\pat = \big\{S_1, \dots, S_{\ell-1}\big\} \cup \big\{S_{\ell} \cup \{0\}\big\}.\]
    That means the zero-set is defined by the component containing $v_0$ if $v_0 \in S$, or as a singleton otherwise. We call $\pat_x(S)$ the pattern corresponding to $S$ in $G_x$.    
    We denote by $\pat(S)$ the pattern $\pat_r(S)$ corresponding to a parital solution $S$ in the whole graph $G$.
\end{definition}
\begin{lemma}\label{lem:solution-if-one-set}
    Given a set $S\subseteq V$ such that $\term \subseteq S$, it holds that $S$ is a Steiner tree in $G$, if and only if $\pat(S)$ consists of one set only.
\end{lemma}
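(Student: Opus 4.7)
The plan is to unfold the definition of $\pat(S) = \pat_r(S)$ and argue that the number of sets in this pattern equals the number of connected components of $G[S]$, from which the equivalence with connectivity (and hence with being a Steiner tree) is immediate.

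First I would observe that since $v_0 \in \term \subseteq S$, we are necessarily in the second case of \cref{def:sol-pat}: if $C_1,\ldots,C_\ell$ denote the connected components of $G[S]$ with $v_0 \in C_\ell$, and $S_i = \lab(C_i)$, then
\[
\pat(S) = \{S_1,\ldots,S_{\ell-1}\} \cup \{S_\ell \cup \{0\}\}.
\]
So the main step is to argue $|\pat(S)| = \ell$, or at least that $|\pat(S)| = 1 \iff \ell = 1$. If $\ell = 1$, then clearly $\pat(S) = \{S_1 \cup \{0\}\}$ is a singleton. Conversely, if $\ell \ge 2$, then $\pat(S)$ contains at least the two sets $S_{\ell-1}$ and $S_\ell \cup \{0\}$, which are distinct because $S_\ell \cup \{0\}$ contains the element $0$ while $S_{\ell-1} \subseteq U$ does not. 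Hence $|\pat(S)| \ge 2$.

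Combining these, $\pat(S)$ consists of a single set if and only if $G[S]$ has exactly one connected component, i.e., $G[S]$ is connected. Since we are already assuming $\term \subseteq S$, this is exactly the condition for $S$ to be a Steiner tree in $G$. I do not expect any real obstacle here; the only subtle point is ensuring the singleton-case counting (a pattern being a set of sets) is handled correctly, and that is resolved by the observation that the zero-set is always distinguishable from the non-zero-sets by containing the symbol $0$.
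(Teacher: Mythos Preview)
Your proof is correct and follows essentially the same approach as the paper's own proof: both use that $v_0 \in \term \subseteq S$ forces the second case of \cref{def:sol-pat}, and both hinge on the fact that only the component containing $v_0$ can contribute the element $0$ to its set, so the zero-set is always distinguishable from any non-zero set. The paper phrases the converse direction directly (if $\pat(S)$ is a singleton then at most one component maps to the zero-set, hence $\ell=1$), while you argue the contrapositive ($\ell\ge 2$ forces two distinct sets), but this is a cosmetic difference.
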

\begin{proof}
    Let $p = \pat(S)$. Assume that $S$ is a Steiner tree in $G$. Then it holds that $G[S]$ consists of one connected component only. Since it holds that $v_0\in \term\subseteq S$, it follows that the label $0$ is contained in the same set corresponding to this connected component in $p$, and hence, $p$ consists of one set only.

    Now assume that $p$ consists of one set only, namely the zero-set $Z_p$.
    Even though multiple connected components might correspond to the same set in $p$, there must exist exactly one connected component containing the vertex $v_0$. Hence, at most one connected component in $G[S]$ corresponds to the zero-set in $p$. Since the zero-set is the only set in $p$, it must hold that $G[S]$ contains one connected component only.
\end{proof}
Now we define families of partial solutions that allow us to build a recursive formula for a dynamic programming scheme over $\syntaxtree$ to solve the \Stp\ problem.
\begin{definition}\label{def:sol}
    For $x\in V(\syntaxtree), b \in [\budget]_0, c \in [n\cdot \W]_0$,
    we define the family $\sol_x\ind{b, c}\subseteq \Pat$ of patterns corresponding to partial solutions of cardinality $b$ and weight $c$ over $G_x$ as 
    \[
    \sol_x\ind{b, c} = \{p\in \Pat\colon \exists S\subseteq V_x, \text{ where } \term_x \subseteq S\land p_x(S) = p\land |S|=b \land \weightf(S) = c\}.
    \]
\end{definition}

From now on, we always assume that $x\in V(\syntaxtree), b \in [\budget]_0$, and $c \in [n\cdot \W]_0$. We skip repeating this to avoid redundancy.

\begin{lemma}\label{lem:recursive-sol}
    The families $\sol_x$ can be built recursively over the nodes of $\syntaxtree$ in a bottom-up manner using the operations defined in \cref{def:patops} and \cref{def:patadd}.
\end{lemma}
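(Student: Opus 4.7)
The plan is a structural induction on $\syntaxtree$, handling the four operation types at $x$ in turn. The core observation, common to all cases, is that the pattern $\pat_x(S)$ is determined by the partition of $S$ into connected components of $G_x[S]$, the labels appearing in each component, and the position of $v_0$; so for each operation it suffices to describe how components and labels change from the child graphs to $G_x$ and to verify that some operation from \cref{def:patops} or \cref{def:patadd} realises this change on the level of patterns, after which sizes and weights carry over additively.

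For an introduce node $x=i(v)$ the partial solutions are just $\{\emptyset,\{v\}\}$ filtered by $\term_x\subseteq S$, and the corresponding patterns are written down directly from \cref{def:sol-pat}, giving the base case. For a relabel node $x$ with child $x'$ where $G_x=\relabel{i}{j}(G_{x'})$, the vertex set, edge set, and hence the components of $G_x[S]$ coincide with those of $G_{x'}[S]$; only label $i$ is renamed to $j$, which is exactly the effect of $p_{i\to j}$, so $\sol_x[b,c]=\{\,p_{i\to j}\colon p\in\sol_{x'}[b,c]\,\}$. For a union node $x$ with children $x_1,x_2$, each partial solution decomposes uniquely as $S=S_1\dot\cup S_2$ with $\term_{x_i}\subseteq S_i\subseteq V_{x_i}$; because no edge of $G_x$ crosses the bipartition, the components of $G_x[S]$ are the disjoint union of those of $G_{x_1}[S_1]$ and $G_{x_2}[S_2]$, and $v_0$ lies in at most one of $V_{x_1},V_{x_2}$, so the definition of $\oplus$ applies verbatim and yields
\[
\sol_x[b,c]\;=\;\bigcup_{\substack{b_1+b_2=b\\ c_1+c_2=c}}\{\,p_1\oplus p_2\colon p_i\in\sol_{x_i}[b_i,c_i]\,\}.
\]

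The main obstacle is the join node $x$ with child $x'$, where $G_x=\add{i}{j}(G_{x'})$: here $V_x=V_{x'}$ and $S$ itself does not change, but the newly added $i$-$j$ edges can merge components. If $\{i,j\}\not\subseteq\lbs(\pat_{x'}(S))$ then none of the new edges lies inside $S$, so $\pat_x(S)=\pat_{x'}(S)$, matching the first branch of $\patadd_{i,j}$. Otherwise every label-$i$ vertex in $S$ becomes adjacent in $G_x$ to every label-$j$ vertex in $S$, so all components of $G_{x'}[S]$ whose label set meets $\{i,j\}$ collapse into a single component, while the remaining components are untouched. This is precisely the effect of $\pat_{x'}(S)\join[ij]$: the block $\{i,j\}$ of the operand $[ij]$ is $\reachable$-related to exactly those blocks of $\pat_{x'}(S)$ that contain $i$ or $j$, and the join takes the union over that equivalence class; the zero-set is handled automatically because $\join$ keeps the symbol $0$ on whichever merged set contains it. Thus $\pat_x(S)=\patadd_{i,j}(\pat_{x'}(S))$ and $\sol_x[b,c]=\{\,\patadd_{i,j}(p)\colon p\in\sol_{x'}[b,c]\,\}$, completing the induction.
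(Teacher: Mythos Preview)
Your proposal is correct and follows essentially the same approach as the paper: a structural induction over $\syntaxtree$, handling the four node types exactly as the paper does (introduce by direct inspection, relabel via $p_{i\to j}$, union via $\oplus$ with the observation that $v_0$ lies on at most one side, and join via the two cases of $\patadd_{i,j}$). If anything, your treatment of the join case is slightly more explicit about why $\join[ij]$ merges precisely the components meeting $\{i,j\}$.
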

\begin{proof}
    We distinguish the different types of nodes of $\syntaxtree$: For an introduce node $i(v)$ (a leaf of $\syntaxtree$), let $\weight = \weightf(v)$. Let $p = [0,i]$ if $v \neq v_0$, and $p=[0i]$ otherwise. Then we set $\sol_x\ind{1,\weight} = \{p\}$. We set $\sol_x\ind{0,0}$ to $\{[0]\}$, if $v \notin \term$, and to $\emptyset$ otherwise. For all other values of $\budget$ and $\weight$, we set $\sol_x\ind{\budget,\weight}= \emptyset$.
    
    For a join node $\mu_x = \add{i}{j}(\mu_{x'})$, we define 
    \[
        \sol_x\ind{\budget,\weight}= \patadd_{i,j}\big(\sol_{x'}\ind{\budget,\weight}\big),
    \]
    and for a relabel node $\mu(x) = \relabel{i}{j}(\mu_{x'})$. We define
    \[
        \sol_x\ind{\budget,\weight}= \big(\sol_{x'}\ind{\budget,\weight}\big)_{i\rightarrow j}.
    \]
    For a union node $\mu_{x} = \mu_{x_1}\union\mu_{x_2}$, we define
    \[
        \sol_x\ind{\budget,\weight} = \bigcup\limits_{\substack{b_1+b_2 = \budget\\c_1+c_2=\weight}}\big(\sol_{x_1}\ind{b_1,c_1}\punion\sol_{x_2}\ind{b_2, c_2}\big).
    \]

We prove the correctness of these formulas by induction over $\syntaxtree$. For a leaf node, the correctness follows directly from the definition of $\sol_x\ind{\budget, \weight}$, since the label $0$ is either a singleton, or contained in the set that corresponds to the connected component containing $v_0$.
If $x$ is a join node, it holds that $\lab_x = \lab_{x'}$. Let $S$ be a solution, and $p = \pat_{x'}(S)$. If $S$ does not contain vertices of both labels $i$ and $j$, then no edges are added to $G_{x'}[S]$. Hence, all connected components stay the same. That means $\pat_{x}(S)=p$ whenever $\{i,j\}\not\subseteq \pat()$. Otherwise, it holds that $\{i,j\}\subseteq \lbs(p)$, and all connected components containing either labels unify to one connected components. All other components stay untouched. In this case, it holds that $\pat_{x}(S) = p\join[ij]$. In total, we get $\pat_x(S)=\patadd_{i,j}p$.

In the case of a relabel node, it holds that 
\[
\lab_x(v)= \begin{cases}
    \lab_{x'}(v) &\colon \lab_{x'}(v)\neq i,\\
    j &\colon \text{otherwise}.
\end{cases}    
\]
Hence, $\pat_x(S) = \big(\pat_{x'}(S)\big)_{i\rightarrow j}$.

Finally, for a union node $\mu_{x} = \mu_{x_1}\union\mu_{x_2}$. Each partial solution $S$ over $G_x$ of size $\budget$ and weight $\weight$ results from the union of a partial solution $S_1$ of size $b_1$ and weight $c_1$ in $G_{x_1}$ and a partial solution $S_2$ of size $b_2$ and weight $c_2$ in $G_{x_2}$, where $b_1+b_2=\budget$ and $c_1+c_2=\weight$ and vice versa.

Since at most one of $V_{x_1}$ and $V_{x_2}$ contains $v_0$, at least one of $Z_{p_1}$ or $Z_{p_2}$ is the singleton $\{0\}$. Since the connected components of $G_x[S]$ are the union of both the connected components of $G_{x_1}[S_1]$ and those of $G_{x_2}[S_2]$, it holds that $p_1\punion p_2 = \pat_x(S_1\cup S_2)$. Hence, the formula follows.
\end{proof}

\begin{lemma} \label{lem:solution-if-sol-consist-0}
    The graph $G$ admits a Steiner tree of size $\budget$ and weight $\weight$, if and only if there exists a pattern $p \in \sol_r\ind{\budget, \weight}$ with $p\sim [0]$.
\end{lemma}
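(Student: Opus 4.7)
The plan is to reduce the statement to \cref{lem:solution-if-one-set} by showing that, for any pattern $p \in \Pat$, the condition $p \sim [0]$ is equivalent to $p$ consisting of a single set (namely its zero-set). Given this equivalence, the lemma follows directly: by definition of $\sol_r\ind{\budget,\weight}$, such a pattern exists iff there is $S \subseteq V$ with $\term \subseteq S$, $|S| = \budget$, $\weightf(S) = \weight$, and $\pat(S)$ consisting of one set, which by \cref{lem:solution-if-one-set} is equivalent to $S$ being a Steiner tree of the required size and weight.

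The key step is therefore a direct computation of $p \join [0]$. Since $[0] = \{\{0\}\}$ contains the single set $\{0\}$, and the element $0$ occurs in $p$ only in the zero-set $Z_p$, the set $\{0\}$ intersects exactly one set of $p$, namely $Z_p$. Consequently, in the relation $\reachable[p,[0]]$ the equivalence class of $\{0\}$ is exactly $\{\{0\}, Z_p\}$, while every other set $S \in p \setminus \{Z_p\}$ forms its own singleton equivalence class (any alternating walk leaving $S$ would need to visit a set of $[0]$, but $\{0\}$ only intersects $Z_p$). Hence
\[
    p \join [0] \;=\; \{Z_p\} \cup (p \setminus \{Z_p\}) \;=\; p.
\]
In particular, $p \join [0]$ consists of a single set iff $p = \{Z_p\}$, i.e., iff $p$ consists only of its zero-set. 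This is exactly the condition $p \sim [0]$ from \cref{def:consistency}.

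For the forward direction, if $G$ has a Steiner tree $S$ of size $\budget$ and weight $\weight$, then $\term \subseteq S$ (as Steiner trees span $\term$), so $p = \pat(S) \in \sol_r\ind{\budget,\weight}$, and by \cref{lem:solution-if-one-set}, $p$ consists of one set, yielding $p \sim [0]$ by the computation above. For the backward direction, any $p \in \sol_r\ind{\budget,\weight}$ with $p \sim [0]$ arises as $p = \pat(S)$ for some $S \subseteq V$ with $\term \subseteq S$, $|S| = \budget$, $\weightf(S) = \weight$; the equivalence then forces $p = \{Z_p\}$, and \cref{lem:solution-if-one-set} provides that $S$ is a Steiner tree. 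There is no substantial obstacle here: the entire proof is a bookkeeping exercise once one observes that joining with $[0]$ acts as the identity on $p$.
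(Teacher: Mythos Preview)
Your proof is correct and follows essentially the same approach as the paper: both establish that $p\join[0]=p$ (you spell out the equivalence classes of $\reachable[p,[0]]$ a bit more explicitly), deduce that $p\sim[0]$ iff $p$ consists of a single set, and then invoke \cref{lem:solution-if-one-set} together with the definition of $\sol_r\ind{\budget,\weight}$.
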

\begin{proof}
Since we assume that all patterns contain exactly one set with the label $0$ in it, it holds for all $p\in \Pat$ that $p\join[0] = p$. Hence, $p\sim [0]$ if and only if $p$ contains one set only.
Let $S\subseteq V$ with $\term \subseteq S$, $|S|=\budget$ and $\weightf(S)=\weight$. Let $p=\pat(S)$. It follows from \cref{def:sol} that $p\in \sol\ind{\budget, \weight}$. It follows from \cref{lem:solution-if-one-set} that $S$ is a solution if and only if $p$ is a singleton, which is the case if and only if $p\sim [0]$.
\end{proof}

Hence, we can solve the \Stp\ problem by dynamic programming over $\syntaxtree$ by computing all families $\sol_x\ind{\budget, \weight}$ for all $\budget \in [n]$ and  $\weight \in [n\cdot\W]$. However, since we have approximately $2^{2^k}$ different patterns over $k$ labels, we seek to reduce the families $\sol\ind{\budget, \weight}$ into representing families $\rep\ind{\budget, \weight}$ defined over a special family of patterns. We show that we can count the number of representations of a partial solutions over a node $G_x$ in time $\ostar(3^k)$, which will allow us to solve the \Stp\ problem in this running time with high probability.

\section{Representation of partial solutions}\label{sec:rep}
In this section, we define a special family of patterns (called \emph{complete patterns}). We show that we can represent any family of patterns by a family of complete patterns only, in a way that carries the extendibility of a partial solution into a solution over the whole graph.
\begin{definition}\label{def:representation}
    Given two families $\sol, \rep \subseteq \Pat$, we say that $\rep$ \emph{represents} $\sol$, if for each $q \in \Pat$ the following holds: there exists a pattern $p \in \sol$ such that $p\sim q$ if and only if there exists a pattern $p'\in \rep$ such that $p'\sim q$. Given a family $\rep\subseteq \Pat$, and a pattern $p \in \Pat$, we say that $\rep$ represents $p$ if $\rep$ represents $\{p\}$.
\end{definition}
Clearly, it holds that representation is an equivalence relation. We show that the operations in \cref{def:patops} preserve representation, which will allow us to define a dynamic programming scheme over families of representative solutions.
\begin{definition}
    Let $\op:\Pat^k \rightarrow \Pat$ be a $k$-ary operation over patterns, for some positive integer $k$. We say that $\op$ \emph{preserves representation}, if for all sets of patterns $P_1, \dots P_k, Q_1, \dots Q_k \subseteq \Pat$, where $P_i$ represents $Q_i$ for all $i \in [k]$, it holds that $\op(P_1,\dots P_k)$ represents $\op(Q_1, \dots Q_k)$.
\end{definition}
The following observation follows from the definition of representation.
\begin{observation}
    A $k$-ary operation $\op$ over patterns preserves representation, if and only if for all sets of patterns $P_1, \dots P_k, Q_1, \dots Q_k \in \Pat$, where $P_i$ represents $Q_i$ for all $i \in [k]$, and for all $r \in \Pat$ the following holds: there exists a tuple $(p_1, \dots p_k)\in P_1\times P_2 \times \dots P_k$ such that $\op(p_1, \dots p_k)\sim r$ if and only if there exists a tuple $(q_1,\dots q_k)\in Q_1\times Q_2 \times \dots Q_k$ such that $\op(q_1, \dots q_k)\sim r$.
\end{observation}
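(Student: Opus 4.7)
The plan is to prove this by unfolding definitions only; there is no real mathematical content beyond careful bookkeeping, since the observation simply rewrites the definition of ``preserves representation'' in terms of the tuple notation for applying a $k$-ary operation to a family of sets, i.e.\ $\op(P_1,\dots,P_k)=\{\op(p_1,\dots,p_k):(p_1,\dots,p_k)\in P_1\times\cdots\times P_k\}$ from the preliminaries.

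First I would fix families $P_1,\dots,P_k,Q_1,\dots,Q_k\subseteq\Pat$ with $P_i$ representing $Q_i$ for all $i\in[k]$, and translate the statement ``$\op(P_1,\dots,P_k)$ represents $\op(Q_1,\dots,Q_k)$'' via the definition of representation. That is, applying the definition of $\rep$ representing $\sol$ to the two sets $\op(P_1,\dots,P_k)$ and $\op(Q_1,\dots,Q_k)$, the conclusion is equivalent to the statement that for every $r\in\Pat$, there exists $p'\in\op(P_1,\dots,P_k)$ with $p'\sim r$ if and only if there exists $q'\in\op(Q_1,\dots,Q_k)$ with $q'\sim r$.

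Next I would use the tuple-notation convention to rewrite each membership. By definition, $p'\in\op(P_1,\dots,P_k)$ is the same as saying there is a tuple $(p_1,\dots,p_k)\in P_1\times\cdots\times P_k$ with $p'=\op(p_1,\dots,p_k)$; thus the existence of some $p'\in\op(P_1,\dots,P_k)$ with $p'\sim r$ is identical to the existence of a tuple $(p_1,\dots,p_k)\in P_1\times\cdots\times P_k$ with $\op(p_1,\dots,p_k)\sim r$, and likewise on the $Q$-side. Substituting both into the previous equivalence yields exactly the characterization in the statement of the observation, and the reverse direction is obtained by reading the same chain of substitutions backwards.

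The only point that deserves mild attention is that the universal quantification over $r\in\Pat$ and over all choices of representing pairs $(P_i,Q_i)$ must be preserved consistently throughout the rewriting, but this is automatic once one observes that both the definition of ``preserves representation'' and the statement of the observation fix the same outer quantifiers (``for all $P_1,\dots,P_k,Q_1,\dots,Q_k$ with $P_i$ representing $Q_i$''). Hence no obstacle is expected; the entire argument is a two-step unfolding and can be written in just a few lines.
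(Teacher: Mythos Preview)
Your proposal is correct and matches the paper's treatment: the paper states this observation without proof, noting only that it ``follows from the definition of representation,'' which is precisely the two-step unfolding you describe.
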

\begin{lemma}\label{lem:ops-preserve-rep}
    All operations defined in \cref{def:patops} preserve representation.
\end{lemma}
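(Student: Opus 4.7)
The plan is to handle the three operations of \cref{def:patops} — relabel, join, and union — separately, in each case leveraging the ``symmetry lemmas'' of Section~4.2 that move an operand from the left of $\sim$ to the right. By the observation preceding the lemma, it suffices to show that for each operation $\op$ and each test pattern $r \in \Pat$, existence of a tuple $(p_1,\ldots,p_k) \in P_1\times\cdots\times P_k$ with $\op(p_1,\ldots,p_k)\sim r$ is equivalent to existence of a tuple from $Q_1\times\cdots\times Q_k$ with the same property. I will peel operands off one at a time: apply a symmetry lemma to isolate one $p_i$ alone on the left of $\sim$, invoke the representation hypothesis for $P_i,Q_i$ using the (now fixed) right-hand pattern as test, and then undo the symmetry lemma on the resulting consistency.

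For the unary relabel this is immediate from \cref{lem:rel-is-adding-label-other-side}: $p_{i\to j}\sim r \iff p\sim r_{j\curvearrowleft i}$, and since $r_{j\curvearrowleft i}\in\Pat$ depends only on $r$, the representation of $P$ by $Q$ with test pattern $r_{j\curvearrowleft i}$ immediately yields the equivalence. For the binary join I apply the same idea twice. First, \cref{lem:join-is-join-other-side} rewrites $p_1\join p_2 \sim r$ as $p_1 \sim p_2\join r$; fixing $p_2$, the representation of $P_1$ by $Q_1$ with test $p_2\join r$ replaces $p_1$ by some $q_1\in Q_1$. Then using commutativity of $\join$ from \cref{lem:join-is-group} together with \cref{lem:join-is-join-other-side} again, the consistency $q_1\join p_2 \sim r$ becomes $p_2 \sim q_1\join r$, and the representation of $P_2$ by $Q_2$ replaces $p_2$ by some $q_2\in Q_2$, yielding $q_1\join q_2 \sim r$ as desired.

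The step I expect to be the main obstacle is the union case, which I plan to handle analogously using \cref{cor:union-is-join-other-side} in place of \cref{lem:join-is-join-other-side}. The complication is that this corollary introduces a fresh disjoint copy $U'$ of the label set $U$ and produces intermediate patterns $p_2'\join r'$ living in $\Pat(U\cup U')$ rather than in $\Pat(U)$, whereas the hypothesis ``$P_i$ represents $Q_i$'' is only about test patterns in $\Pat(U)$. I plan to bridge this gap via \cref{lem:remove-extra-labels}: since every pattern on the left of $\sim$ in the chain of rewrites has labels only in $U$, deleting from the right-hand pattern all labels lying in $U'$ preserves consistency, so the effective test pattern may be taken to lie in $\Pat(U)$ and the representation hypothesis applies. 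Once this technicality is in place, exactly the same two-step replace-and-rewrite argument as in the join case carries through, establishing preservation of representation for $\punion$.
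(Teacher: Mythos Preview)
Your proposal is correct and follows essentially the same route as the paper: peel operands off one at a time via the symmetry lemmas \cref{lem:rel-is-adding-label-other-side}, \cref{lem:join-is-join-other-side}, and \cref{cor:union-is-join-other-side}, invoking the representation hypothesis on the isolated operand against the resulting test pattern, then undoing the rewrite. Your handling of the union case is in fact slightly more careful than the paper's own proof: the paper applies the representation hypothesis directly with the test pattern $q'\join r'\in\Pat(U\cup U')$ without comment, whereas you correctly observe that the hypothesis is only stated for tests in $\Pat(U)$ and use \cref{lem:remove-extra-labels} to strip the $U'$-labels first (valid since the left-hand pattern lives in $\Pat(U)$).
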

\begin{proof}
    Let $S,S',T,T'\subseteq \Pat$ such that $S'$ represents $S$ and $T'$ represents $T$.
    
    Join:
    Let $r\in \Pat$ such that there exist $p \in S$ and $q\in T$ where $p\join q\sim r$.
    It holds by \cref{lem:join-is-join-other-side} that, $p\sim q\join r$. Since $S'$ represents $S$, there exists $p' \in S'$ with $p'\sim q\join r$. By the same lemma it follows that $q\sim p'\join r$, and there exists $q' \in T'$ such that $q'\sim p'\join r$. Hence, it holds again by the same lemma that $p'\join q'\sim r$, where $p'\join q'\in S'\join T'$. The other direction follows in an analogous way, by assuming that $p\in S'$ and $q\in T'$.

    Relabel:
    Let $q\in\Pat$ such that there exists $p\in S$, where $p_{i\rightarrow j}\sim q$.
    By \cref{lem:rel-is-adding-label-other-side} it holds that $p\sim q_{j\curvearrowleft i}$. Hence, there exists $p'\in S'$ with $p'\sim q_{j\curvearrowleft i}$.
    Again, by lemma \cref{lem:rel-is-adding-label-other-side} is holds that $p'_{i\rightarrow j}\sim q$, for $p'_{i\rightarrow j} \in S'_{i\rightarrow j}$. The other direction follows analogously.

    Union:
    Let $r \in \Pat$ such that there exists $p\in S, q\in T$ with $p\punion q \sim r$.
    It holds by \cref{cor:union-is-join-other-side} that $p\sim q'\join r'$, where $q'$ and $r'$ are the patterns resulting from $q$ and $r$ as described in the lemma. Hence, there exists $p'\in S'$ such that $p'\sim q\join r$. Note that $p'$ is also a pattern over $U$, and hence, again by \cref{cor:union-is-join-other-side}, it holds that $p'\punion q\sim r$. By an analogous argument (replacing $p$ with $q$ and $q$ with $p'$) we get that $p'\punion q' \sim r$ for some $q' \in T'$. The other direction follows analogously.
\end{proof}
\begin{lemma}\label{lem:patadd-preserve-rep}
    For $i,j\in U, i\neq j$, the operation $\patadd_{i,j}$ preserves representation.
\end{lemma}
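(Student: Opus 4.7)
The plan is to reduce the two-case definition of $\patadd_{i,j}$ to a uniform consistency test that no longer depends on $\lbs(p)$. Fix $r\in\Pat$ and set $r':=r\punion[ij,i,j]$, which equals $r$ with the three sets $\{i,j\}$, $\{i\}$, $\{j\}$ added while keeping $Z_r$ as the zero-set. The key step is to establish the pointwise equivalence
\[
\patadd_{i,j}p\sim r \iff p\sim r \ \text{or}\ p\sim r'
\]
for every $p\in\Pat$; the lemma then follows by applying the definition of representation of $S$ by $S'$ separately to each of the two test patterns $r$ and $r'$.

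For the forward direction of this equivalence, split on whether $\{i,j\}\subseteq\lbs(p)$. If both labels occur in $p$, then $\patadd_{i,j}p=p\join[ij]$, and \cref{lem:patadd-is-union-single-otherside} rewrites $p\join[ij]\sim r$ as $p\sim r'$; otherwise $\patadd_{i,j}p=p$ by definition and the hypothesis is exactly $p\sim r$. For the backward direction, the assumption $p\sim r'$ forces $\{i,j\}\subseteq\lbs(p)$ via \cref{lem:missing-singleton-not-consistent} applied to the singletons $\{i\},\{j\}\in\sing(r')$, so $\patadd_{i,j}p=p\join[ij]$ and the converse of \cref{lem:patadd-is-union-single-otherside} yields $\patadd_{i,j}p\sim r$. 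The remaining assumption $p\sim r$ is immediate when $\{i,j\}\not\subseteq\lbs(p)$; when both labels occur in $p$, I plan to observe that the three added sets $\{i,j\},\{i\},\{j\}$ each intersect some set of $p$ and are therefore absorbed into the unique equivalence class of $\reachable[p,r]$, so $p\sim r'$ as well, and one more application of \cref{lem:patadd-is-union-single-otherside} delivers $\patadd_{i,j}p\sim r$.

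With the pointwise equivalence in hand, the definition of representation yields
\[
(\exists p\in S: p\sim r)\iff(\exists p'\in S': p'\sim r), \quad (\exists p\in S: p\sim r')\iff(\exists p'\in S': p'\sim r'),
\]
by using $r$ and $r'$ separately as the single test pattern. Taking the disjunction and reapplying the pointwise equivalence on each side produces $\exists p\in S: \patadd_{i,j}p\sim r \iff \exists p'\in S': \patadd_{i,j}p'\sim r$, as required. The main obstacle, and the reason the proof does not reduce directly to the join case of \cref{lem:ops-preserve-rep}, is the case split hidden in $\patadd_{i,j}$: when $\{i,j\}\not\subseteq\lbs(p)$ the operation collapses to the identity, and a pattern $p'$ representing $p$ may have a different label set and therefore fall into the opposite case. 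The resolution is to compile the case split into the test pattern itself: the singletons $\{i\}$ and $\{j\}$ attached to $r$ act as a label filter through \cref{lem:missing-singleton-not-consistent}, so that consistency with $r'$ automatically witnesses $\{i,j\}\subseteq\lbs(p)$ and licenses the use of \cref{lem:patadd-is-union-single-otherside}.
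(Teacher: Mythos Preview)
Your proof is correct and follows essentially the same route as the paper: both hinge on the auxiliary test pattern $r'=r\punion[ij,i,j]$, use \cref{lem:patadd-is-union-single-otherside} to translate $\patadd_{i,j}p\sim r$ into $p\sim r'$ in the case $\{i,j\}\subseteq\lbs(p)$, and use \cref{lem:missing-singleton-not-consistent} on the singletons $\{i\},\{j\}\in r'$ to force the representing pattern into the same case. The only cosmetic differences are that you package the case analysis as a clean pointwise equivalence $\patadd_{i,j}p\sim r\iff(p\sim r)\vee(p\sim r')$ before invoking representation (the paper does the case split inline), and that in the sub-case $p\sim r$ with $\{i,j\}\subseteq\lbs(p)$ you argue directly that the added sets get absorbed (the relation there should be $\reachable[p,r']$, not $\reachable[p,r]$, but the intended argument is clear), whereas the paper simply appeals to \cref{lem:patadd-dominates-orig}.
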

\begin{proof}
    Let $S, R \subseteq \Pat$, such that $R$ represents $S$. We show that $\patadd_{i, j}R$ represents $\patadd_{i,j}S$. Let $q\in \Pat$, such that there exists $p\in S$ with $\patadd_{i,j}p\sim q$. If $\{i,j\} \not \subseteq \lbs(p)$, then it holds that $\patadd_{i,j}p = p \sim q$, hence there exists $p'\in R$ with $p'\sim q$. By \cref{lem:patadd-dominates-orig}, it holds that $\patadd_{i,j}p' \sim q$.

    Now assume that $\{i,j\}\subseteq \lbs(p)$. By \cref{lem:patadd-is-union-single-otherside} it holds that $p\sim q\punion[ij,i,j]$. Hence, there exists $p'\in R$ with $p'\sim q\punion[ij,i,j]$. Since $\{i,j\} \subseteq \sing(q\punion[ij,i,j])$, it holds by \cref{lem:missing-singleton-not-consistent} that $\{i,j\}\subseteq \lbs(p')$. It follows from \cref{lem:patadd-is-union-single-otherside} that $\patadd_{i,j}p' \sim q$.
\end{proof}

Now we define a special family of patterns called the complete patterns, and show that we can represent any family of patterns using complete patterns only.
\begin{definition}\label{def:complete-patterns}
    A pattern $p\in \Pat$ is \emph{complete}, if $\lbs(p)=\sing(p)$. We denote by $\Cp(U)$ the family of all complete patterns over $U$.
\end{definition}
This means that a pattern $p$ is complete, if each label of $p$ appears as a singleton in $p$ as well.
\begin{lemma}\label{lem:complete-consistentcy}
    For all $p,q \in \Cp$ the following holds: $p\sim q$ implies that $\lbs(p)=\lbs(q)$.
\end{lemma}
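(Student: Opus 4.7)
The plan is to prove this by contradiction, leveraging \cref{lem:missing-singleton-not-consistent} which states that if $\sing(p)\setminus\lbs(q)\neq\emptyset$ then $p\not\sim q$. The completeness assumption is exactly what makes this lemma directly applicable in both directions.

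First I would assume, for contradiction, that $p\sim q$ but $\lbs(p)\neq\lbs(q)$. By symmetry, suppose without loss of generality that there exists some label $u\in\lbs(p)\setminus\lbs(q)$. Since $p$ is complete, by \cref{def:complete-patterns} we have $\lbs(p)=\sing(p)$, so $u\in\sing(p)$. Then $u\in\sing(p)\setminus\lbs(q)$, so $\sing(p)\setminus\lbs(q)\neq\emptyset$. Applying \cref{lem:missing-singleton-not-consistent} yields $p\not\sim q$, contradicting our assumption.

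The symmetric case, when $u\in\lbs(q)\setminus\lbs(p)$, follows by the same argument with the roles of $p$ and $q$ swapped, using that $q$ is complete and hence $u\in\sing(q)\setminus\lbs(p)$, giving $q\not\sim p$; and consistency is symmetric (which follows from the commutativity of join established in \cref{lem:join-is-group} together with \cref{def:consistency}).

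I do not anticipate any real obstacle here: the lemma is essentially an immediate corollary of \cref{lem:missing-singleton-not-consistent} once we unfold the definition of a complete pattern, since completeness converts the hypothesis ``a label is missing'' into ``a singleton is missing,'' which is precisely the obstruction to consistency captured by that earlier lemma.
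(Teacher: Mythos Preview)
Your proposal is correct and follows essentially the same approach as the paper: the paper's proof simply says that the claim follows from $\lbs(p)=\sing(p)$, $\lbs(q)=\sing(q)$, together with \cref{lem:missing-singleton-not-consistent}. You have unfolded this one-line argument into an explicit contradiction argument and made the appeal to symmetry of $\sim$ explicit, which is fine.
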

\begin{proof}
    This follows from the fact that $\lbs(p)=\sing(p)$ and $\lbs(q)=\sing(q)$, together with \cref{lem:missing-singleton-not-consistent}.
\end{proof}

\begin{definition}\label{def:forget-fix}
    Let $p\in \Pat$. For $i\in U$, we define the operations $\fix(p,i)$ and $\forget(p,i)$ as
    \begin{alignat*}{2}
        &\fix(p,i) &&= p\cup \big\{\{i\}\big\},\\
        &\forget(p,i) &&=\big\{S\setminus\{i\}\colon S\in p\big\},
    \end{alignat*}
    if $i \in \lbs(p)\setminus \sing(p)$, and $\fix(p,j) = \forget(p, j) = p$ if $j\notin \lbs(p)\setminus\sing(p)$.
    We say that $p' = \fix(p, i)$ results from $p$ by \emph{fixing} the label $i$, and $p'' = \forget(p,i)$ results from $p$ by \emph{forgetting} the label $i$.
    Given a pattern $p\in \Pat$, we define $\inc(p)= \lbs(p)\setminus \sing(p)$.
\end{definition}

\begin{lemma}\label{lem:forget-and-fix-rep}
    Let $p\in \Pat$, and $i \in U$. It holds that $\{\fix(p, i), \forget(p, i)\}$ represents $p$.
\end{lemma}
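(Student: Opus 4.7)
The plan is to treat the trivial case first and then the substantive one. If $i \notin \lbs(p)\setminus\sing(p)$, then by \cref{def:forget-fix} we have $\fix(p,i)=\forget(p,i)=p$, so $\{\fix(p,i),\forget(p,i)\} = \{p\}$ trivially represents $p$. For the remainder, assume $i \in \lbs(p)\setminus\sing(p)$, so there is some $S^\ast \in p$ with $i \in S^\ast$ and $|S^\ast|\geq 2$, and no set of $p$ equals $\{i\}$.

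For the ``easy'' direction, I would show that both $\fix(p,i)$ and $\forget(p,i)$ are dominated by $p$, so if either is consistent with some $q \in \Pat$, then so is $p$. For $\fix(p,i)=p\cup\{\{i\}\}$: since $\{i\}\subseteq S^\ast \in p$, \cref{lem:adding-subset-dom} gives $p \pdom \fix(p,i)$. For $\forget(p,i)$: this pattern is obtained from $p$ by successively removing the label $i$ from each set of $p$ that contains it; applying \cref{lem:removing-element-dom} once per such set (each application is valid because no such set equals $\{i\}$, so removal does not create an empty set, and the zero-set is preserved as it still contains $0$) yields $p \pdom \forget(p,i)$. Hence whichever of the two is consistent with $q$ witnesses $p \sim q$.

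For the other direction, assume $p \sim q$. I would split on whether $i \in \lbs(q)$.
If $i \in \lbs(q)$, pick some $T \in q$ with $i \in T$. I claim $\fix(p,i)\sim q$. Indeed $\fix(p,i)\join q$ adds the single new set $\{i\}$ to the input of the join; since $\{i\}\cap T = \{i\}\neq\emptyset$, $\{i\}$ is $\sim_I$-related to $T$, and $T$ already lies in the unique $\reachable[p,q]$-equivalence class (as $p\sim q$). Thus $\{i\}$ is absorbed into that same class and $\fix(p,i)\join q$ consists of a single set, namely the zero-set.
If $i \notin \lbs(q)$, I claim $\forget(p,i)\sim q$. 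The key observation is that for any $S \in p$ and $T \in q$, the intersection $S \cap T$ never contains $i$ (since $i \notin \lbs(q)$), so $S\cap T = (S\setminus\{i\})\cap T$. Consequently, any $(p,q)$-alternating walk is also a $(\forget(p,i),q)$-alternating walk after replacing each $S$ by $S\setminus\{i\}$, and vice versa. Since no set of $p$ equals $\{i\}$, no set is deleted by this replacement, and the zero-set of $p$ maps to the zero-set of $\forget(p,i)$. The equivalence classes of $\reachable[\forget(p,i),q]$ therefore coincide with those of $\reachable[p,q]$, so $\forget(p,i)\join q$ is also a singleton.

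The only delicate step is the second subcase above, where one must verify that removing $i$ from every set of $p$ neither creates an empty set nor alters intersections with sets of $q$; the assumption $i \in \lbs(p)\setminus\sing(p)$ together with $i\notin\lbs(q)$ makes both points immediate. Everything else is routine application of the domination lemmas and the alternating-walk characterisation of the join operation.
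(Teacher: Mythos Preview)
Your proof is correct and follows essentially the same approach as the paper's: both establish $p\pdom\fix(p,i)$ and $p\pdom\forget(p,i)$ via \cref{lem:adding-subset-dom} and \cref{lem:removing-element-dom}, then split on whether $i\in\lbs(q)$ and use the alternating-walk characterisation of the join to finish. You are slightly more explicit than the paper in handling the trivial case $i\notin\lbs(p)\setminus\sing(p)$ and in noting that \cref{lem:removing-element-dom} must be applied once per set containing $i$, but these are refinements rather than a different route.
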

\begin{proof}
    Let $p' = \fix(p, i)$, and $p'' = \forget(p,i)$.
    It holds by \cref{lem:adding-subset-dom} that $p\pdom p'$, and by \cref{lem:removing-element-dom} that $p\pdom p''$. Now let $q\in \Pat$ with $p\sim q$.
    If $i \notin \lbs(q)$, then it must hold that $p''\sim q$. In order to see this, let $S, T \in p'' \cup q$. Let $S', T'$ be two sets in $p\cup q$ that $S$ and $T$ result from by possibly removing $i$ from the sets. Then there exists a $(p,q)$-alternating walk from $S'$ to $T'$. This walk can be turned into a $(p'',q)$-alternating walk walk between $S$ and $T$ by removing $i$ from all sets on the walk.
    
    Now assume that $i\in \lbs(q)$. We claim that $p'\sim q$. Any $(p,q)$-alternating walk between two sets in $p\cup q$ is a $(p',q)$-alternating walk between the same sets. Hence, there is a walk between any pair of sets other than $\{i\}$. However, since $i\in \lbs(q)$, there exists $S\in q$ with $i\in S$. That means $\{i\},S$ is a $(p',q)$-alternating walk of length two between $\{i\}$ and $S$ and hence, $p'\sim q$ holds.
\end{proof}
\begin{definition}\label{def:complete-rep}
    Given a pattern $p\in \Pat$. Let $\ell = |\inc(p)|$, and $i_1, \dots i_{\ell}$ be the elements of $\inc(p)$ in an increasing order. Let $R_0 = \{p\}$, and $R_{j} = \forget(R_{j-1}, i_j) \cup \fix(R_{j-1}, i_j)$ for all $j\in[\ell]$. We define the family $R_p = R_{\ell}$ and call it a \emph{complete representation} of $p$.
\end{definition}
\begin{lemma}\label{lem:pattern-complete-rep}
    For each pattern $p\in \Pat$, the family $R_p$ has size at most $2^{|\inc(p)|}$, it represents $p$, and contains complete patterns only.
\end{lemma}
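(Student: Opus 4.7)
The plan is to prove all three assertions in tandem by induction on $j \in \{0, 1, \ldots, \ell\}$ for the intermediate families $R_j$, establishing the strengthened statement: (a) $|R_j| \leq 2^j$, (b) $R_j$ represents $p$, and (c) $\inc(p') \subseteq \inc(p) \setminus \{i_1, \ldots, i_j\}$ for every $p' \in R_j$. At $j = \ell$, property (c) yields $\inc(p') = \emptyset$ for every $p' \in R_\ell = R_p$, which is precisely the definition of completeness, so $R_p \subseteq \Cp$. The base case $R_0 = \{p\}$ is immediate.

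For the inductive step, assume the three properties hold for $R_{j-1}$. The size bound is routine: $R_j$ is the union of two images of $R_{j-1}$ under unary operations, so $|R_j| \leq 2 |R_{j-1}| \leq 2^j$. Representation is handled by \cref{lem:forget-and-fix-rep}, which gives that $\{\fix(p', i_j), \forget(p', i_j)\}$ represents $p'$ for each $p' \in R_{j-1}$; taking the union over $p' \in R_{j-1}$ shows that $R_j$ represents $R_{j-1}$, and chaining with the inductive hypothesis through the transitivity of representation yields that $R_j$ represents $p$.

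The crux is property (c). Fix $p' \in R_{j-1}$ and let $p'' \in \{\fix(p', i_j), \forget(p', i_j)\}$. If $i_j \notin \inc(p')$ then by definition $p'' = p'$ and the claim reduces to the inductive hypothesis together with $i_j \notin \inc(p')$. Otherwise both operations act nontrivially: $\fix(p', i_j) = p' \cup \{\{i_j\}\}$ turns $i_j$ into a singleton while preserving $\lbs$ and $\sing$ on every other label, and $\forget(p', i_j)$ strips $i_j$ from every set, removing $i_j$ from $\lbs$ and possibly turning former pairs $\{i_j, k\}$ into singletons $\{k\}$. In both cases $i_j$ leaves $\inc$, and for every label $k \neq i_j$ membership in $\lbs$ is preserved while membership in $\sing$ only grows, so $\inc(p'') \subseteq \inc(p') \setminus \{i_j\}$, which combined with the inductive hypothesis gives $\inc(p'') \subseteq \inc(p) \setminus \{i_1, \ldots, i_j\}$. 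The main obstacle is precisely this invariant: one must verify that later fix and forget operations do not reintroduce already-handled labels into $\inc$, and this is guaranteed by the locality of both operations with respect to the single label being processed.
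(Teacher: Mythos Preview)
Your proof is correct and follows essentially the same approach as the paper: induction on $j$ using \cref{lem:forget-and-fix-rep} for representation, the trivial doubling bound for size, and the invariant $\inc(q)\subseteq\{i_{j+1},\ldots,i_\ell\}$ for completeness. You supply more detail than the paper does for why $\fix$ and $\forget$ can only shrink $\inc$, but the structure and key ingredients are identical.
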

\begin{proof}
     It holds by \cref{lem:forget-and-fix-rep} that $R_j$ represents $R_{j-1}$ and hence, by transitivity of representation, $R_j$ represents $p$ for each $j\in[\ell]$. It can be easily seen by induction over $j\in[\ell]_0$, that $\inc(q)\subseteq\{i_{j+1}, \dots i_{\ell}\}$ for all $q \in R_j$. Hence, it holds for $q\in R_{\ell}$ that $\inc(q)=\emptyset$, and hence, $q\in \Cp$. Since $|R_j|\leq 2|R_{j-1}|$ and $|R_0| = 1$, it holds that $R_{\ell} \leq 2^{\ell}$.
\end{proof}
\begin{corollary}\label{cor:family-complete-rep}
    Let $\sol \subseteq \Pat$. Then the family $\rep = \bigcup_{p\in \sol}R_p \subseteq \Cp$ contains complete patterns only, and it represents $\sol$.
\end{corollary}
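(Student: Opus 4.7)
The plan is to deduce this corollary directly from \cref{lem:pattern-complete-rep}, which already provides, for each individual pattern $p$, a set $R_p \subseteq \Cp$ that represents $\{p\}$. The corollary amounts to observing that representation behaves well under unions: if each $R_p$ represents $\{p\}$, then $\bigcup_{p \in \sol} R_p$ represents $\bigcup_{p \in \sol} \{p\} = \sol$.

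The first claim, that $\rep \subseteq \Cp$, is immediate: \cref{lem:pattern-complete-rep} gives $R_p \subseteq \Cp$ for every $p$, and a union of subsets of $\Cp$ is a subset of $\Cp$.

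For the representation claim, I would fix an arbitrary $q \in \Pat$ and verify both implications of \cref{def:representation}. In the forward direction, suppose there exists $p \in \sol$ with $p \sim q$. By \cref{lem:pattern-complete-rep}, $R_p$ represents $\{p\}$, so there exists $p' \in R_p$ with $p' \sim q$; since $R_p \subseteq \rep$, this $p'$ witnesses the right-hand side. In the backward direction, suppose there exists $p' \in \rep$ with $p' \sim q$. By definition of $\rep$ as a union, $p' \in R_p$ for some $p \in \sol$. Applying the representation property of $R_p$ in the other direction, the existence of such a $p' \in R_p$ consistent with $q$ gives $p \sim q$, and this $p \in \sol$ witnesses the left-hand side.

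There is no real obstacle here; the only ingredient beyond bookkeeping is the already-established \cref{lem:pattern-complete-rep}. Writing the argument out carefully is essentially a one-line unfolding of the definition of representation combined with the fact that the quantifier ``there exists $p' \in \bigcup_{p\in \sol} R_p$'' distributes over the outer union.
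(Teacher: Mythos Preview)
Your proof is correct and follows essentially the same approach as the paper: both use \cref{lem:pattern-complete-rep} for containment in $\Cp$ and then verify the two directions of \cref{def:representation} by picking out which $R_p$ a given pattern of $\rep$ belongs to. The only cosmetic difference is that for the backward direction the paper invokes the domination $p \pdom p'$ (established inside the proof of \cref{lem:forget-and-fix-rep}), whereas you appeal directly to the two-sided representation property stated in \cref{lem:pattern-complete-rep}; your formulation is in fact slightly cleaner in this respect.
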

\begin{proof}
  It holds by \cref{lem:pattern-complete-rep} that $\rep \subseteq \Cp$. Let $q\in \Pat$ such that there exists $p'\in \rep$ with $p'\sim q$. It follows from the definition of $\rep$, that there exists $p\in \sol$ with $p'\in R_p$. Hence, it follows from \cref{lem:pattern-complete-rep} that $p\pdom p'$, and hence, $p\sim q$. On the other hand, let $q\in \Pat$ with $p\sim q$ for some $p\in \sol$. Then there exists by \cref{lem:pattern-complete-rep} a pattern $p'\in R_p\subseteq \rep$ with $p'\sim q$.
\end{proof}

\begin{observation}\label{obs:complete-closed-ops}
    It holds that the family of complete patterns is closed under both relabel operation $i\rightarrow j$ for all $i,j\in U$, and union operation $\punion$. On the other hand, it holds for $i,j\in U, i\neq j$, $R\subseteq \Cp$, $R' = \patadd_{i,j}R$ and for each $p\in R'$ that
    $\inc(p)\in \{\emptyset, \{i,j\}\}$. If $\inc(p) = \{i,j\}$, then $R_p$ contains at most four patterns, that result from $p$ by fixing or forgetting either labels independently.
\end{observation}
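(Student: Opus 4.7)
The plan is to verify each of the four assertions separately, all by unwinding the definitions in \cref{def:pattern,def:complete-patterns,def:patops,def:patadd,def:forget-fix,def:complete-rep}. No deep machinery is needed; the only delicate point will be tracking exactly which singletons survive the join with $[ij]$ inside $\patadd_{i,j}$.

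For closure under relabel $i\mapsto j$, I would take $p\in\Cp$, discharge the trivial cases $i=j$ and $i\notin\lbs(p)$ directly, and otherwise use completeness to locate the singleton $\{i\}\in p$. After relabeling, this becomes $\{j\}\in p_{i\to j}$, giving $j\in\sing(p_{i\to j})$. Every other label $\ell\in\lbs(p_{i\to j})$ either equals $j$, already handled, or originated from $p$ with $\ell\neq i$, in which case the singleton $\{\ell\}\in p$ is untouched by the relabel and survives in $p_{i\to j}$. Thus $\lbs(p_{i\to j})=\sing(p_{i\to j})$. Closure under $\punion$ is simpler: the definition preserves every set of $p$ and of $q$ except the two zero-sets, and non-zero singletons are never zero-sets, so every non-zero singleton of $p$ or of $q$ survives. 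Any non-zero label of $p\punion q$ originates in $p$ or in $q$ and therefore has a singleton copy there that is preserved.

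For the $\inc$-structure of $p=\patadd_{i,j}(q)$ with $q\in\Cp$, the case $\{i,j\}\not\subseteq\lbs(q)$ gives $p=q$ and $\inc(p)=\emptyset$. Otherwise $p=q\join[ij]$, and completeness places $\{i\},\{j\}\in q$. These two singletons intersect the set $\{i,j\}\in[ij]$ and therefore merge, together with every other set of $q$ meeting $\{i,j\}$, into a single block of $p$ containing both $i$ and $j$. Any singleton $\{k\}\in q$ with $k\notin\{i,j\}$ intersects neither $\{0\}$ nor $\{i,j\}$, so it forms its own equivalence class under the reachability relation and survives intact in $p$. Consequently $\lbs(p)=\lbs(q)$, every $k\in\lbs(q)\setminus\{i,j\}$ remains a singleton in $p$, and $\{i\},\{j\}$ are the only singletons that disappear, yielding $\inc(p)=\{i,j\}$.

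For the bound on $|R_p|$, I would invoke \cref{lem:pattern-complete-rep}, which gives $|R_p|\leq 2^{|\inc(p)|}$; when $\inc(p)=\{i,j\}$ this is at most $4$, and by the construction in \cref{def:complete-rep} these four patterns are precisely the outcomes of independently fixing or forgetting $i$ and then $j$. I do not anticipate any real obstacle; the whole observation amounts to careful bookkeeping of singletons under each pattern operation, and is fully supported by the lemmas already developed in this section.
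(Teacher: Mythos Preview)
The paper states this result as an \emph{observation} and provides no proof; your proposal correctly supplies the routine verification that the paper leaves implicit, unwinding the definitions exactly as intended and invoking \cref{lem:pattern-complete-rep} and \cref{def:complete-rep} for the final bound on $|R_p|$. There is nothing to compare against, and your argument is sound.
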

Now we show that we can restrict the families of partial solutions $\sol_x\ind{\budget,\weight}$ to representing families $\rep_x\ind{\budget, \weight}$ that contain complete patterns only.
\begin{definition}\label{def:rep-family}
    For $x\in V(\syntaxtree), \budget \in [n], \weight \in [n\cdot \W]$,
    we define the families $\rep_x\ind{\budget, \weight}$ as follows:
        For an introduce node $\mu_x = i(v)$, let $\weight = \weightf(v)$. We set 
        \[
        \rep_x\ind{1,\weight} = 
        \begin{cases}
        \{[0,i]\} & \colon v \neq v_0,\\
        \{[0], [0i,i]\} &\colon \text{otherwise}.
        \end{cases}
        \]
        We set $\rep_x\ind{0,0}$ to $\{[0]\}$, if $v \notin \term$, and to $\emptyset$ otherwise. For all other values of $\budget$ and $\weight$, we set $\rep_x\ind{\budget,\weight}= \emptyset$.
        
    For a relabel node $\mu_x = \relabel{i}{j}(\mu_{x'})$, we set
    \[
        \rep_x\ind{\budget,\weight}= \big(\rep_{x'}\ind{\budget,\weight}\big)_{i\rightarrow j}.
    \]
    For a union node $\mu_x = \mu_{x_1}\union \mu_{x_2}$, we set
    \[
        \rep_x\ind{\budget,\weight} = \bigcup\limits_{\substack{b_1+b_2 = \budget\\c_1+c_2=\weight}}\big(\rep_{x_1}\ind{b_1,c_1}\punion\rep_{x_2}\ind{b_2, c_2}\big).
    \]
    Finally, for a join node $\mu_x = \add{i}{j}(\mu_{x'})$, we define the families
    \[
        \rep'_x\ind{\budget,\weight}= \patadd_{i,j}\big(\rep_{x'}\ind{\budget,\weight}\big),
    \]
    we define the family 
    \[
        \rep_x\ind{\budget,\weight}= \bigcup\limits_{p\in \rep'_x\ind{\budget, \weight}}R_p.
    \]
\end{definition}
\begin{lemma}\label{lem:rep-represent-sol}
    For each $x\in V(\syntaxtree)$ and all values of $\budget, \weight$, it holds that $\rep_x\ind{\budget, \weight}$ contains complete patterns only, and it represents $\sol_x\ind{\budget, \weight}$.
\end{lemma}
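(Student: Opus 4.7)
The plan is to proceed by induction on the structure of the syntax tree $\syntaxtree$, mirroring exactly the recursive construction of $\rep_x\ind{\budget,\weight}$ from \cref{def:rep-family}. Since \cref{lem:recursive-sol} gives the same type of recursion for $\sol_x\ind{\budget,\weight}$, each inductive step reduces to a preservation property for the corresponding pattern operation, combined with closure of $\Cp$ under that operation.

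For the base case (introduce node $\mu_x = i(v)$) I would just inspect the few subcases by hand. The only non-obvious situation is $v=v_0$, where $\sol_x\ind{1,\weightf(v_0)} = \{[0i]\}$ fails to be complete; but the definition of $\rep_x$ was tailored to this, supplying exactly $R_{[0i]} = \{[0],[0i,i]\}$, which by \cref{lem:pattern-complete-rep} is a complete representation of $[0i]$. For the inductive step at a relabel node, I would invoke \cref{lem:ops-preserve-rep} so that $(\rep_{x'}\ind{\budget,\weight})_{i\to j}$ represents $(\sol_{x'}\ind{\budget,\weight})_{i\to j} = \sol_x\ind{\budget,\weight}$, and use \cref{obs:complete-closed-ops} for completeness. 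For a union node, \cref{lem:ops-preserve-rep} shows that each $\rep_{x_1}\ind{b_1,c_1}\punion\rep_{x_2}\ind{b_2,c_2}$ represents the corresponding $\sol_{x_1}\ind{b_1,c_1}\punion\sol_{x_2}\ind{b_2,c_2}$; a direct check from \cref{def:representation} shows that taking a set-theoretic union of families preserves representation, which lets one sum over all splits $b_1+b_2=\budget$, $c_1+c_2=\weight$, and completeness is again handed over by \cref{obs:complete-closed-ops}.

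The main obstacle is the join case, because $\patadd_{i,j}$ does \emph{not} preserve completeness: whenever $\{i,j\}\subseteq\lbs(p)$ for some complete $p\in\rep_{x'}$, the resulting pattern has $\inc=\{i,j\}$. The idea is to handle this in two stages. First, apply \cref{lem:patadd-preserve-rep} to conclude that the intermediate family $\rep'_x\ind{\budget,\weight} = \patadd_{i,j}(\rep_{x'}\ind{\budget,\weight})$ represents $\patadd_{i,j}(\sol_{x'}\ind{\budget,\weight}) = \sol_x\ind{\budget,\weight}$, using the inductive hypothesis and the recursive formula for $\sol_x$ at a join node from \cref{lem:recursive-sol}. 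Second, apply \cref{cor:family-complete-rep} to replace each $p\in\rep'_x\ind{\budget,\weight}$ by its complete representation $R_p$, yielding $\rep_x\ind{\budget,\weight}\subseteq\Cp$ that represents $\rep'_x\ind{\budget,\weight}$. Transitivity of representation (immediate from the quantifier structure of \cref{def:representation}) chains the two representations together and closes the induction.
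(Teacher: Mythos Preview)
Your proposal is correct and follows essentially the same approach as the paper: a bottom-up induction over $\syntaxtree$ invoking \cref{lem:ops-preserve-rep} and \cref{obs:complete-closed-ops} for relabel and union nodes, and \cref{lem:patadd-preserve-rep} followed by \cref{cor:family-complete-rep} plus transitivity for join nodes. You are slightly more explicit than the paper in two places---verifying $R_{[0i]}=\{[0],[0i,i]\}$ in the base case and noting that a set-theoretic union of representing families represents the union of the represented families---but these are only elaborations of steps the paper leaves implicit.
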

\begin{proof}
    We show this by a bottom-up induction over $\syntaxtree$.
    For an introduce node $i(v)$, it holds that $\sol_x\ind{\budget, \weight}$ is either a singleton or empty. It follows directly from the definitions of both families, that $\rep_x\ind{\budget, \weight} = R_p$ whenever $\sol_x\ind{\budget, \weight} = \{p\}$, and $\rep_x\ind{\budget, \weight}$ is empty otherwise.

    For a relabel node, it holds by induction hypothesis, that $\rep_{x'}\ind{\budget, \weight}$ represents $\sol_{x'}\ind{\budget,\weight}$. Hence, it follows from \cref{lem:ops-preserve-rep}, that $\rep_x\ind{\budget,\weight}$ represents $\sol_x\ind{\budget, \weight}$ and by \cref{obs:complete-closed-ops} it contains complete patterns only.
Similarly, for a union node, it holds by induction hypothesis, that $\rep_{x_1}\ind{b, c}$ represents $\sol_{x_1}\ind{b,c}$, and $\rep_{x_2}\ind{b,c}$ represents $\sol_{x_2}\ind{b,c}$ for all values of $b$ and $c$. Hence, it holds by \cref{lem:ops-preserve-rep} that $\rep_x\ind{\budget, \weight}$ represents $\sol_x\ind{\budget, \weight}$, and by \cref{obs:complete-closed-ops} it holds that it contains complete patterns only.

Finally, for a join node, it holds by induction hypothesis, that $\rep_{x'}\ind{\budget, \weight}$ represents $\sol_{x'}\ind{\budget, \weight}$. Hence, by \cref{lem:patadd-preserve-rep}, it holds that $\rep'_x\ind{\budget,\weight}$ represents $\sol_x\ind{\budget, \weight}$. If follows from \cref{cor:family-complete-rep} and by transitivity of representation, that $\rep_x\ind{\budget,\weight}\subseteq \Cp$, and it represents $\sol_x\ind{\budget, \weight}$.
\end{proof}
\begin{corollary}\label{cor:solution-if-rep-consist-0}
    The graph $G$ admits a Steiner tree of size $\budget$ and weight $\weight$, if and only if it holds that $[0]\in\rep_r\ind{\budget, \weight}$.
\end{corollary}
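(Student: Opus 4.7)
The plan is to chain three observations: first, the characterization in Lemma~\ref{lem:solution-if-sol-consist-0} that reduces the existence of a Steiner tree to the existence of some $p \in \sol_r\ind{\budget,\weight}$ with $p \sim [0]$; second, Lemma~\ref{lem:rep-represent-sol}, which states that $\rep_r\ind{\budget,\weight}$ represents $\sol_r\ind{\budget,\weight}$ and contains only complete patterns; and third, a short argument showing that among complete patterns the only pattern consistent with $[0]$ is $[0]$ itself. The last step is the only substantive observation, and it is essentially immediate from the definitions.

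In more detail, I would argue as follows. By Lemma~\ref{lem:solution-if-sol-consist-0}, $G$ admits a Steiner tree of size $\budget$ and weight $\weight$ if and only if there exists $p \in \sol_r\ind{\budget,\weight}$ with $p \sim [0]$. By Lemma~\ref{lem:rep-represent-sol}, $\rep_r\ind{\budget,\weight}$ represents $\sol_r\ind{\budget,\weight}$, so by the definition of representation (Definition~\ref{def:representation}) applied with $q = [0]$, this is equivalent to the existence of some $p' \in \rep_r\ind{\budget,\weight}$ with $p' \sim [0]$.

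It therefore remains to show that the only complete pattern consistent with $[0]$ is $[0]$ itself, from which the ``if'' direction of the corollary follows (since $[0] \in \rep_r\ind{\budget,\weight}$ is exactly witnessing the existence of such a $p'$), and the ``only if'' direction follows because any $p' \in \rep_r\ind{\budget,\weight} \subseteq \Cp$ consistent with $[0]$ must equal $[0]$. To see this, suppose $p' \in \Cp$ and $p' \sim [0]$. Since $[0] = \{\{0\}\}$, the join $p' \join [0]$ equals $p'$, so $p' \sim [0]$ means $p'$ consists of a single set, namely its zero-set $Z_{p'}$. Then $\sing(p') = \emptyset$ (there are no singletons other than possibly $\{0\}$, which is excluded from $\sing$ by Definition~\ref{def:pattern}), and completeness of $p'$ forces $\lbs(p') = \sing(p') = \emptyset$. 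Hence $Z_{p'} = \{0\}$ and $p' = [0]$, as required.

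I do not expect any real obstacle: the result is a clean wrap-up that packages Lemmas~\ref{lem:solution-if-sol-consist-0} and~\ref{lem:rep-represent-sol} together with the trivial characterization of complete patterns consistent with $[0]$. The only point that warrants care is making sure the ``if and only if'' is stated with a pattern \emph{consistent with} $[0]$ at the $\sol_r$ level, and then correctly using completeness to upgrade ``some $p'\in\rep_r$ with $p'\sim[0]$'' to ``$[0]\in\rep_r$''.
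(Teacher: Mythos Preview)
Your proposal is correct and follows essentially the same approach as the paper: chain \cref{lem:solution-if-sol-consist-0}, \cref{lem:rep-represent-sol}, and the fact that $[0]$ is the only complete pattern consistent with $[0]$. The only cosmetic difference is that the paper obtains this last fact by invoking \cref{lem:complete-consistentcy} (which gives $\lbs(p')=\lbs([0])=\emptyset$ directly), whereas you argue it from first principles via $p'\join[0]=p'$ and completeness; both arguments are equally short and valid.
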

\begin{proof}
    It holds by \cref{lem:complete-consistentcy} that $[0]$ is the only complete pattern consistent with the pattern $[0]$. Hence, it holds by \cref{lem:rep-represent-sol} that $[0]\in \rep_r\ind{\budget, \weight}$ if and only if there exists a pattern $p\in \sol_r\ind{\budget, \weight}$ with $p\sim [0]$. It follows from \cref{lem:solution-if-sol-consist-0} that this is the case if and only if $G$ admits a Steiner tree of size $\budget$ and weight $\weight$.
\end{proof}

\section{Counting representation}\label{sec:rep-count}
\subsection{Actions and action sequences}
\begin{definition}
	Let $\op:\Pat^k\rightarrow \Pat$ be a $k$-ary pattern operation. We define the operation $\exc{\op}:\left(2^{\Pat}\right)^k\rightarrow 2^{\Pat}$ over subsets of $\Pat$ as 
	\[
	\exc{\op}(S_1, \dots, S_k) = \bigdelta\limits_{\substack{(p_1,\dots, p_k)\in\\S_1\times \dots \times S_k}} \{\op(p_1,\dots, s_k)\},
	\]
for all $S_1, \dots, S_k\subseteq \Pat$, and we call it the \emph{exclusive} version of $\op$. We compare this operation to the notion $\op(S_1, \dots, S_k)$ given by the union over all resulting patterns from the operation.
Given two sets of patterns $S_1, S_2 \subseteq \Pat$, we define $S_1\exjoin S_2, S_1\exunion S_2, \exrel{S_1}{i}{j}$ and $\exadd_{i,j}S_1$ as the exclusive version of join, union, relabel and $\patadd_{i,j}$ of $S_1$ (and $S_2$) respectively.
\end{definition}
\begin{lemma}\label{lem:exclusion-preserve-parity}
	Let $\op:\Pat^k\rightarrow \Pat$ be a $k$-ary pattern operation, and $S_1,\dots S_k\subseteq \Pat$.
    It holds for $q\in \Pat$, that
    \[
        |\{p \in \exop{\op}(S_1,\dots S_k)\colon p \sim q\}|\bquiv |\{(p_1,\dots p_k)\in S_1\times\dots\times S_k\colon \op(p_1,\dots p_k)\sim q\}|        
    \]
\end{lemma}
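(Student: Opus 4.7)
The plan is to unfold the definition of the exclusive operator and then partition the tuples on the right-hand side by their image under $\op$. Observe first that, since $\exop{\op}(S_1,\dots,S_k)$ is defined as a symmetric difference of singleton sets, a pattern $p$ lies in $\exop{\op}(S_1,\dots,S_k)$ exactly when the number of preimages of $p$ in $S_1\times\dots\times S_k$ is odd, i.e.\
\[
  [p\in\exop{\op}(S_1,\dots,S_k)]\bquiv \big|\op^{-1}(p)\cap(S_1\times\dots\times S_k)\big|.
\]
This is the only nontrivial ingredient, and it follows directly from the meaning of $\bigdelta$ on singletons.

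Next I would partition the right-hand side of the claimed congruence according to the value $p=\op(p_1,\dots,p_k)$. This yields
\[
  \big|\{(p_1,\dots,p_k)\in S_1\times\dots\times S_k\colon \op(p_1,\dots,p_k)\sim q\}\big|
  = \sum_{p\in\Pat}[p\sim q]\cdot\big|\op^{-1}(p)\cap(S_1\times\dots\times S_k)\big|.
\]
Reducing modulo two and substituting the first observation gives
\[
  \sum_{p\in\Pat}[p\sim q]\cdot\big|\op^{-1}(p)\cap(S_1\times\dots\times S_k)\big|
  \bquiv \sum_{p\in\Pat}[p\sim q]\cdot[p\in\exop{\op}(S_1,\dots,S_k)],
\]
and the last sum is exactly $|\{p\in\exop{\op}(S_1,\dots,S_k)\colon p\sim q\}|$, which closes the argument.

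There is no real obstacle here: the statement is a book-keeping identity, and the only step requiring care is justifying the parity reading of the symmetric difference (which is standard, since $\bigdelta$ on a multiset of singletons keeps exactly those elements of odd multiplicity). The lemma does not use any structural property of patterns, consistency, or of the operation $\op$ beyond the fact that $\op$ is a well-defined function, so the proof is identical regardless of whether $\op$ is instantiated as join, union, relabel, or $\patadd_{i,j}$.
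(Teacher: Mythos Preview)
Your proof is correct and follows essentially the same approach as the paper: both arguments rest on the single observation that a symmetric difference of singletons keeps exactly the elements of odd multiplicity, then partition the tuples by their image under $\op$. The paper runs the computation from left to right in a compact three-line chain (restricting the big $\bigdelta$ to tuples with $\op(p_1,\dots,p_k)\sim q$, then using $|S\triangle T|\bquiv|S|+|T|$), whereas you start from the right-hand side and substitute the parity-of-preimages identity; this is a purely presentational difference, not a different argument.
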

\begin{proof}
    \begin{align*}
        \left|\{p \in \exop{\op}(S_1,\dots S_k)\colon p \sim q\}\right| &\bquiv
        \left|\bigdelta\limits_{\substack{(p_1,\dots p_k)\in S_1\times\dots\times S_k\\ \op(p_1\dots p_k) \sim q}}\{\op(p_1,\dots p_k)\}\right|\\
        &\bquiv\sum\limits_{\substack{(p_1,\dots p_k)\in S_1\times\dots\times S_k\\ \op(p_1\dots p_k) \sim q}}|\{\op(p_1,\dots p_k)\}|\\
        &\bquiv |\{(p_1,\dots p_k)\in S_1\times\dots\times S_k \colon \op(p_1,\dots p_k)\sim q\}|    
    \end{align*}
\end{proof}

For $p\in \Cp$, and $p' = \patadd_{i,j}p$, it holds that either $R_{p'} = p'$, or $R_{p'}$ contains four patterns that result from $p'$ by either forgetting or fixing both labels $i$ and $j$ independently. We can enumerate these patterns as follows

\begin{definition}\label{def:repactions}
    We define a new operation called \emph{actions} $\action\colon \Pat \times [4] \rightarrow \Pat$ as follows: Given a pattern $p \in \Pat$, if $p\in \Cp$, we define $\action(p, \ell)= p$, for all $\ell\in [4]$.
    For $\inc(p)=\{i\}$, we define 
    \begin{itemize}
        \item $\action(p, 1) = \fix(p, i)$,
        \item $\action(p, 2) = \forget(p, i)$.
    \end{itemize}
    We set $\action(p, 3)$ and $\action(p, 4)$ to be undefined in this case.

    If $\inc(p)=\{i, j\}$ for $i < j$, we define 
    \begin{itemize}
        \item $\action(p, 1) = \fix(\fix(p, i), j)$,
        \item $\action(p, 2) = \forget(\fix(p, i), j)$,
        \item $\action(p, 3) = \fix(\forget(p, i), j)$,
        \item $\action(p, 4) = \forget(\forget(p, i), j)$.
    \end{itemize}
    In all other cases we set $\action(p, \ell)$ to be undefined for all $\ell\in [4]$.
\end{definition}

Using this enumeration, we can assign to each pattern in $\rep_x\ind{\budget, \weight}$ a weight, given by the actions taken to build $p$ at each introduce or join node in the subtree of $\syntaxtree_x$.
We use this weights to isolate different representations of the same pattern in $\sol_x\ind{\budget, \weight}$.

\begin{definition}\label{def:action-sequence}
    For $x\in V(\syntaxtree)$ let $\pi:V(\syntaxtree_x) \rightarrow [4]$ be some mapping, such that $\pi(x') \in [2]$ whenever $x'$ is an introduce node, and $\pi(x') = 1$ whenever $x'$ is a union node or a relabel node.
    We call $\pi$ an \emph{action sequence}.
    Let $\D\in \mathbb{N}$ be some value that will be fixed later, and let $\actionf:V(\syntaxtree)\times[4]\rightarrow [\D]$ be a weight function. 
    We define the weight of $\pi$ as
    \[\actionf(\pi)= \sum\limits_{\substack{x'\in V(\syntaxtree_x),\\x'\text{ is a join or introduce node}}}\actionf\big(x',\pi(x')\big).\]

    Given a partial solution $S \subseteq V_x$ where $\term_x\subseteq S$, the action sequence $\pi$ defines a pattern $p^{\pi}_S$, that can be defined recursively over $\syntaxtree_x$, where for an introduce node $i(v)$, let $v$ be the vertex introduced at this node. If $v\notin S$, then we set $p^{\pi}_S=[0]$. Otherwise, we set $p^{\pi}_S = \action(\pat_x(\{v\}), \pi(x))$. For a relabel or union node, $p^{\pi}_S$ results by applying the corresponding operation on the patterns resulting at the children of $x$ in the natural way. For a join node $\mu_x=\add{i}{j}(\mu_{x'})$, let $p' = \patadd_{i,j}p^{\pi'}_S$, where $\pi'$ is the restriction of $\pi$ to the nodes in $V(\syntaxtree_{x'})$. We set $p^{\pi}_S = \action(p', \pi(x))$.
    We say that $\pi$ is an action sequence generating the pattern $p^{\pi}_S$ from $S$ in $G_x$.
\end{definition}

\begin{lemma}\label{lem:action-seq-represent}
    Let $x\in V(\syntaxtree)$, and two values $\budget, \weight$. It holds for $p\in\Cp$ that $p\in \rep_x(\budget, \weight)$ if and only if there exists a partial solution $S\subseteq V_x$ of size $\budget$ and weight $\weight$ with $\term_x\subseteq S$, and an action sequence $\pi$ that generates $p$ from $S$ in $G_x$.
\end{lemma}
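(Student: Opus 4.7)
The plan is to prove both directions simultaneously by bottom-up induction on $\syntaxtree_x$. In each case, the key observation is that the recursive definition of $p^{\pi}_S$ mirrors exactly the recursive definition of $\rep_x[\budget,\weight]$ in \cref{def:rep-family}, while \cref{lem:recursive-sol} guarantees that the pattern operations track $\pat_x(S)$ correctly under the introduce/union/relabel/join operations.

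\textbf{Base case.} For an introduce node $\mu_x = i(v)$, I would split on whether $v = v_0$ and whether $v\in\term$. If $v\notin\term$, the only valid partial solutions are $S=\emptyset$ (giving pattern $[0]\in\rep_x[0,0]$) and $S=\{v\}$; the matching action value at $x$ is forced (any $\pi(x)\in[2]$ when $v\neq v_0$, since $\pat_x(\{v\})=[0,i]$ is already complete). If $v=v_0$, then $\pat_x(\{v_0\})=[0i]$ has $\inc=\{i\}$, so $R_{[0i]}=\{[0],[0i,i]\}$, which matches $\rep_x[1,\weightf(v_0)]$, and the two choices are realized by $\action([0i],1)=[0i,i]$ and $\action([0i],2)=[0]$. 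All other tables of $\rep_x$ are empty, matching the absence of valid partial solutions.

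\textbf{Inductive step for relabel and union.} For a relabel node $\relabel{i}{j}(\mu_{x'})$, I would use that $\pat_x(S)=(\pat_{x'}(S))_{i\to j}$ by \cref{lem:recursive-sol}, that $\Cp$ is closed under relabel (\cref{obs:complete-closed-ops}), and that any action sequence $\pi$ has $\pi(x)=1$ at a relabel node, so $p^{\pi}_S = (p^{\pi'}_S)_{i\to j}$. The equivalence at node $x$ then follows from the induction hypothesis at $x'$. The union case is analogous using $\pat_x(S_1\cup S_2)=\pat_{x_1}(S_1)\punion\pat_{x_2}(S_2)$ from \cref{lem:recursive-sol}, closure of $\Cp$ under $\punion$, and the decomposition of $S$, $\budget$, $\weight$ into contributions from the two children (with action sequences glued together, again with $\pi(x)=1$).

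\textbf{Inductive step for the join node.} This is the step that genuinely uses actions, and is the main (mild) obstacle. For $\mu_x = \add{i}{j}(\mu_{x'})$, I would argue: $p\in\rep_x[\budget,\weight]$ means $p\in R_{p''}$ for some $p''\in\rep'_x[\budget,\weight]=\patadd_{i,j}(\rep_{x'}[\budget,\weight])$, i.e.\ $p''=\patadd_{i,j}\,p'$ for some $p'\in\rep_{x'}[\budget,\weight]$. By the induction hypothesis there exist $S\subseteq V_{x'}$ and $\pi'$ on $V(\syntaxtree_{x'})$ generating $p'$ from $S$; \cref{lem:recursive-sol} then gives $\patadd_{i,j}\,p'=\pat_x(S)$ so $p''$ is the natural intermediate pattern at $x$. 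By \cref{obs:complete-closed-ops}, $\inc(p'')\in\{\emptyset,\{i\},\{i,j\}\}$ and every pattern in $R_{p''}$ is $\action(p'',\ell)$ for some $\ell\in[4]$ (taking any $\ell$ when $p''$ is already complete). Extending $\pi'$ by $\pi(x)=\ell$ yields an action sequence generating $p$ from $S$ in $G_x$. Conversely, any action sequence $\pi$ at a join node factors as $\pi'$ on $\syntaxtree_{x'}$ plus $\pi(x)\in[4]$; the induction hypothesis produces $p'\in\rep_{x'}[\budget,\weight]$, and $p=\action(\patadd_{i,j}\,p',\pi(x))$ lands in $R_{\patadd_{i,j}\,p'}\subseteq\rep_x[\budget,\weight]$. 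The budget and weight track through all four cases because join/relabel do not alter $|S|$ or $\weightf(S)$, and introduce/union contribute additively in the obvious way.
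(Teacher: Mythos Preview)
Your proof follows the same inductive approach as the paper and is essentially correct. However, at the join node you make a false (and, fortunately, unnecessary) claim: you write that \cref{lem:recursive-sol} gives $\patadd_{i,j}\,p' = \pat_x(S)$, but $p' = p^{\pi'}_S$ is a complete pattern \emph{representing} $\pat_{x'}(S)$, not $\pat_{x'}(S)$ itself---earlier fix/forget actions along $\pi'$ may have changed its label set, so in general $\patadd_{i,j}\,p' \neq \patadd_{i,j}\,\pat_{x'}(S) = \pat_x(S)$. The correct reason why extending $\pi'$ by $\pi(x)=\ell$ generates $p$ is simply the recursive definition of $p^{\pi}_S$ at a join node: by \cref{def:action-sequence} it equals $\action(\patadd_{i,j}\,p^{\pi'}_S,\pi(x)) = \action(p'',\ell) = p$, with no appeal to $\pat_x(S)$; this is exactly what the paper does. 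A second minor slip: \cref{obs:complete-closed-ops} gives $\inc(p'')\in\{\emptyset,\{i,j\}\}$, not $\{\emptyset,\{i\},\{i,j\}\}$; the singleton case cannot arise from $\patadd_{i,j}$ applied to a complete pattern.
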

\begin{proof}
    We prove the claim by induction over the syntax tree. For an introduce node, the claim holds by \cref{def:rep-family} and \cref{def:action-sequence}.
    For a relabel node $\mu_x = \relabel{i}{j}(\mu_{x'})$, and $p\in \Cp$, it holds that $p\in \rep_x\ind{\budget, \weight}$ if and only if there exists a pattern $p'\in\rep_{x'}\ind{\budget, \weight}$, with $p'_{i\rightarrow j} = p$. By induction hypothesis, this is the case if and only if there exists a partial solution $S$ over $G_{x'}$ of size $\budget$ and weight $\weight$, and an action sequence $\pi:V(\syntaxtree_{x'})\rightarrow [4]$ with $p^{\pi}_S = p'$. Let $\pi' = \pi_{|x\mapsto 1}$. It holds by \cref{def:action-sequence} that $p^{\pi'}_S = p$.

    Similarly for a union node $\mu_x = \mu_{x_1}\union \mu_{x_2}$, and $p\in \Cp$, it holds that $p\in \rep_x\ind{\budget, \weight}$ if and only if there exist two patterns $p_1 \in \rep_{x_1}\ind{b_1, c_1}$, and $p_2\in\rep_{x_2}\ind{b_2, c_2}$, where $b_1+b_2=\budget$, and $c_1+c_2=\weight$, such that $p_1\punion p_2 = p$. By induction hypothesis, this is the case if and only if there exist two partial solutions $S_1 \subseteq V_{x_1}$ of size $b_1$ and weight $w_1$, and $S_2\subseteq V_{x_2}$ of size $b_2$ and weight $c_2$, and two action sequences $\pi_1: V(\syntaxtree_{x_1})\rightarrow [4]$ and $\pi_2: V(\syntaxtree_{x_2})\rightarrow [4]$ with $p^{\pi_1}_{S_1} = p_1$, and $p^{\pi_2}_{S_2} = p_2$. It holds for $\pi = (\pi_1\dot \cup \pi_2)|_{x\mapsto 1}$ and $S = S_1\cup S_2$ that $S$ is a partial solution over $V_x$ of size $\budget$ and weight $\weight$ that contains $\term_x$, and that $p^{\pi}_S = p$.

    Finally, for a join node $\mu_x = \add{i}{j}(\mu_{x'})$. Let $p\in \Cp$. Then it holds that $p \in \rep_x\ind{\budget, \weight}$ if and only if there exists $p' \in \rep'_x\ind{\budget, \weight}$ where either $\inc(p')=\emptyset$ and $p'=p$, in this case we set $\ell = 1$, or $\inc(p')=\{i,j\}$, and there exists $\ell\in[4]$ with $\action(p', \ell) = p$. However, this holds if and only if there exists a pattern $p'' \in \rep_{x'}\ind{\budget, \weight}$ with $\patadd_{i,j}p'' = p'$. By induction hypothesis, this is the case if and only if there exists a partial solution $S\subseteq V_{x'} = V_x$ of size $\budget$ and weight $\weight$, and an action sequence $\pi'$ over $V_{x'}$ with $p^{\pi'}_{S} = p''$. It holds for $\pi = \pi'_{|x\mapsto \ell}$ that $p^{\pi}_S = \action(\patadd(p'', \ell), \ell) = \action(p', \ell) = p$.
\end{proof}
\subsection{Counting action sequences}
\begin{definition}\label{def:drep-family}
    For $x\in V(\syntaxtree), \budget \in [n], \weight \in [n\cdot \W], \actionweight \in [|V(\syntaxtree)|\cdot D]$,
    we define the families $\drep_x\ind{\budget, \weight, \actionweight}$ as follows:
    For an introduce node $i(v)$, let $c = \weightf(v)$. For $\ell\in[2]$, we set
    \[\drep_x\ind{1,c,\actionf(x, \ell)}= \big\{\action\big(\pat_x(\{v\}), \ell\big)\big\},\]
    and set 
    $\drep_x\ind{0,0,\actionf(x,\ell)}$ to $\{[0]\}$ if $v \notin \term$, and to $\emptyset$ otherwise. For all other values of $\budget, \weight$ and $\actionweight$, we set $\drep_x\allind= \emptyset$.

    For a relabel node $\mu_x = \relabel{i}{j}(\mu_{x'})$ we set
    \[
        \drep_x\ind{\budget,\weight,\actionweight}=  \exrel{\big(\drep_{x'}\ind{\budget,\weight,\actionweight}\big)}{i}{j}.
    \]
    For a union node $\mu_x = \mu_{x_1}\union \mu_{x_2}$ we set
    \[
        \drep_x\ind{\budget,\weight,\actionweight} =
        \bigdelta\limits_{\substack{b_1+b_2 = \budget\\c_1+c_2=\weight\\d_1+d_2=\actionweight}}\big(\drep_{x_1}\ind{b_1,c_1,d_1}\exunion\drep_{x_2}\ind{b_2, c_2,d_2}\big).
    \]
    Finally, for a join node $\mu_x = \add{i}{j}(\mu_{x'})$, we define the families
    \[
        \drep'_x\ind{\budget,\weight,\actionweight}= \exadd_{i,j}\big(\drep_{x'}\ind{\budget,\weight,\actionweight}\big).
    \]
    Now we set
    \[
        \drep_x\ind{\budget,\weight, \actionweight}= \bigdelta\limits_{\ell\in[4]}
            \exac\big(\drep'_{x}\ind{\budget,\weight,\actionweight - \actionf(x, \ell)}, i\big).
    \]
\end{definition}
\begin{lemma}\label{lem:drep-count-actionseq}
    For $x\in V(\syntaxtree),$ and all values $\budget, \weight$ and $\actionweight$, it holds for $p\in \Cp$ that $p\in \drep_x\allind$ if and only if there exist odd many pairs $(S, \pi)$ where $S \subseteq V_x$ with $\term_x \subseteq S, |S| = \budget, \weightf(S) = \weight$ and $\pi$ is an action sequence of weight $\actionweight$ generating $p$ from $S$ in $G_x$.
\end{lemma}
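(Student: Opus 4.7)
The plan is to prove the lemma by bottom-up induction over $\syntaxtree$, mirroring the recursive construction of $\drep_x$ in \cref{def:drep-family}. The workhorse is the elementary identity that for any $k$-ary pattern operation $\op$ and families $S_1,\dots,S_k\subseteq\Pat$, a pattern $p$ lies in $\exop(S_1,\dots,S_k)$ if and only if the number of tuples $(p_1,\dots,p_k)\in S_1\times\cdots\times S_k$ with $\op(p_1,\dots,p_k)=p$ is odd; this is immediate from the definition of iterated symmetric difference and is the engine behind \cref{lem:exclusion-preserve-parity}. I apply this fact to each of $\exrel{}{i}{j}$, $\exunion$, $\exadd_{i,j}$, and $\exac$.

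For the base case of an introduce node $\mu_x=i(v)$, the only candidate sets $S\subseteq V_x$ are $\emptyset$ (excluded when $v\in\term$) and $\{v\}$, and the only non-trivial action value is $\pi(x)\in[2]$. The generated pattern is determined explicitly by $\ell=\pi(x)$ as $\action(\pat_x(\{v\}),\ell)$ when $v\in S$, and as $[0]$ otherwise. Matching this enumeration directly against \cref{def:drep-family} at an introduce node establishes the claim.

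For the inductive step, a relabel or union node presents no new difficulty. At a relabel node $\mu_x=\relabel{i}{j}(\mu_{x'})$, a pair $(S,\pi)$ generating $p$ at $x$ corresponds bijectively to a pair $(S,\pi')$ at $x'$ generating some $p''$ with $p''_{i\to j}=p$ under the same triple $(\budget,\weight,\actionweight)$, since $\pi(x)=1$ contributes nothing to $\actionf(\pi)$. Applying the induction hypothesis for each $p''$, summing modulo two over such $p''$, and invoking the exclusive-operation identity for $\exrel{}{i}{j}$ yields $[p\in\drep_x\allind]$. At a union node, a pair $(S,\pi)$ decomposes uniquely as $S=S_1\cup S_2$ with $\pi=(\pi_1\dot\cup\pi_2)|_{x\mapsto 1}$; summing over admissible splittings of $(\budget,\weight,\actionweight)$ and over $(p_1,p_2)$ with $p_1\punion p_2=p$, the induction hypothesis at $x_1,x_2$ together with the exclusive-operation identity for $\exunion$ gives the required parity.

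The main obstacle is the join step, because it stacks two exclusive operations and because the action label $\pi(x)\in[4]$ now genuinely contributes to $\actionweight$. An action sequence $\pi$ at $x$ decomposes as a sequence $\pi'$ on $V(\syntaxtree_{x'})$ together with a choice $\ell=\pi(x)\in[4]$, with $\actionf(\pi)=\actionf(\pi')+\actionf(x,\ell)$ and generated pattern $p^{\pi}_S=\action(\patadd_{i,j}(p^{\pi'}_S),\ell)$. Fix $\ell\in[4]$ and a ``midpoint'' pattern $p'\in\Cp$: the induction hypothesis applied at $x'$ with action-weight $\actionweight-\actionf(x,\ell)$ shows that the number of pairs $(S,\pi')$ generating a chosen intermediate $p''$ has the parity of $[p''\in\drep_{x'}\ind{\budget,\weight,\actionweight-\actionf(x,\ell)}]$; summing over all $p''$ with $\patadd_{i,j}(p'')=p'$ and invoking the exclusive-operation identity for $\exadd_{i,j}$ gives $[p'\in\drep'_x\ind{\budget,\weight,\actionweight-\actionf(x,\ell)}]$. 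Summing further over $p'$ with $\action(p',\ell)=p$, then over $\ell\in[4]$, and applying the identity once more to $\exac$ and to the outer $\bigdelta_{\ell}$ reproduces exactly the defining expression for $\drep_x\ind{\budget,\weight,\actionweight}$. Thus the total parity of pairs $(S,\pi)$ generating $p$ at $x$ equals $[p\in\drep_x\allind]$, completing the induction.
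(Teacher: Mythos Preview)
Your proposal is correct and follows essentially the same approach as the paper: bottom-up induction over $\syntaxtree$, using at each node the identity that $p\in\exop(S_1,\dots,S_k)$ iff oddly many tuples map to $p$, together with the natural bijection between action sequences at $x$ and their restrictions to the children (plus, at a join node, the choice $\ell=\pi(x)\in[4]$ shifting the action-weight by $\actionf(x,\ell)$). One small slip: at the join node your ``midpoint'' pattern $p'=\patadd_{i,j}(p'')$ need not lie in $\Cp$ (it may have $\inc(p')=\{i,j\}$), so $p'$ should range over $\Pat$ rather than $\Cp$; this does not affect your argument.
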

\begin{proof}
    Let $r\nodeind(p)$ be the number (modulo 2) of pairs $(S, \pi)$ meeting all conditions stated in the lemma.
    We prove by induction over $\syntaxtree$ that 
    \[p\in \drep_x\allind \iff r\nodeind(p) \bquiv 1.\]
    Equivalently, we show that
    \[r\nodeind(p) \bquiv [p\in \drep_x\allind].\]
    The claim holds trivially for an introduce node.
    For a relabel node $\mu_x = \relabel{i}{j}(\mu_{x'})$, it holds that 
\[        \drep_x\ind{\budget,\weight,\actionweight}
        =  \exrel{\big(\drep_{x'}\ind{\budget,\weight,\actionweight}\big)}{i}{j}
        = \bigdelta\limits_{p\in \drep_{x'}\allind} \{p_{i\rightarrow j}\}.
    \]
    Each sequence $\pi$ generating $p$ from $S$ in $G_x$ results from some sequence $\pi'$ generating some pattern $p'$ from $S$ in $G_{x'}$, such that $p'_{i\rightarrow j} = p$, and vice versa, where it holds that $\pi = \pi'|_{x\mapsto 1}$. Hence, it holds that
    \begin{align*}
        r\nodeind(p) 
        &\bquiv \sum\limits_{p'_{i\rightarrow j} = p} r\ind{x', \budget, \weight, \actionweight}(p')\\
        &\bquiv \sum\limits_{p'_{i\rightarrow j} = p} [p'\in \drep_{x'}\allind]\\
        &\bquiv |\{p'\in\drep_{x'}\allind\colon p'_{i\rightarrow j}=p\}|,
    \end{align*}
    where the second congruence follows from the induction hypothesis. Hence, it holds that
    \begin{align*}
        &p\in \drep_x\ind{\budget,\weight,\actionweight} \iff\\
        &p\in \bigdelta\limits_{p\in \drep_{x'}\allind} \{p_{i\rightarrow j}\} \iff\\
        &|\{p'\in \drep_{x'}\allind\colon p'_{i\rightarrow j} = p\}| \bquiv 1 \iff\\
        &r\nodeind(p)  \bquiv 1.
    \end{align*}

    For a union node $\mu_x = \mu_{x_1}\union \mu_{x_2}$ it holds
    \begin{align*}
	    \drep_x\ind{\budget,\weight,\actionweight} 
	&=\bigdelta\limits_{\substack{b_1+b_2 = \budget\\c_1+c_2=\weight\\d_1+d_2=\actionweight}}\big(\drep_{x_1}\ind{b_1,c_1,d_1}\exunion\drep_{x_2}\ind{b_2, c_2,d_2}\big)\\
	&=\bigdelta\limits_{\substack{b_1+b_2 = \budget\\c_1+c_2=\weight\\d_1+d_2=\actionweight}}
	\bigdelta\limits_{\substack{p_1\in\drep_{x_1}\ind{b_1, c_1,d_1}\\ p_2\in\drep_{x_2}\ind{b_2,c_2,d_2}}} \{p_1 \punion p_2\}.
    \end{align*}
    For each action sequence $\pi$ of weight $\actionweight$ generating $p$ from a solution $S$ of size $\budget$ and weight $\weight$ in $G_x$, there exists two action sequences $\pi_1, \pi_2$ with $\pi = (\pi_1\dot \cup \pi_2)_{x\mapsto 1}$ and vice versa, where $\pi_1$ has weight $d_1$, and it generates a pattern $p_1$ from a solution $S_1$ over $G_{x_1}$ of size $b_1$ and weight $c_1$, and $\pi_2$ has weight $d_2$, and it generates a pattern $p_2$ from a solution $S_2$ over $G_{x_2}$ of size $b_2$ and weight $c_2$, such that $p_1\punion p_2 = p$, $b_1+b_2=\budget$, $c_1+c_2=\weight$, $d_1+d_2 = \actionweight$. Hence, it holds that
    \begin{align*}
        r\nodeind(p)
        &\bquiv \sum\limits_{\substack{b_1+b_2=\budget\\c_1+c_2=\weight\\d_1+d_2=\actionweight}}
	\sum\limits_{\substack{p_1,p_2\in \Cp\\p_1 \punion p_2 = p}}
	r\ind{x_1, b_1, c_1, d_1}(p_1)\cdot r\ind{x_2, b_2, c_2, d_2}(p_2)\\
        &\bquiv \sum\limits_{\substack{b_1+b_2=\budget\\c_1+c_2=\weight\\d_1+d_2=\actionweight}}
	\sum\limits_{\substack{p_1,p_2\in \Cp\\p_1 \punion p_2 = p}}
	\big[p_1\in\drep_{x_1}\ind{b_1,c_1,d_1}\big]\cdot\big[p_2\in\drep_{x_2}\ind{b_2,c_2,d_2}\big]\\
        &\bquiv \sum\limits_{\substack{b_1+b_2=\budget\\c_1+c_2=\weight\\d_1+d_2=\actionweight}}
	\sum\limits_{\substack{p_1\in\drep_{x_1}\ind{b_1,c_1,d_1}\\p_2\in\drep_{x_2}\ind{b_2,c_2,d_2}}}
	[p_1 \punion p_2 = p].\\
    \end{align*}
    Hence, it holds that
    \begin{align*}
        &p\in \drep_x\ind{\budget,\weight,\actionweight} \iff\\
	&p \in \bigdelta\limits_{\substack{b_1+b_2 = \budget\\c_1+c_2=\weight\\d_1+d_2=\actionweight}}
	\bigdelta\limits_{\substack{p_1\in\drep_{x_1}\ind{b_1, c_1,d_1}\\ p_2\in\drep_{x_2}\ind{b_2,c_2,d_2}}} \{p_1 \punion p_2\}\iff\\
	&\sum\limits_{\substack{b_1+b_2 = \budget\\c_1+c_2=\weight\\d_1+d_2=\actionweight}}
	\sum\limits_{\substack{p_1\in\drep_{x_1}\ind{b_1, c_1,d_1}\\ p_2\in\drep_{x_2}\ind{b_2,c_2,d_2}}} [p_1 \punion p_2 = p]\bquiv 1\iff\\
        &r\nodeind(p)  \bquiv 1.
    \end{align*}
    
    Finally, for a join node $\mu_x = \add{i}{j}(\mu_{x'})$. We define 
    \[r'\nodeind(p) = \sum\limits_{\substack{p'\in \Cp\\ \patadd_{i,j}p' = p}} r\ind{x',\budget,\weight,\actionweight}(p').\]
First, we show that
\[p\in \drep'_x\ind{\budget,\weight,\actionweight} \iff r'\nodeind(p)  \bquiv 1.\]
    It holds that
	\[
	\drep'_x\ind{\budget,\weight,\actionweight}
        =  \exadd_{i,j}{\big(\drep_{x'}\ind{\budget,\weight,\actionweight}\big)}
        = \bigdelta\limits_{p\in \drep_{x'}\allind} \{\patadd_{i, j} p\}.
	\]
	On the other hand, we have
    \begin{align*}
        r'\nodeind(p) 
        &\bquiv \sum\limits_{\patadd_{i,j}p' = p} r\ind{x', \budget, \weight, \actionweight}(p')\\
	&\bquiv \sum\limits_{\patadd_{i,j}p' = p} \big[p'\in \drep_{x'}\allind\big]\\
        &\bquiv |\{p'\in\drep_{x'}\allind\colon \patadd_{i,j}p'=p\}|,
    \end{align*}
    where the second congruence follows from the induction hypothesis.
    It follows that
    \begin{align*}
        &p\in \drep'_x\ind{\budget,\weight,\actionweight} \iff\\
	&p\in \bigdelta\limits_{p\in \drep_{x'}\allind} \{\patadd_{i,j}p\} \iff\\
	&|\{p'\in \drep_{x'}\allind\colon \patadd_{i,j}p' = p\}| \bquiv 1 \iff\\
        &r'\nodeind(p)  \bquiv 1.
    \end{align*}
    Now we prove the original claim. It holds that
    \begin{align*}	
	\drep_x\ind{\budget,\weight,\actionweight}
	&=\bigdelta\limits_{\ell\in[4]}
            \exac\big(\drep'_{x}\ind{\budget,\weight,\actionweight - \actionf(x, \ell)}, \ell\big)\\
	&=\bigdelta\limits_{\ell\in[4]}\bigdelta\limits_{p\in\drep'_x\ind{\budget, \weight, \actionweight-\actionf(x, \ell)}}\{\action(p,\ell)\}.
    \end{align*}
    From the definition of $r$, it holds that
    \begin{align*}
        r\nodeind(p)
	&\bquiv \sum\limits_{\ell\in [4]}
	\sum\limits_{\substack{p'\in \Pat\\ \action(p',\ell) = p}}
	r'\ind{x, \budget, \weight, \actionweight-\actionf(x, \ell)}(p)\\
	&\bquiv \sum\limits_{\ell\in [4]}
	\sum\limits_{\substack{p'\in \Pat\\ \action(p',\ell) = p}}
	\big[p'\in\drep'_{x}\ind{\budget,\weight,\actionweight - \actionf(x, \ell)}\big]\\
	&\bquiv \sum\limits_{\ell\in [4]}
	\sum\limits_{p'\in\drep'_{x}\ind{\budget,\weight,\actionweight - \actionf(x, \ell)}}
	[ \action(p',\ell) = p],
    \end{align*}
    where the second congruence follows from the previous claim.
    Finally, we get
    \begin{align*}
        p\in \drep_x\ind{\budget,\weight,\actionweight} 
	&\iff p\in \bigdelta\limits_{\ell\in[4]}\bigdelta\limits_{p'\in \drep'_{x}\ind{\budget,\weight,\actionweight-\actionf(x,\ell)}}
	\{\action(p', \ell)\}\\
	&\iff \sum\limits_{\ell\in [4]}
	\sum\limits_{p'\in\drep'_{x}\ind{\budget,\weight,\actionweight - \actionf(x, \ell)}} 
	[ \action(p',\ell) = p]\bquiv 1\\
	&\iff r\ind{x, \budget,\weight,\actionweight}(p)  \bquiv 1.
    \end{align*}
\end{proof}
\begin{corollary}\label{lem:no-sol-no-drep}
    If $G$ does not admit a Steiner tree of size $\budget$, then it holds for all values $\weight\in [\W\cdot n]$, $\actionweight\in[\D\cdot |V(\syntaxtree)|]$ that $[0]\notin\drep_r\allind$.
\end{corollary}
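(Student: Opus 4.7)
The plan is to argue by contraposition: assume that $[0] \in \drep_r\ind{\budget, \weight, \actionweight}$ for some admissible triple $(\budget, \weight, \actionweight)$ and derive that $G$ does admit a Steiner tree of size $\budget$. This matches the template set up by the preceding lemmas, which connect the presence of a pattern in $\drep_r$ to the existence of a generating pair $(S, \pi)$, and from there to the family $\rep_r$ and to the existence of a Steiner tree.

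First, I would invoke \cref{lem:drep-count-actionseq} with $x = r$ and $p = [0]$. Since $[0] \in \drep_r\ind{\budget, \weight, \actionweight}$, the lemma guarantees that the number of pairs $(S, \pi)$ with $\term \subseteq S \subseteq V$, $|S|=\budget$, $\weightf(S) = \weight$, and $\pi$ an action sequence of weight $\actionweight$ generating $[0]$ from $S$ in $G$, is odd. In particular, at least one such pair $(S, \pi)$ exists.

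Next, I would apply \cref{lem:action-seq-represent} to this witness $(S, \pi)$. Since $\pi$ generates $[0]$ from $S$ in $G_r = G$, and since $[0]$ is a complete pattern, the lemma yields $[0] \in \rep_r\ind{\budget, \weight}$. Applying \cref{cor:solution-if-rep-consist-0}, we conclude that $G$ admits a Steiner tree of size $\budget$ (and weight $\weight$), contradicting the hypothesis. Contrapositively, if $G$ admits no such Steiner tree then $[0] \notin \drep_r\ind{\budget, \weight, \actionweight}$ for every choice of $\weight$ and $\actionweight$.

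I do not anticipate any real obstacle in this proof: all the heavy lifting has been done in establishing that $\drep$ correctly counts generating pairs modulo two (\cref{lem:drep-count-actionseq}), that action sequences generate exactly the complete patterns in $\rep$ (\cref{lem:action-seq-represent}), and that a Steiner tree corresponds to $[0] \in \rep_r$ (\cref{cor:solution-if-rep-consist-0}). The corollary is a straightforward chaining of these three facts through a single contrapositive argument.
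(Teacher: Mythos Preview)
Your proposal is correct and follows essentially the same approach as the paper: argue by contraposition, use \cref{lem:drep-count-actionseq} to extract a witnessing pair $(S,\pi)$, then \cref{lem:action-seq-represent} to conclude $[0]\in\rep_r\ind{\budget,\weight}$, and finally \cref{cor:solution-if-rep-consist-0} to obtain a Steiner tree of size $\budget$.
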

\begin{proof}
    Assume there exists two values $\weight$ and $\actionweight$ such that $[0]\in\drep_r\allind$, then by \cref{lem:drep-count-actionseq} there exists a pair $(S, \pi)$ with $\term_x \subseteq S$, $|S|=\budget, \weightf(S) = \weight$ and $\actionf(\pi) = \actionweight$, such that $\pi$ generates $[0]$ from $S$ in $G_r$. However, by \cref{lem:action-seq-represent}, this is the case if and only if $[0]\in \rep_r\ind{\budget, \weight}$, which by \cref{cor:solution-if-rep-consist-0} is the cases if and only if $G$ admits a Steiner tree of size $\budget$.
\end{proof}
\subsection{Isolation lemma}
Now we fix $\W = (2+\sqrt 2) |V|$, and $\D = 4(2+\sqrt 2) |V(\syntaxtree)|$. We choose $\weightf\in [\W]^V$ and $\actionf \in [\D]^{V(\syntaxtree)\times [4]}$ uniformly and independently at random.
We use the isolation lemma to show that, if a Steiner tree of size $\budget$ exists, then there exist two positive integers $\weight$ and $\actionweight$, such that, with high probability, there exists a unique solution $S$ of size $\budget$ and weight $\weight$, and a unique action sequence of weight $\actionweight$ that generates the pattern $[0]$ from $S$.
\begin{definition}
    A function $\omega:U\rightarrow \mathbb{Z}$ isolates a set family $\mathcal{F}\subseteq \pow{U}$ if there exists a unique $S'\in\mathcal{F}$ with $\omega(S')=\min_{S\in \mathcal{F}}\omega(S)$.
\end{definition}
\begin{lemma}[\cite{MulmuleyVV87,DBLP:journals/talg/CyganNPPRW22}]\label{lem:iso}
    Let $\mathcal{F}\subseteq \pow{U}$ be a set family over a universe $U$ with $|\mathcal{F}|>0$, and let $N>|U|$ be an integer. For each $u\in U$, choose a weight $\omega(u)\in \{1,2,\dots N\}$ uniformly and independently at random. Then it holds that $P[\omega \text{ isolates } \mathcal{F}]\geq 1-|U|/N$.
\end{lemma}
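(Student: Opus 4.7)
The plan is to follow the classic deferred-decisions argument of Mulmuley, Vazirani, and Vazirani. Call an element $u \in U$ \emph{confused} (with respect to $\omega$) if there exist $S_1, S_2 \in \mathcal{F}$ both of minimum weight with $u \in S_1$ and $u \notin S_2$. The proof reduces to two claims: (i) if no element of $U$ is confused, then $\omega$ isolates $\mathcal{F}$; and (ii) for each fixed $u \in U$, the probability that $u$ is confused is at most $1/N$. A union bound over $u \in U$ then yields the stated probability at least $1 - |U|/N$.

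For (i), suppose $\omega$ does not isolate $\mathcal{F}$. Then there are distinct $S_1, S_2 \in \mathcal{F}$ with $\omega(S_1) = \omega(S_2) = \min_{S \in \mathcal{F}} \omega(S)$. Pick any $u \in S_1 \triangle S_2$; without loss of generality $u \in S_1$ and $u \notin S_2$. The common minimum weight is then simultaneously attained by a set containing $u$ (namely $S_1$) and by a set not containing $u$ (namely $S_2$), which by definition means $u$ is confused. Contrapositively, absence of any confused element forces a unique minimizer.

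For (ii), fix $u$ and reveal all weights $\omega(v)$ for $v \in U \setminus \{u\}$ first. Define the quantities
\[W_u^{\mathrm{in}} = \min\{\omega(S \setminus \{u\}) : S \in \mathcal{F},\ u \in S\}, \qquad W_u^{\mathrm{out}} = \min\{\omega(S) : S \in \mathcal{F},\ u \notin S\},\]
with the convention $\min \emptyset = +\infty$; if either family is empty then $u$ cannot be confused and we are done. Both values are determined by the revealed weights and do not depend on $\omega(u)$. Now $u$ is confused precisely when there are minimum-weight sets of $\mathcal{F}$ on both sides of $u$, i.e.\ when $\omega(u) + W_u^{\mathrm{in}} = W_u^{\mathrm{out}}$, which fixes $\omega(u)$ to the single integer value $W_u^{\mathrm{out}} - W_u^{\mathrm{in}}$. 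Since $\omega(u)$ is drawn uniformly from $\{1,\ldots,N\}$ independently of all other weights, this occurs with probability at most $1/N$. Taking a union bound over $u \in U$ concludes the proof. There is no real obstacle here beyond correctly formulating the confusion event so that the conditional analysis on the remaining weights goes through cleanly.
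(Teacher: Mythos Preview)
Your proof is correct and is exactly the standard Mulmuley--Vazirani--Vazirani argument. Note that the paper does not actually prove this lemma: it is stated with a citation to \cite{MulmuleyVV87,DBLP:journals/talg/CyganNPPRW22} and used as a black box, so there is no ``paper's own proof'' to compare against here.
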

\begin{lemma}
	If a Steiner tree of size $\budget$ exists, let $\weight$ be the smallest weight of such a tree. Then there exists a unique Steiner tree of size $\budget$ and weight $\weight$ with probability at least $1- \frac{1}{2 + \sqrt 2}$.
\end{lemma}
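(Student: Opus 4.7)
The plan is to invoke the Isolation Lemma (\cref{lem:iso}) directly on the family of Steiner trees of the prescribed size. I would set the universe to be the vertex set $V$ (so $|U| = n$) and take $\mathcal{F} \subseteq \pow{V}$ to be the collection of all $S \subseteq V$ with $\term \subseteq S$, $|S| = \budget$, and $G[S]$ connected. By the hypothesis of the lemma, $|\mathcal{F}| > 0$. The random weight function $\weightf \in [\W]^V$ is drawn uniformly and independently by the setup preceding the statement, with $\W = (2 + \sqrt{2}) n$, which in particular satisfies $\W > n = |V|$, so \cref{lem:iso} applies with $N = \W$.

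Applying \cref{lem:iso} then gives
\[
  \Pr[\weightf \text{ isolates } \mathcal{F}] \;\geq\; 1 - \frac{|V|}{\W} \;=\; 1 - \frac{1}{2 + \sqrt{2}}.
\]
When $\weightf$ isolates $\mathcal{F}$, there is a unique $S^{*} \in \mathcal{F}$ attaining $\min_{S \in \mathcal{F}} \weightf(S)$. Setting $\weight := \weightf(S^{*})$ makes $S^{*}$ the unique Steiner tree of size $\budget$ and weight $\weight$, since any other Steiner tree of size $\budget$ and weight $\weight$ would be a second minimizer in $\mathcal{F}$, contradicting isolation. This is exactly the conclusion.

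There is essentially no obstacle: the statement is a textbook use of the Isolation Lemma, and the only verification needed is the numerical one that $\W = (2 + \sqrt{2}) n$ is large enough both to satisfy the hypothesis $N > |U|$ and to yield the claimed success probability $1 - 1/(2+\sqrt{2})$. The specific constant $2 + \sqrt{2}$ is tuned so that, combined with the analogous isolation of a single action sequence later (using $\D = 4(2+\sqrt{2})|V(\syntaxtree)|$), a union bound still leaves a constant success probability bounded away from $1$.
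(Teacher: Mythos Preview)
Your proposal is correct and matches the paper's proof essentially verbatim: both apply \cref{lem:iso} with $U=V$, $\mathcal{F}$ the family of Steiner trees of size $\budget$, and $N=\W=(2+\sqrt{2})|V|$, yielding the stated bound $1-|V|/\W=1-1/(2+\sqrt{2})$. Your additional remarks (checking $\W>|V|$, spelling out why isolation gives uniqueness at the minimum) only add detail; the one aside about a ``union bound'' is slightly off---the paper actually multiplies the two isolation probabilities to get $1/2$---but that is commentary outside the proof itself.
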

\begin{proof}
	We apply \cref{lem:iso}, with $U = V$, and $\mathcal{F}$ the family of all Steiner trees of size $\budget$. It follows that there exists a unique solution of weight $\weight$ with probability at least
	\[1-\frac{|V|}{|V|(2+\sqrt 2)}= 1-\frac{1}{2+\sqrt 2}\]
\end{proof}
\begin{lemma}\label{lem:iso-rep}
    Let $S$ be a Steiner tree in $G$, and $\pi$ be a minimum weight action sequence generating $[0]$ from $S$ in $G$. Then $\pi$ is unique with probability at least $1-\frac{1}{2+\sqrt 2}$.
\end{lemma}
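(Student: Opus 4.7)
The plan is to apply the Isolation Lemma (\cref{lem:iso}) directly to the set of action sequences generating $[0]$ from $S$ in $G$, encoded as subsets of a suitable universe. I take
\[U = \{(x', \ell) : x' \in V(\syntaxtree) \text{ is a join or introduce node}, \ell \in [4]\},\]
so that $|U| \leq 4|V(\syntaxtree)|$. Any action sequence $\pi$ uniquely corresponds to the subset $F_\pi = \{(x', \pi(x')) : x' \text{ is a join or introduce node}\} \subseteq U$, and the condition that $\pi$ is a valid sequence (with $\pi(x') \in [2]$ at introduce nodes and $\pi(x') = 1$ at union and relabel nodes, the latter being automatically satisfied since those nodes are excluded from $U$) simply cuts out a subfamily of $\pow{U}$. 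Moreover, the weight of $\pi$ is by construction
\[\actionf(\pi) = \sum_{x' \text{ join/introduce}} \actionf(x', \pi(x')) = \sum_{(x',\ell) \in F_\pi} \actionf(x', \ell),\]
which is exactly the weight of $F_\pi$ under $\actionf$ viewed as a weight function on $U$.

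Next, let $\mathcal{F} \subseteq \pow{U}$ be the family of all $F_\pi$ for which $\pi$ generates $[0]$ from $S$ in $G_r$. By assumption such a $\pi$ exists, so $\mathcal{F} \neq \emptyset$. We have chosen $\D = 4(2+\sqrt{2})|V(\syntaxtree)|$, so $\D > |U|$. Since $\actionf \in [\D]^{V(\syntaxtree)\times[4]}$ was drawn uniformly and independently at random, the restriction of $\actionf$ to $U$ is a vector of independent uniform weights on $\{1,\dots,\D\}$, which is exactly the setting of \cref{lem:iso}.

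Applying the Isolation Lemma to $\mathcal{F}$ with universe $U$ and weight bound $N = \D$ yields
\[P[\actionf \text{ isolates } \mathcal{F}] \;\geq\; 1 - \frac{|U|}{\D} \;\geq\; 1 - \frac{4|V(\syntaxtree)|}{4(2+\sqrt{2})|V(\syntaxtree)|} \;=\; 1 - \frac{1}{2+\sqrt{2}}.\]
Whenever $\actionf$ isolates $\mathcal{F}$, there is a unique $F_\pi \in \mathcal{F}$ of minimum weight, and by the bijection above this corresponds to a unique minimum weight action sequence $\pi$ generating $[0]$ from $S$, which is exactly the claim.

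I expect no substantial obstacle here: the only step requiring care is checking that the bijection $\pi \leftrightarrow F_\pi$ preserves weight and that the independence required by \cref{lem:iso} follows from the independent random choice of $\actionf$. Both are immediate from \cref{def:action-sequence} and the sampling of $\actionf$.
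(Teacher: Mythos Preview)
Your proof is correct and follows essentially the same approach as the paper: encode action sequences as subsets of $U = X \times [4]$ where $X$ is the set of join and introduce nodes, observe that the map $\pi \mapsto F_\pi$ is a bijection preserving weight, and apply the Isolation Lemma with $N = \D = 4(2+\sqrt{2})|V(\syntaxtree)|$ to obtain the bound $1 - 1/(2+\sqrt{2})$. The paper additionally cites \cref{cor:solution-if-rep-consist-0} and \cref{lem:action-seq-represent} for the existence of at least one such $\pi$, but otherwise the arguments coincide.
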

\begin{proof}
    Such a sequence exists by \cref{cor:solution-if-rep-consist-0} and \cref{lem:action-seq-represent}. We show that such a sequence is unique with the given probability. Let $\Pi$ be the family of all action sequences generating $[0]$ from $S$ in $G$. Let $X$ be the set of all "introduce" and "join" nodes in $V(\syntaxtree)$, and let $U = X\times[4]$.
    We define the family $\mathcal{F}\subseteq \pow{U}$ as $\mathcal{F}=\{f_{\pi}\colon \pi \in \Pi\}$, where we define $f_{\pi}= \{(x, \pi(x))\colon x \in X\}$. We also define the weighting function $d':U\rightarrow [\D]$ with $d'((x, \ell)) = \actionf(x,\ell)$.

    Since for any action sequence $\pi$ it holds that $\pi(x) = 1$ for $x\notin X$, it holds for any two different action sequences $\pi_1, \pi_2 \in \Pi$ that there exists $x \in X$ with $\pi_1(x)\neq\pi_2(x)$. Hence, the mapping $\pi\mapsto f_{\pi}$ is a bijection from $\Pi$ to $\mathcal{F}$. It holds from \cref{def:action-sequence} for all $\pi \in \Pi$ that $d'(f_{\pi}) = \actionf(\pi)$. It holds by \cref{lem:iso}, that $d'$ isolates $\mathcal{F}$ with probability at least $1-\frac{1}{2+\sqrt 2}$, and hence, $\actionf$ isolates $\Pi$ with at least this probability.
\end{proof}
\begin{lemma}\label{lem:iso-total-prob}
If $G$ admits a Steiner tree of size $\budget$, then with probability at least $1/2$ it holds that $[0] \in \drep_r\ind{\budget, \weight, \actionweight}$ for some values of $\weight$ and $\actionweight$.
\end{lemma}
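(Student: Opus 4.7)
The plan is to chain two independent applications of the Isolation Lemma (\cref{lem:iso}): first, use the vertex weights $\weightf$ to pin down a unique witness set $S^* \subseteq V$ that admits an action sequence generating $[0]$; second, use the action weights $\actionf$ to pin down a unique such sequence $\pi^*$ from $S^*$. When both isolations succeed, the pair $(S^*,\pi^*)$ will be the only contributor to the count in \cref{lem:drep-count-actionseq} at the weights $\weight := \weightf(S^*)$ and $\actionweight := \actionf(\pi^*)$, so that count is $1$, hence odd, and $[0] \in \drep_r\ind{\budget,\weight,\actionweight}$ follows.

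First I would introduce
\[
\Pi = \big\{(S,\pi) : S \subseteq V,\ \term \subseteq S,\ |S| = \budget,\ \pi \text{ generates } [0] \text{ from } S \text{ in } G\big\}
\]
and $\mathcal{S} = \{S : \exists \pi,\ (S,\pi) \in \Pi\}$. The existence of a Steiner tree of size $\budget$ yields, via \cref{cor:solution-if-rep-consist-0}, some weight $\weight$ with $[0] \in \rep_r\ind{\budget,\weight}$, and then \cref{lem:action-seq-represent} supplies a pair in $\Pi$, so $\mathcal{S} \neq \emptyset$. Since $\W = (2+\sqrt 2)|V| > |V|$, I apply \cref{lem:iso} with ground set $V$, family $\mathcal{S}$, and weight function $\weightf$ to obtain a unique minimum-weight $S^* \in \mathcal{S}$ with probability at least $1 - \tfrac{1}{2+\sqrt 2} = \tfrac{\sqrt 2}{2}$.

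Next, conditioning on $S^*$, let $\Pi_{S^*} = \{\pi : (S^*,\pi) \in \Pi\}$, which is non-empty by construction. Mirroring the encoding from the proof of \cref{lem:iso-rep}, I would map each $\pi \in \Pi_{S^*}$ to the set $f_\pi = \{(x,\pi(x)) : x \text{ is an introduce or join node}\}$, a bijection onto a subfamily of the power set of $U' = V(\syntaxtree)\times [4]$, with the weight $d'((x,\ell)) := \actionf(x,\ell)$ satisfying $d'(f_\pi) = \actionf(\pi)$. Since $\D = 4(2+\sqrt 2)|V(\syntaxtree)| > |U'|$, \cref{lem:iso} then yields a unique minimum-weight $\pi^* \in \Pi_{S^*}$ with probability at least $1 - \tfrac{1}{2+\sqrt 2}$, and this estimate holds for every realization of $S^*$ because $\actionf$ is chosen independently of $\weightf$.

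Combining the two isolations by independence, both succeed simultaneously with probability at least $(\tfrac{\sqrt 2}{2})^2 = \tfrac{1}{2}$. In that event, any $(S,\pi) \in \Pi$ with $\weightf(S) = \weight$ must have $S = S^*$ (since $\weight = \weightf(S^*)$ is the unique minimum of $\weightf$ over $\mathcal{S}$), and then $\actionf(\pi) = \actionweight$ forces $\pi = \pi^*$; hence exactly one pair contributes and \cref{lem:drep-count-actionseq} gives $[0] \in \drep_r\ind{\budget,\weight,\actionweight}$. The main point to be careful about is that the first isolation is applied to $\mathcal{S}$ (defined abstractly via the existence of a generating action sequence) rather than to the set of true Steiner trees; this distinction does not affect the isolation argument since only non-emptiness of $\mathcal{S}$ is required, and it is precisely what lets the two layers of isolation decouple cleanly into the product bound $(1 - \tfrac{1}{2+\sqrt 2})^2 = \tfrac{1}{2}$.
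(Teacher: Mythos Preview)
Your proof is correct and follows essentially the same two-layer isolation strategy as the paper: first isolate a unique minimum-weight set via $\weightf$, then a unique minimum-weight action sequence via $\actionf$, and multiply the two success probabilities $\bigl(1-\tfrac{1}{2+\sqrt 2}\bigr)^2=\tfrac12$. Your choice to isolate over $\mathcal{S}=\{S:\exists\pi,\ (S,\pi)\in\Pi\}$ rather than over Steiner trees is a harmless (and arguably cleaner) variant, since it makes the final ``exactly one contributing pair'' step immediate from the definition of the count in \cref{lem:drep-count-actionseq}; the two families in fact coincide, so the arguments are equivalent.
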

\begin{proof}
    Let $\weight$ be the minimum weight of a Steiner tree of size $\budget$ in $G$. Moreover, let $\actionweight$ be the minimum weight of an action sequence generating $[0]$ from some Steiner tree of size $\budget$ and weight $\weight$. Let $A$ be the event that $G$ admits a unique Steiner tree of size $\budget$ and weight $\weight$. Let $B$ be the event that there exists a unique action sequence of weight $\actionweight$ generating $[0]$ from some Steiner tree of size $\budget$ and weight $\weight$ in $G$. Then it holds that $[0] \in \drep_r\ind{\budget,\weight,\actionweight}$ with probability at least
    \begin{align*}
        \pr[A\cap B] &= \pr[A]\cdot \pr[B|A]\\
        &\geq \left(1-\frac{1}{2+\sqrt 2}\right)\cdot\left(1-\frac{1}{2+\sqrt 2}\right)\\
        &= \left(\frac{1+\sqrt 2}{2+\sqrt 2}\right)^2\\
        &= \left(\frac{(1+\sqrt 2)(2-\sqrt 2)}{2}\right)^2\\
        &= \left(\frac{2-2+2\sqrt 2 - \sqrt 2}{2}\right)^2\\
        &= \left(\frac{\sqrt 2}{2}\right)^2 = \frac{1}{2}.
    \end{align*}
\end{proof}

\section{Parity-representation}\label{sec:parity-rep}
Now we define a new kind of representation, called the parity-representation. Parity-representation allows us to count patterns (modulo 2) that are consistent with a specific pattern. However, this representation is restricted to complete patterns only, and hence, we aim to isolate a unique representation of a valid solution with high probability. This would imply a single-sided error randomized algorithm for the decision version of the \Stp\ problem in running time $\ostar(3^{\cw(G)})$.

\begin{definition}\label{def:cs-pat}
    We define the family of $CS$-patterns $\CSP(U) \subseteq \Cp(U)$ as the family of complete patterns containing only a zero-set and singletons, i.e.\ 
    \[\CSP(U) =\Big\{ \big\{X\cup \{0\}\big\} \cup \big\{\{u\}\colon u\in Y\big\}\colon X\subseteq Y\subseteq U\Big\}.\]
\end{definition}
\begin{lemma}\label{lem:num-of-cs-pats}
    For each ground set $U$ it holds that $|\CSP(U)| =  3^{|U|}$.
\end{lemma}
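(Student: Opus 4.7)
The plan is to exhibit an explicit bijection between $\CSP(U)$ and the set of pairs $(X,Y)$ with $X\subseteq Y\subseteq U$, and then count the latter by a coordinate-wise argument.

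First I would define the map $\phi$ sending a pair $(X,Y)$ with $X\subseteq Y\subseteq U$ to the pattern
\[
\phi(X,Y) \;=\; \bigl\{X\cup\{0\}\bigr\} \,\cup\, \bigl\{\{u\}\colon u\in Y\bigr\}.
\]
Surjectivity is immediate from \cref{def:cs-pat}, since every element of $\CSP(U)$ is by definition of this form for some $X\subseteq Y\subseteq U$. For injectivity, I would observe that from a pattern $p=\phi(X,Y)\in\CSP(U)$ one recovers both parameters uniquely: the zero-set $Z_p$ is the unique set in $p$ containing $0$, so $X = Z_p\setminus\{0\}$; and the singletons (other than $\{0\}$, which can only occur when $X=\emptyset$, i.e.\ when $0\in\sing(p\cup\{\{0\}\})$, a case we handle by reading $Y$ as $\sing(p)$ in the sense of \cref{def:pattern}) are exactly $\{\{u\}\colon u\in Y\}$, so $Y=\sing(p)$. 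Thus $\phi$ is a bijection.

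Next I would count the pairs $(X,Y)$ with $X\subseteq Y\subseteq U$ by classifying each element $u\in U$ into exactly one of three mutually exclusive cases: $u\notin Y$, $u\in Y\setminus X$, or $u\in X$ (which forces $u\in Y$). Since these choices are independent across the $|U|$ elements of $U$, the total number of such pairs is $3^{|U|}$.

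The main step is just verifying that $\phi$ is well-defined and injective; there is no real obstacle, only the mild bookkeeping that the condition $X\subseteq Y$ is exactly what is needed to make $\phi(X,Y)$ a legitimate complete pattern with $\lbs(p)=\sing(p)=Y$ (so that it indeed lies in $\Cp(U)$ per \cref{def:complete-patterns}, hence in $\CSP(U)$). Combining the bijection with the count yields $|\CSP(U)|=3^{|U|}$.
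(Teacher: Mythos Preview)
Your proof is correct and follows essentially the same approach as the paper: both identify $\CSP(U)$ with the set of pairs $(X,Y)$ with $X\subseteq Y\subseteq U$ and count these as $3^{|U|}$, with the paper writing this tersely as $\sum_{Y\subseteq U}2^{|Y|}=3^{|U|}$ while you spell out the bijection and the per-element three-way choice explicitly. Your injectivity check is more careful than the paper's (which leaves it implicit), though the parenthetical about the $X=\emptyset$ case is unnecessary since $\sing(p)$ already excludes $0$ by definition.
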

\begin{proof}
    It holds that $|\CSP|=\sum\limits_{Y\subseteq U}2^{|Y|} = 3^{|U|}$.
\end{proof}

\begin{observation}\label{obs:cs-closed-ops}
    For $p, q \in \CSP$ and all $i,j\in U$, it holds that $p_{i\rightarrow j}, p \punion q \in \CSP$.
\end{observation}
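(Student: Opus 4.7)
The plan is to verify both closure claims directly from the structural characterization in \cref{def:cs-pat}: a pattern $p$ lies in $\CSP(U)$ iff it has the form $\{X \cup \{0\}\} \cup \{\{u\}\colon u \in Y\}$ for some $X \subseteq Y \subseteq U$. So I would fix such a representation $p = \{X_p \cup \{0\}\} \cup \{\{u\}\colon u\in Y_p\}$ and, where needed, $q = \{X_q \cup \{0\}\} \cup \{\{u\}\colon u\in Y_q\}$, and then just compute.

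For the union, the result is immediate from the definition $p\punion q = (p\setminus\{Z_p\})\cup(q\setminus\{Z_q\})\cup\{Z_p\cup Z_q\}$: the new zero-set is $(X_p \cup X_q) \cup \{0\}$, and the remaining sets are exactly the singletons $\{u\}$ for $u\in Y_p \cup Y_q$ (duplicates collapse automatically by set semantics). Taking $X' = X_p \cup X_q$ and $Y' = Y_p \cup Y_q$, we have $X' \subseteq Y'$ since $X_p\subseteq Y_p$ and $X_q\subseteq Y_q$, so $p\punion q\in\CSP$.

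For the relabel $p_{i\to j}$, I would distinguish three cases according to where $i$ appears in $p$. If $i\notin Y_p$ (so $i\notin X_p$ either), then $p_{i\to j}=p\in\CSP$. If $i\in Y_p\setminus X_p$, then only the singleton $\{i\}$ is affected: it becomes $\{j\}$, which is either a new singleton or coincides with an already-present $\{j\}$; setting $Y' = (Y_p\setminus\{i\})\cup\{j\}$ and $X'=X_p$, we have $X'\subseteq Y'$ since $i\notin X_p$. If $i\in X_p$ (hence $i\in Y_p$), both the zero-set and the singleton $\{i\}$ are affected simultaneously: the zero-set becomes $(X_p\setminus\{i\})\cup\{j\}\cup\{0\}$ and $\{i\}$ becomes $\{j\}$; setting $X' = (X_p\setminus\{i\})\cup\{j\}$ and $Y' = (Y_p\setminus\{i\})\cup\{j\}$, the containment $X_p\subseteq Y_p$ is preserved under the same "remove $i$, add $j$" operation on both sides, so $X'\subseteq Y'$.

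The argument is essentially bookkeeping, so I do not expect any real obstacle. The only point that needs a brief justification is that the relabel operation (\cref{def:patops}) acts setwise and that duplicate sets collapse, so in the cases where $j$ was already present in $X_p$ or $Y_p$ the resulting pattern is still described by a valid pair $X'\subseteq Y'$ (possibly with $|Y'|<|Y_p|$). Once this is noted, both closure statements follow by inspection.
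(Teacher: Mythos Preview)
Your proof is correct. The paper states this as an observation without giving a proof, and your direct verification from the structural description in \cref{def:cs-pat} is exactly the intended elementary check: a $CS$-pattern is determined by a pair $X\subseteq Y\subseteq U$, and you simply compute the pair for $p\punion q$ and for $p_{i\to j}$ and verify the inclusion.
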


In the following, we show that $\CSP$ builds a row basis of $\cmat$ (\cref{sec:technicalcontribution}) over $\bin$. Note that $\CSP$ is a row basis of $\cmat$, if and only if for each $p\in \Cp$ there exists $S \subseteq \CSP$ such that for each $q\in \Cp$ it holds that $p\sim q$ if and only if $|\{r\in S\colon r\sim q\}|$ is an odd number. We call $S$ a basis representation (or \emph{parity-representation}) of $p$.

\begin{definition}
    Given two sets of patterns $D, C \subseteq \Pat$, we say that 
    $C$  \emph{parity-represents} $D$, if and only if for each $q\in \Cp$ it holds
    \[|\{p\in C\colon p\sim q\}|\bquiv|\{p\in D\colon p\sim q\}|.\]
	For a set of patterns $C \subseteq \Pat$ and a single pattern $p\in \Pat$, we say that $C$ parity-represents $p$ if it holds that $C$ parity-represents $\{p\}$.
\end{definition}

\begin{observation}\label{obs:par-rep-is-equiv}
Parity-representation is an equivalence relation.
\end{observation}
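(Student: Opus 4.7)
The plan is to observe that parity-representation is defined pointwise by an equality modulo two: for every complete pattern $q \in \Cp$, the defining condition is
\[
|\{p \in C \colon p \sim q\}| \bquiv |\{p \in D \colon p \sim q\}|.
\]
Since $\bquiv$ (congruence modulo $2$) is an equivalence relation on $\mathbb{Z}$, one expects the induced relation on sets of patterns to inherit reflexivity, symmetry, and transitivity directly. So the proof reduces to checking the three properties of equivalence relations, each of which is a one-line verification.

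\textbf{Reflexivity.} For any $C \subseteq \Pat$ and any $q \in \Cp$, we have the trivial identity $|\{p \in C \colon p \sim q\}| \bquiv |\{p \in C \colon p \sim q\}|$, so $C$ parity-represents $C$.

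\textbf{Symmetry.} Suppose $C$ parity-represents $D$. Then for every $q \in \Cp$,
\[
|\{p \in C \colon p \sim q\}| \bquiv |\{p \in D \colon p \sim q\}|,
\]
and by symmetry of $\bquiv$ the reverse congruence holds as well, giving that $D$ parity-represents $C$.

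\textbf{Transitivity.} Suppose $C$ parity-represents $D$ and $D$ parity-represents $E$. Then for every $q \in \Cp$,
\[
|\{p \in C \colon p \sim q\}| \bquiv |\{p \in D \colon p \sim q\}| \bquiv |\{p \in E \colon p \sim q\}|,
\]
and chaining these two congruences by transitivity of $\bquiv$ yields $|\{p \in C \colon p \sim q\}| \bquiv |\{p \in E \colon p \sim q\}|$, i.e., $C$ parity-represents $E$.

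Because the content is entirely formal, there is no real obstacle; the only thing to be careful about is just to unfold the definition cleanly and invoke the corresponding properties of $\bquiv$ for each of the three clauses. No properties of patterns, consistency, or completeness are used beyond the definition itself, so the argument is fully generic and could in principle be stated for any relation defined by a family of parity equalities.
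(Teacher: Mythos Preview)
Your proof is correct and is exactly the intended argument: the paper states this as an observation without proof, leaving the routine verification of reflexivity, symmetry, and transitivity (each inherited from $\bquiv$) implicit. Your write-up simply makes those three one-line checks explicit.
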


\begin{lemma}\label{lem:mutex-preserve-parity-rep}
    Let $S,S',T,T' \subseteq \Pat$ with $S$ parity-represents $S'$ and $T$ parity-represents $T'$. Then it holds that $S\triangle T$ parity-represents $S'\triangle T'$.
\end{lemma}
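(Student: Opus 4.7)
The plan is to reduce the claim to the fact that, over $\bin$, counting modulo two distributes over symmetric difference. I would fix an arbitrary $q\in\Cp$ and, as shorthand, write $A_q = \{p\in A\colon p\sim q\}$ for any $A\subseteq\Pat$. The key observation is that the restriction to patterns consistent with $q$ commutes with symmetric difference: $(A\triangle B)_q = A_q\triangle B_q$ for all $A,B\subseteq\Pat$, since membership in $A\triangle B$ and the predicate $p\sim q$ are independent conditions on each $p$. Combined with the elementary identity $|X\triangle Y|\equiv|X|+|Y|\pmod{2}$ already recalled in the preliminaries, this yields $|(A\triangle B)_q|\equiv |A_q|+|B_q|\pmod{2}$.

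I would then apply this observation once on each side of the claimed equality and combine it with the two hypotheses. Concretely, $|(S\triangle T)_q|\equiv |S_q|+|T_q|$ and $|(S'\triangle T')_q|\equiv |S'_q|+|T'_q|$ modulo two, while the parity-representation hypotheses give $|S_q|\equiv |S'_q|$ and $|T_q|\equiv |T'_q|$ modulo two. Chaining the four congruences produces $|(S\triangle T)_q|\equiv |(S'\triangle T')_q|\pmod{2}$, and since $q\in\Cp$ was arbitrary, this is exactly the definition of $S\triangle T$ parity-representing $S'\triangle T'$.

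I do not anticipate any real obstacle. Conceptually, parity-representation assigns to every set of patterns $A\subseteq\Pat$ its $\bin$-valued ``consistency vector'' $(|A_q|\bmod 2)_{q\in\Cp}\in\bin^{\Cp}$; under this correspondence, symmetric difference of subsets maps to coordinate-wise addition of vectors, and parity-representation of one set by another is equality of their vectors. The lemma is then just the bilinearity of addition over $\bin$, and \cref{obs:par-rep-is-equiv} (transitivity of parity-representation) already implicitly uses the same viewpoint.
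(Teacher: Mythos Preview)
Your proposal is correct and follows essentially the same argument as the paper's proof: fix $q\in\Cp$, use that $\{p\in A\triangle B:p\sim q\}=\{p\in A:p\sim q\}\triangle\{p\in B:p\sim q\}$ together with $|X\triangle Y|\equiv|X|+|Y|\pmod 2$, apply the two hypotheses, and reverse the steps. The only difference is cosmetic, namely your shorthand $A_q$ and the explicit linear-algebraic framing in $\bin^{\Cp}$.
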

\begin{proof}
    For an arbitrary $q\in \Cp$, it holds that
    \begin{align*}
        |\{p\in S\triangle T\colon p\sim q\}|
        &\bquiv |\{p\in S\colon p\sim q\}\triangle\{p\in T\colon p\sim q\}|\\
        &\bquiv |\{p\in S\colon p\sim q\}|+|\{p\in T\colon p\sim q\}|\\
        &\bquiv |\{p\in S'\colon p\sim q\}|+|\{p\in T'\colon p\sim q\}|\\
        &\bquiv |\{p\in S'\colon p\sim q\}\triangle\{p\in T'\colon p\sim q\}|\\
        &\bquiv |\{p\in S'\triangle T'\colon p\sim q\}|\\
    \end{align*}
\end{proof}
\begin{corollary}\label{cor:partial-parity-rep}
    Let $C \subseteq \Cp$ be a family of complete patterns, and $S\subseteq C$. Let $S'$ be a parity-representation of $S$, and $C'=(C\setminus S)\triangle S'$. Then $C'$ is a parity-representation of $C$.
    \end{corollary}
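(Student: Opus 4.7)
The plan is to derive this directly from \cref{lem:mutex-preserve-parity-rep} (symmetric differences preserve parity-representation) together with the fact that parity-representation is reflexive (\cref{obs:par-rep-is-equiv}).

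First I would rewrite $C$ as a symmetric difference. Since $S\subseteq C$, the sets $C\setminus S$ and $S$ are disjoint, so
\[
C = (C\setminus S)\cup S = (C\setminus S)\triangle S.
\]
This is the key observation that lets us apply the previous lemma.

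Next I would invoke \cref{lem:mutex-preserve-parity-rep} with the two pairs $(C\setminus S, C\setminus S)$ and $(S', S)$. By reflexivity of parity-representation, $C\setminus S$ parity-represents itself; by hypothesis, $S'$ parity-represents $S$. The lemma then yields that
\[
(C\setminus S)\triangle S' \text{ parity-represents } (C\setminus S)\triangle S.
\]
Substituting the identity from the first step, the right-hand side is exactly $C$, so $C' = (C\setminus S)\triangle S'$ parity-represents $C$, as required.

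There is no real obstacle here: the statement is essentially a corollary in the strict sense, packaging the ``local substitution'' that \cref{lem:mutex-preserve-parity-rep} makes available. The only thing to be careful about is the disjointness of $C\setminus S$ and $S$ to justify replacing the union by a symmetric difference; everything else is a direct application of the preceding lemma and the equivalence-relation structure noted in \cref{obs:par-rep-is-equiv}.
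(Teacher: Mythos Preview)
Your proof is correct and follows essentially the same approach as the paper: write $C=(C\setminus S)\triangle S$ using disjointness, then apply \cref{lem:mutex-preserve-parity-rep} with the reflexive pair $(C\setminus S,C\setminus S)$ and the pair $(S',S)$. The paper's version is just a terser rendering of the same argument.
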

    \begin{proof}
        Let $\overline S = C\setminus S$. It holds that $C' = \overline S \triangle S'$. Since $S'$ parity-represents $S$ by definition, it holds by \cref{lem:mutex-preserve-parity-rep} that $C'$ parity-represents $\overline S \triangle S = C$.
\end{proof}

\begin{definition}
    We say that a set of patterns $S \subseteq \Pat$ \emph{jointly dominates} a pattern $p \in \Pat$ over a family of patterns $\Pat'$ (denoted by $S\geq_{\Pat'} p$), if for each pattern $r\in \Pat'$ with $p\sim r$, there exists $q\in S$ such that $q \sim r$. We say that $S$ \emph{totally covers} $p$ over $\Pat'\subseteq \Pat$ if for each pattern $r \in \Pat'$ it holds that $p\sim r$, if and only if $q\sim r$ for each $q\in S$.    
\end{definition}

\begin{lemma}\label{lem:joint-domination}
    Let $X,Y,Z\subseteq U$ be three label sets, and $p\in \Cp$. Let
    \begin{itemize}
        \item $p_0 = p\cup \{X\cup Y, Z\}$,
        \item $p_1 = p\cup \{Y\cup Z, X\}$,
        \item $p_2 = p\cup \{X\cup Z, Y\}$.
    \end{itemize}
    Then $\{p_1, p_2\}$ jointly dominates $p_0$ over $\Pat$.
\end{lemma}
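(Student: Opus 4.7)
My plan is to recast consistency as connectedness of a bipartite intersection graph. For any two patterns $q, r$, define $B(q, r)$ with parts $q$ and $r$ and an edge $S - T$ whenever $S \cap T \neq \emptyset$. Directly from the definition of $\reachable$, its equivalence classes coincide with the connected components of $B(q, r)$, so $q \sim r$ holds iff $B(q, r)$ is connected. Crucially, the three graphs $B(p_0, r), B(p_1, r), B(p_2, r)$ share the same base subgraph $B(p, r)$ and differ only in how the two special added vertices attach to $r$-sets.

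Next, I will coarsen by the components $\mathcal{C}_1, \ldots, \mathcal{C}_m$ of $B(p, r)$. To each component I attach a \emph{type} $\tau_i \subseteq \{X, Y, Z\}$ by setting $W \in \tau_i$ iff some $T \in \mathcal{C}_i \cap r$ intersects $W$. Unfolding the definitions shows that in $B(p_0, r)$ the special $X \cup Y$ is adjacent to $\mathcal{C}_i$ iff $\tau_i \cap \{X, Y\} \neq \emptyset$ and $Z$ is adjacent to $\mathcal{C}_i$ iff $Z \in \tau_i$; the analogous statements hold for $B(p_1, r)$ and $B(p_2, r)$. Since the resulting auxiliary graph is bipartite between components and only two specials, any path between the two specials has length two and goes through a single \emph{bridge} component. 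This yields the characterization: $B(p_b, r)$ is connected iff every $\tau_i$ is nonempty and some $\tau_j$ meets both label groups of the specials of $p_b$. Concretely, a bridge for $p_0$ is a component with $\tau_j \supseteq \{Z, W\}$ for some $W \in \{X, Y\}$; a bridge for $p_1$ has $\tau_j \supseteq \{X, W\}$ with $W \in \{Y, Z\}$; and a bridge for $p_2$ has $\tau_j \supseteq \{Y, W\}$ with $W \in \{X, Z\}$.

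The conclusion now follows by a short case split. Since $p_0 \sim r$, some component has a type containing $\{X, Z\}$ or $\{Y, Z\}$. In the first case this type is automatically a bridge for $p_1$ (it contains $X$ and $Z$), giving $p_1 \sim r$; in the second it is a bridge for $p_2$ (it contains $Y$ and $Z$), giving $p_2 \sim r$. The main technical care goes into justifying the bridge characterization rigorously---in particular the claim that every $\tau_i$ must be nonempty, since a component with empty type would be cut off from both specials---and into handling the degenerate cases where one of $X, Y, Z$ is empty. There, one of $p_1, p_2$ contains $\emptyset$ and is inconsistent with every pattern, while the other coincides with $p_0$, so the claim holds trivially. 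The hypothesis $p \in \Cp$ does not appear to be needed for the argument itself.
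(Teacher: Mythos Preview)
Your argument is correct and rests on the same pivot as the paper's proof, but you organise it differently. The paper works directly with alternating walks: it picks a $(p_0,q)$-walk from $Z$ to $X\cup Y$, examines the last $q$-set $Z'$ before reaching $X\cup Y$, and case-splits on whether $Z'$ meets $X$ (then shows $p_1\sim q$ by explicit walk surgery) or $Y$ (then $p_2\sim q$, by symmetry). Your component--type framework abstracts this: your bridge component for $p_0$ is exactly the $B(p,r)$-component containing such a $Z'$, and your split on whether the bridge type contains $X$ or $Y$ is the same dichotomy. What your packaging buys is a uniform consistency criterion for all three $p_b$ at once (every $\tau_i\neq\emptyset$ plus a bridge), which collapses the final case analysis to one line; the paper's hands-on walk manipulation is less tidy but automatically absorbs edge cases---for instance when $X\cup Y$ or $Z$ already lie in $p$, so your ``specials'' are not genuinely new vertices in the auxiliary graph---that you do not discuss. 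One small slip in your degenerate-case handling: when $Z=\emptyset$ it is $p_0$, not one of $p_1,p_2$, that acquires $\emptyset$, so joint domination is vacuous rather than holding because one of $p_1,p_2$ coincides with $p_0$; the conclusion is unaffected. Your remark that the hypothesis $p\in\Cp$ is never used is correct; the paper's proof does not invoke it either.
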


\begin{proof}
    Let $q\in \Pat$ with $p_0\sim q$. Let $W_0$ be a $(p_0,q)$-alternating walk from $Z$ to $X\cup Y$. We assume that the walk visits $X\cup Y$ exactly once, since otherwise we can cut the part after the first visit. Let $Z'$ be the set preceding $X\cup Y$ on the walk. If $Z'\cap X \neq \emptyset$, we claim that $p_1 \sim q$. Otherwise it must hold that $Z'\cap Y \neq \emptyset$. In that case we claim that $p_2\sim q$. We prove the former one, the latter follows by symmetry.

    Assuming $Z'\cap X \neq \emptyset$, we show that there exists a $(p_1,q)$-alternating walk between $Y\cup Z$ and each set of $S \in p_1\cup q$. This is clearly the case for $S=Y\cup Z$. Let $W$ be the walk resulting from $W_0$ by replacing $Z$ with $Y\cup Z$, and replacing $X\cup Y$ with $X$. Then $W$ is a $(p_1,q)$-alternating walk between $Y\cup Z$ and $X$.
    
    Let $S \in (p_1\cup q)\setminus \{Y\cup Z, X\}$. Then there exists a $(p_0,q)$-alternating walk $P_0$ between $Z$ and $S$. We assume that $P_0$ contains $X\cup Y$ at an odd position at most once by removing the part between the first and the last such occurrences. We build the walk $P$ from $P_0$ as follows: First we replace $Z$ with $Y\cup Z$. Moreover, if $X\cup Y$ appears on the walk, let $T$ be the set following $X\cup Y$. If $T\cap Y\neq \emptyset$, we remove the part of the walk between $Z\cup Y$ (originally $Z$) and $T$ (excluding the endpoints). Otherwise, it must hold that $T \cap X \neq \emptyset$. In this case, we replace the part between $Y\cup Z$ and $T$ with $W$. It holds that $P$ is a $(p_1,q)$-alternating walk between $Y\cup Z$ and $S$ for any choice of $S$.
\end{proof}

\begin{remark}
    If follows from \cref{lem:joint-domination} by symmetry that any two of $p_0, p_1, p_2$ jointly dominate the third of them over $\pat$.
\end{remark}

\begin{lemma}\label{lem:set-parity-rep}
    Given a pattern $p\in \Cp \setminus \CSP$. Let $S\in p \setminus \{Z_p\}$ be an inclusion-wise minimal set, such that $|S|>1$.
    Let $A = A^S_p$ be the family of patterns defined from $p$ as follows
    \[A^S_p = \big\{(p\setminus \{Z_p, S\} )\cup (Z_p \cup S') \colon S'\subset S \big\}.\]
    Then $A$ is a parity-representation of $p$ of size at most $2^{|S|}$.
\end{lemma}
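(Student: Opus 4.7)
Write $r_{S'} := (p \setminus \{Z_p, S\}) \cup \{Z_p \cup S'\}$ and $p^* := p \setminus \{Z_p, S\}$, so that $A^S_p = \{r_{S'} : S' \subsetneq S\}$ and the size bound $|A^S_p| \leq 2^{|S|} - 1 \leq 2^{|S|}$ is immediate. For parity-representation I fix an arbitrary $q \in \Cp$ and aim to verify $\sum_{S' \subsetneq S} [r_{S'} \sim q] \equiv [p \sim q] \pmod{2}$. Because $|S| > 1$, every singleton $\{u\} \in p$ (for $u \in \lbs(p)$) differs from $Z_p$ and $S$ and hence belongs to $p^*$; in particular every $r_{S'}$ is again complete with $\lbs(r_{S'}) = \lbs(p)$, and by \cref{lem:missing-singleton-not-consistent} both $p \sim q$ and $r_{S'} \sim q$ force $\lbs(p) = \lbs(q)$. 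Whenever these label sets differ the congruence is trivially $0 \equiv 0$, so I assume $\lbs(p) = \lbs(q)$ henceforth.

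The main step is to encode consistency through the bipartite ``intersection graph'' $B^*$ with sides $p^*$ and $q$, where $X \in p^*$ is adjacent to $Y \in q$ iff $X \cap Y \neq \emptyset$. By the alternating-walk characterization of the join, $r_{S'} \sim q$ (respectively $p \sim q$) is equivalent to connectedness of the augmented graph $B^* + \{Z_p \cup S'\}$ (respectively $B^* + \{Z_p\} + \{S\}$), where each new vertex carries its intersection edges into $q$. Let $C_1, \ldots, C_m$ be the components of $B^*$ and set
\[I = \{i : \exists T \in C_i \cap q,\; T \cap Z_p \neq \emptyset\}, \quad J = \{i : \exists T \in C_i \cap q,\; T \cap S \neq \emptyset\},\]
together with $T_i := \{u \in S : \exists T \in C_i \cap q,\; u \in T\}$. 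Because $p$ is complete, any two $q$-sets sharing a label $u \in \lbs(p) = \lbs(q)$ are joined in $B^*$ through the singleton $\{u\}_p \in p^*$, which forces each $u \in S$ to lie in at most one $T_i$; hence the $T_i$ are pairwise disjoint subsets of $S$. I then dispose of the degenerate configurations (some $C_i$ contains no $q$-side vertex, or meets neither $Z_p$ nor $S$), in each of which $p$ and every $r_{S'}$ are simultaneously inconsistent with $q$, making the congruence trivial.

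In the remaining main case, $K := [m] \setminus I$ satisfies $K \subseteq J$, and $r_{S'} \sim q$ is equivalent to $S' \cap T_i \neq \emptyset$ for every $i \in K$. Writing $T_0 := S \setminus \bigcup_{i \in K} T_i$, a direct count gives
\[N := |\{S' \subseteq S : r_{S'} \sim q\}| = 2^{|T_0|} \prod_{i \in K}(2^{|T_i|} - 1) \equiv [T_0 = \emptyset] \pmod{2},\]
since each factor $2^{|T_i|} - 1$ is odd. A short check (an element $u \in S$ with $C(u) \in I$ witnesses both) shows $T_0 = \emptyset \Leftrightarrow I \cap J = \emptyset$; moreover $r_S \sim q$ always holds in this case (because $Z_p \cup S$ meets every $q$-set hitting $Z_p$ or $S$), while $p \sim q$ holds iff the $Z_p$- and $S$-blobs of $B^* + \{Z_p\} + \{S\}$ share a common $B^*$-component, i.e.\ iff $I \cap J \neq \emptyset$. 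Combining,
\[\sum_{S' \subsetneq S}[r_{S'} \sim q] \;=\; N - [r_S \sim q] \;\equiv\; [I \cap J = \emptyset] - 1 \;\equiv\; [I \cap J \neq \emptyset] \;\equiv\; [p \sim q] \pmod{2},\]
as required. The hard part is organizing the degenerate-case analysis so that the various side conditions force both sides of the parity equation to vanish simultaneously; once this bookkeeping is in place, the counting identity for $N$ and the identification of $T_0 = \emptyset$ with $I \cap J = \emptyset$ close the argument cleanly.
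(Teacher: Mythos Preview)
Your proof is correct and takes a genuinely different route from the paper's. The paper proceeds recursively: it first shows that a three-element family $C_p^S=\{p_0,p_1,p_2\}$ parity-represents $p$ (using the joint-domination \cref{lem:joint-domination} together with a ``total cover'' argument), and then unrolls this replacement one label of $S$ at a time, proving by a somewhat lengthy induction that the resulting symmetric difference collapses exactly to $A_p^S$. Your argument instead performs a single global parity count: you encode consistency via the bipartite intersection graph $B^*$ on $p^*\cup q$, classify its components by whether they meet $Z_p$ and/or $S$, and reduce the parity of $\sum_{S'}[r_{S'}\sim q]$ to the product $2^{|T_0|}\prod_{i\in K}(2^{|T_i|}-1)$, which is odd iff $T_0=\emptyset$ iff $I\cap J=\emptyset$. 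This is more direct and avoids the iterative bookkeeping; the paper's approach, by contrast, isolates a reusable three-term identity before iterating, which may be handier if one wants intermediate parity-representations along the way.

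One small point worth making explicit: you compute $\sum_{S'\subsetneq S}[r_{S'}\sim q]$, a sum indexed by \emph{subsets} $S'$, whereas the lemma is phrased in terms of the \emph{set} $A_p^S$. These coincide exactly when the $r_{S'}$ are pairwise distinct, i.e.\ when $S\cap Z_p=\emptyset$; otherwise two subsets differing only inside $S\cap Z_p$ give the same pattern and the set-count and multiset-count diverge. The paper's own proof tacitly relies on the same distinctness (its ``all patterns that appear in the expression are different'' step fails when $S\cap Z_p\neq\emptyset$), and in every downstream use within the paper one indeed has $S\cap Z_p=\emptyset$, so nothing breaks---but it would strengthen your write-up to state this hypothesis or to note that your identity really proves that $\bigdelta_{S'\subsetneq S}\{r_{S'}\}$ parity-represents $p$.
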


\begin{proof}
    Let $p' = p \setminus \{S, Z_p\}$.
    Let $u\in S$ be minimal, and $S' = S\setminus\{u\}$. We define $C^S_p = \{p_0, p_1, p_2\}$, where
    \begin{alignat*}{3}
        p_0 =&& p' &\cup \{Z_p\cup\{u\},S'\},& &\\
        p_1 =&& p' &\cup \{Z_p\cup S',\{u\}\} &=& p' \cup \{Z_p\cup S'\},\\
        p_2 =&& p' &\cup \{Z_p,S',\{u\}\} &=&p' \cup \{Z_p,S'\}.
    \end{alignat*}
    The second equalities in both the second and the third lines follow from the fact that $\sing(p)=\lbs(p)$, and hence, $\{u\} \in p$. We claim that the set $C^S_p$ parity-represents $p$.
    First note that by \cref{lem:joint-domination} any two of $p, p_0, p_1$ jointly dominate the third of them over $\Pat$ (and hence, over $\Cp$).
    
    We claim that the set $\{p, p_0, p_1\}$ totally covers $p_2$ over $\Cp$. Assuming this is the case, let $q\in \Cp$. If $p\sim q$ holds, then either $q\sim p_2$, but then $q$ must be consistent with both $p_0$ and $p_1$, or $q$ is not consistent with $p_2$, and hence, $q$ is consistent with exactly one of $p_0$ and $p_1$ (at least one, since they jointly dominate $p$, but not both, since this would imply that $q\sim p_2$). In this case it holds that 
    \[|\{r\in C^S_p\colon r\sim q\}| \in \{1,3\}.\]
    On the other hand, if $q\not\sim p$, then $q\not \sim p_2$, and if $q$ is consistent to one of $p_0$ and $p_1$ then it must be consistent to both of them.
    Hence, it holds that 
    \[|\{r\in C^S_p\colon r\sim q\}| \in \{0,2\},\]
    which shows that $C^S_p$ parity-represents $p$.

    Now we show that the $\{p, p_0, p_1\}$ totally covers $p_2$.
    It follows from \cref{lem:adding-subset-dom} and \cref{lem:removing-element-dom} that each of these three patterns dominates $p_2$. We show for all $q\in \Cp$, that if $q$ is consistent with all $p,p_0$ and $p_1$, then $q\sim p_2$ holds as well. Note that if a $(p_2,q)$-alternating walk from $\{u\}$ to $S'$ exists in $p_2$, then $p_2 \sim q$ follows from $p \sim q$, since $p$ results from $p_2$ by unifying these two sets.
    
    However, if such a walk doesn't exist, then there exists no $(p_2,q)$-alternating walk from $Z_p$ to $\{u\}$, or there exists no $(p_2,q)$-alternating walk from $Z_p$ to $S'$, since otherwise, we can combine these two walks into a $(p_2,q)$-alternating walk from $\{u\}$ to $S'$. The former case implies that there exists no $(p_1,q)$-alternating walk from $Z_p\cup S'$ to $\{u\}$, which contradicts the assumption that $p_1\sim q$, while the latter implies that there exists no $(p_0,q)$-alternating walk from $Z_p\cup\{u\}$ to $S'$ which contradicts the assumption that $p_0\sim r$. Hence, a $(p_2,q)$-alternating walk from $\{u\}$ to $S'$ must exist, and hence, $p_2\sim r$ holds.

Now let $S = \{u_1,\dots u_r\}$ where $u_1, \dots u_r$ are ordered. Let $S_i = \{u_1, \dots u_i\}$ and $\overline{S_i} = S\setminus S_i$. For a pattern $q$, let $S_q \subseteq S$ be the only set in $q$ that is a subset of $S$ of size at least two, if $q$ contains exactly one such set, or $S_q$ is undefined otherwise. If $S_q$ is defined, let $u$ be the minimum element in $S_q$. We define $S'_q = S_q\setminus\{u\}$. If $S_q$ is undefined, then $S'_q$ is undefined as well. Moreover, let $C_q = C_q^{S_q}$ if $S_q$ is defined, or $C_q = \{q\}$ otherwise. We define $A_0 = \{p\}$. We define $A_{i+1}$ from $A_i$ as follows
\[A_{i+1}=\bigdelta\limits_{q\in A_i}C_q.\]
Using this definition, we claim that for each $i\in[r]$ it holds that
\[A_i = \bigcup_{S'\subseteq S_i} \left(p'\cup \{Z_p\cup S', \overline{S_i}\}\right)
				  \cup \bigcup_{S'\subset S_i} \left(p' \cup \{Z_p\cup S' \cup \overline{S_i}\}\right).\]
Since $C_p$ parity-represents $p$, it holds by \cref{lem:mutex-preserve-parity-rep} that $A_i$ parity-represents $A_{i-1}$. Hence, by transitivity of parity-representation (\cref{obs:par-rep-is-equiv}) it holds that $A_i$ parity-represents $A_0$. Assuming the claim above is true, it holds for $i = r-1$ that
\begin{align*}
A_{r-1} =& \bigcup\limits_{S'\subseteq S_{r-1}}\left(p' \cup \{Z_p\cup S', \{r\}\}\right) \cup  \bigcup_{S'\subset S_{r-1}} \left(p' \cup \{Z_p\cup S' \cup \{r\}\}\right)\\
=&\bigcup_{S'\subset S} \left(p'\cup \{Z_p \cup S' \}\right) = A.
\end{align*} 
Therefore, assuming the claim, it holds that $A = A_{r-1}$ parity-represents $A_0 = \{p\}$, which completes the proof.

Now we prove the aforementioned claim by induction over $i\in \{0,\dots i-1\}$.
For $i=0$, the claim holds trivially, since $S_0 = \emptyset$ does not admit any subset. Now assume the correctness for some fixed value $i$. We prove the claim for $i+1$.
Let $L=\bigcup_{S'\subseteq S_i} p' \cup \{Z_p\cup S', \overline{S_i}\}$ and $R = \bigcup_{S'\subset S_i} p' \cup \{Z_p\cup S' \cup \overline{S_q}\}$. It holds that $L$ and $R$ are disjoint, and by induction hypothesis, it holds that $A_i = L\dot{\cup} R$. Hence,
\[A_{i+1} = \bigdelta\limits_{q\in L}C_q \triangle \bigdelta\limits_{q \in R}C_q.\]
Note that $C_q = \{q\}$ for all $q\in R$. For $q\in L$ it holds that $C_q = C^{S_q}_q$, where $S_q = \overline{S_i}$ and $u_{i+1}$ is minimal in $S_q$, and hence, $S'_q = \overline{S_{i+1}}$. It follows that,
\begin{alignat*}{2}
    A_{i+1} = &\bigdelta\limits_{q\in L}\Big\{&&(q\setminus \big\{S_q, Z_q\big\})\cup \big\{Z_q\cup \{u_{i+1}\}, S'_q\big\},\\
    & &&(q\setminus\big\{S_q, Z_q\big\})\cup \big\{Z_q \cup S'_q, \{u_{i+1}\}\big\},\\
    & &&(q \setminus \big\{S_q, Z_q\big\})\cup \big\{Z_q,S'_q, \{u_{i+1}\}\big\}
    \Big\}\triangle \bigdelta\limits_{q \in R}\{q\}.\\
    = &\bigdelta\limits_{S'\subseteq S_i}\Big\{&&p'\cup\big\{Z_p\cup S'\cup \{u_{i+1}\}, \overline{S_{i+1}}\big\},
    p'\cup\big\{Z_p\cup S' \cup \overline{S_{i+1}}\big\},\\
    & && p'\cup \big\{Z_p\cup S', \overline{S_{i+1}}\big\}\Big\}
    \triangle \bigdelta\limits_{S'\subset S_i}\Big\{p'\cup\big\{Z_p\cup S'\cup \overline{S_{i}}\big\}\Big\}\\
    = &\bigcup\limits_{S'\subseteq S_i}\Big\{&&p'\cup\big\{Z_p\cup S'\cup \{u_{i+1}\}, \overline{S_{i+1}}\big\},
    p'\cup\big\{Z_p\cup S' \cup \overline{S_{i+1}}\big\},\\
    & && p'\cup \big\{Z_p\cup S', \overline{S_{i+1}}\big\}\Big\}
    \cup \bigcup\limits_{S'\subset S_i}\Big\{p'\cup\big\{Z_p\cup S'\cup \overline{S_{i}}\big\}\Big\}\\
    = &\bigcup\limits_{S'\subseteq S_i}\Big\{&&p'\cup\big\{Z_p\cup S'\cup \{u_{i+1}\}, \overline{S_{i+1}}\big\},
    p'\cup \big\{Z_p\cup S', \overline{S_{i+1}}\big\}\Big\} \cup\\
    &\bigcup\limits_{S'\subseteq S_{i}}\Big\{ && p'\cup\big\{Z_p\cup S' \cup \overline{S_{i+1}}\big\}\Big\} \cup
    \bigcup\limits_{S'\subset S_{i}}\Big\{p'\cup\big\{Z_p\cup S'\cup \overline{S_{i}}\big\}\Big\}\\
    = &\bigcup\limits_{S'\subseteq S_{i+1}}\Big\{&&p'\cup\big\{Z_p\cup S'\big\}, \overline{S_{i+1}}\Big\} \cup\\
    &\bigcup\limits_{\substack{S'\subset S_{i+1}\\u_{i+1}\notin S'}}\Big\{ && p'\cup\big\{Z_p\cup S' \cup \overline{S_{i+1}}\big\}\Big\} \cup
    \bigcup\limits_{\substack{S'\subset S_{i+1}\\u_{i+1}\in S'}}\Big\{p'\cup\big\{Z_p\cup S'\cup \overline{S_{i+1}}\big\}\Big\}\\
    = &\bigcup\limits_{S'\subseteq S_{i+1}}\Big\{&&p'\cup\big\{Z_p\cup S'\big\}, \overline{S_{i+1}}\Big\} \cup
    \bigcup\limits_{S'\subset S_{i+1}}\Big\{p'\cup\big\{Z_p\cup S'\cup \overline{S_{i+1}}\big\}\Big\}\\
\end{alignat*}
where the third equality follows from the fact that all patterns that appear in the expression are different. This follows from the fact that we can partition the expression into four families of patterns, each defined by a different zero-set: One containing $u_{i+1}$ but no element of $\overline{S_{i+1}}$, one containing all element of $\overline{S_{i+1}}$ but not $u_{i+1}$, one containing neither, and on containing both. Since we are only interested in computing $A_{r-1}$, it holds that $i\leq r-2$, and hence, $\overline{S_{i+1}}$ is not empty. On the other hand, the patterns appearing in each such family are defined by different sets $S'$, and it is easy to see that they are pairwise different.
\end{proof}

\begin{lemma}\label{lem:pat-parity-rep}
    Given a complete pattern $p\in \Cp$. 
    Let $S_1, \dots S_r$ be the sets in $p$ such that $S_i \neq Z_p$ and $|S_i|\geq 2$ for each $i\in [r]$.
    Let $A_0 = \{p\}$. For $i\in [r]$, we define
    \[ A_i = \bigdelta\limits_{q \in A_{i-1}} A_q^{S_i}.\]
    Let $A_p = A_r$.
    Then it holds that $A_p \subseteq \CSP$, and that it parity-represents $p$.
\end{lemma}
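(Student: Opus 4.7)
The plan is a simultaneous induction on $i \in \{0, 1, \dots, r\}$ establishing the joint invariant: every $q \in A_i$ is a complete pattern whose non-zero, non-singleton sets are exactly $\{S_{i+1}, \dots, S_r\}$, and $A_i$ parity-represents $p$. Taking $i = r$ then yields both conclusions of the lemma at once: no non-zero, non-singleton set remains in any $q \in A_r$, hence $A_p = A_r \subseteq \CSP$, and $A_r$ parity-represents $p$.

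Before starting I would reorder $S_1, \dots, S_r$ so that $|S_1| \le \cdots \le |S_r|$. This guarantees that whenever the inductive step applies Lemma \ref{lem:set-parity-rep} to a pattern $q \in A_{i-1}$ with the set $S_i$, the set $S_i$ is inclusion-wise minimal among the non-zero, non-singleton sets of $q$, since by the structural invariant these are exactly $\{S_i, \dots, S_r\}$ and none of $S_{i+1}, \dots, S_r$ has strictly smaller cardinality.

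The base case $i = 0$ is immediate from $A_0 = \{p\}$ and the hypotheses on $p$. For the step, fix any $q \in A_{i-1}$. The invariant gives $q \in \Cp \setminus \CSP$ with $S_i \in q$, $|S_i| \ge 2$, and $S_i$ inclusion-wise minimal, so Lemma \ref{lem:set-parity-rep} applies and yields that $A_q^{S_i}$ parity-represents $q$. Inspecting the definition of $A_q^{S_i}$, every member has the form $(q \setminus \{Z_q, S_i\}) \cup \{Z_q \cup S'\}$ for some $S' \subseteq S_i$, so its non-zero, non-singleton sets are precisely $\{S_{i+1}, \dots, S_r\}$. Completeness is preserved because the singletons of $q$ are left untouched, and by completeness of $q$ they already cover every label of $S_i$ and of $Z_q$, hence every label of the new zero-set $Z_q \cup S'$. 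This establishes the structural half of the invariant at level~$i$.

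For the parity half, I would iterate Lemma \ref{lem:mutex-preserve-parity-rep} across the (pairwise distinct) elements of $A_{i-1}$: since $A_q^{S_i}$ parity-represents $\{q\}$ for every $q$, the symmetric-difference union $A_i = \bigdelta_{q \in A_{i-1}} A_q^{S_i}$ parity-represents $\bigdelta_{q \in A_{i-1}} \{q\} = A_{i-1}$, and composing with the inductive hypothesis via transitivity of parity-representation (Observation \ref{obs:par-rep-is-equiv}) gives that $A_i$ parity-represents $p$. The main point of care is the simultaneous maintenance of the structural part of the invariant, which both unlocks the next invocation of Lemma \ref{lem:set-parity-rep} and forces all non-zero, non-singleton sets to vanish precisely when $i = r$, producing the desired $CS$-patterns.
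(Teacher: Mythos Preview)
Your proof is correct and follows essentially the same inductive scheme as the paper's own argument: both establish that $A_i$ parity-represents $A_{i-1}$ via \cref{lem:set-parity-rep} and \cref{lem:mutex-preserve-parity-rep} (the paper routes this through \cref{cor:partial-parity-rep}), and both track that the surviving non-singleton non-zero sets in every $q\in A_i$ are exactly $S_{i+1},\dots,S_r$, forcing $A_r\subseteq\CSP$. Your treatment is in fact a bit more careful than the paper's on two points: you explicitly verify that completeness is preserved when passing from $q$ to the members of $A_q^{S_i}$, and you observe that \cref{lem:set-parity-rep} requires $S_i$ to be inclusion-wise minimal among the non-zero sets of size at least two in $q$, which you arrange by sorting the $S_i$ by cardinality---a point the paper leaves implicit.
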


\begin{proof}
    It holds by \cref{lem:set-parity-rep} that $A^{S_i}_q$ parity-represents $q$, and
    by \cref{cor:partial-parity-rep} it follows that $A_i$ parity-represents $A_{i-1}$ for each $i\in[r]$. By transitivity of parity-representation, it holds that $A_r$ parity-represents $p$. Note that for each $i\in [r]$, $A_i$ results from $A_{i-1}$ by replacing each pattern $q$ by the patterns resulting from $q$ by removing $S_i$ and adding all its proper subsets to $Z_q$. Hence, for each $i\in [r]_0$, the sets $S_{i+1}\dots S_r$ are the only sets in $A_i$ that are different from $Z_p$ and have size at least two. This implies that $A_r \subseteq \CSP$.
\end{proof}

\begin{corollary}\label{cor:family-parity-rep}
    For each family of complete patterns $D\subseteq \Cp$, there exists a family of $CS$-patterns $C\subseteq \CSP$ that parity-represents $D$.
\end{corollary}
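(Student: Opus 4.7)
The plan is to reduce the statement to the already proved single-pattern case (Lemma~\ref{lem:pat-parity-rep}) by gluing together the per-pattern parity-representations with symmetric differences, using the fact that parity-representation is closed under symmetric difference (Lemma~\ref{lem:mutex-preserve-parity-rep}).

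Concretely, enumerate $D = \{p_1, \dots, p_m\}$ (all distinct, since $D$ is a set). By Lemma~\ref{lem:pat-parity-rep}, for each $i \in [m]$ there is a family $A_{p_i} \subseteq \CSP$ that parity-represents the singleton $\{p_i\}$. Define
\[
C \;=\; A_{p_1} \triangle A_{p_2} \triangle \dots \triangle A_{p_m}.
\]
Since each $A_{p_i}$ is a subset of $\CSP$ and $\CSP$ is closed under symmetric difference (as a subset of a power-set), we immediately get $C \subseteq \CSP$.

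Next I would verify that $C$ parity-represents $D$. Applying Lemma~\ref{lem:mutex-preserve-parity-rep} inductively on $m$: at each step, if $C_i := A_{p_1} \triangle \dots \triangle A_{p_i}$ parity-represents $\{p_1\} \triangle \dots \triangle \{p_i\}$, then combining with $A_{p_{i+1}}$ (which parity-represents $\{p_{i+1}\}$) yields a family that parity-represents $\{p_1\} \triangle \dots \triangle \{p_{i+1}\}$. Since the $p_j$ are pairwise distinct, $\{p_1\} \triangle \dots \triangle \{p_m\} = D$, and thus $C$ parity-represents $D$.

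There is no real obstacle here: the only subtle point is ensuring that the symmetric differences on the right-hand side collapse to $D$, which is automatic because $D$ is a set of distinct elements. Since parity-representation is an equivalence relation (Observation~\ref{obs:par-rep-is-equiv}), the inductive step is clean and requires no further computation.
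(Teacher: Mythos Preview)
Your proposal is correct and follows essentially the same approach as the paper: define $C$ as the symmetric difference of the per-pattern families $A_{p}$ from Lemma~\ref{lem:pat-parity-rep}, and use closure of parity-representation under symmetric difference. The paper invokes Corollary~\ref{cor:partial-parity-rep} (itself an immediate consequence of Lemma~\ref{lem:mutex-preserve-parity-rep}) rather than inducting directly on Lemma~\ref{lem:mutex-preserve-parity-rep}, but this is a cosmetic difference.
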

\begin{proof}
    Let $C=\bigdelta\limits_{p\in D}A_p$, as defined in \cref{lem:pat-parity-rep}. Clearly, it holds that $C \subseteq \CSP$ , and by \cref{cor:partial-parity-rep} that $C$ parity-represents $D$.
\end{proof}

\begin{definition}
    Let $\op:\Pat^k \rightarrow \Pat$ be a $k$-ary operation over patterns, for some positive integer $k$, and $\exop{\op}$ be the exclusive version of it. We say that $\exop{\op}$ \emph{preserves parity-representation}, if for all sets of patterns $P_1, \dots P_k, Q_1, \dots Q_k \in \Pat$, where $P_i$ parity-represents $Q_i$ for all $i \in [k]$, it holds that $\exop{op}(P_1,\dots P_k)$ parity-represents $\exop{\op}(Q_1, \dots Q_k)$.
\end{definition}

\begin{lemma}\label{lem:ops-preserve-parity-rep}
	The operations exclusive relabel $\exrel{S}{i}{j}$, exclusive join $S_1 \exjoin S_2$ and exclusive union $S_1 \exunion S_2$ preserve parity-representation.
\end{lemma}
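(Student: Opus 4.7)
The plan is to prove each of the three preservation statements by the same three-step strategy, parallel to the proof of \cref{lem:ops-preserve-rep} but replacing existence with parity counting. First, by \cref{lem:exclusion-preserve-parity}, for any $q \in \Cp$ the parity of $|\{p \in \exop{\op}(P_1,\ldots,P_k) : p \sim q\}|$ equals the parity of $|\{(p_1,\ldots,p_k) \in P_1 \times \cdots \times P_k : \op(p_1,\ldots,p_k) \sim q\}|$. Second, the sibling consistency lemma for each operation --- \cref{lem:rel-is-adding-label-other-side} for relabel, \cref{lem:join-is-join-other-side} for join, and \cref{cor:union-is-join-other-side} for union --- allows rewriting the condition $\op(p_1,\ldots,p_k) \sim q$ by transferring $q$ to the other side of the consistency relation. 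Third, I would apply the parity-representation hypothesis one coordinate at a time to swap $P_i$ for $Q_i$.

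Concretely, in the join case I fix $q \in \Cp$; combining the first two steps yields
\[
|\{p \in P_1 \exjoin P_2 : p \sim q\}| \equiv \sum_{p_2 \in P_2} |\{p_1 \in P_1 : p_1 \sim p_2 \join q\}| \pmod 2.
\]
Applying the parity-representation hypothesis to $P_1$ against the test pattern $p_2 \join q$ replaces $P_1$ by $Q_1$ in each inner count. Exchanging the order of summation, then using the commutativity of join together with \cref{lem:join-is-join-other-side} to rewrite $p_1 \sim p_2 \join q$ as $p_2 \sim p_1 \join q$, and finally applying the parity-representation hypothesis to $P_2$, produces the corresponding count for $Q_1 \exjoin Q_2$. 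The relabel and union cases are handled analogously (relabel is unary, so only a single swap is needed, while union requires first invoking the disjoint relabeling of \cref{cor:union-is-join-other-side} before proceeding as in the join case).

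The hard part will be that the pattern $q'$ obtained after pushing $q$ across the operation --- for instance $p_2 \join q$ for join or $q_{j \curvearrowleft i}$ for relabel --- need not be complete, whereas the parity-representation hypothesis only addresses consistency with complete patterns. To handle this, I would show that for $P \subseteq \Cp$ the count $|\{p \in P : p \sim q'\}| \pmod 2$ can be expanded as a $\bin$-sum of parity counts against patterns in the complete representation $R_{q'} \subseteq \Cp$. The key observation is that for $p \in \Cp$, $r \in \Pat$, and $i \in \inc(r)$, with $r_1 = \fix(r, i)$ and $r_2 = \forget(r, i)$, exactly one of $[p \sim r_1]$ and $[p \sim r_2]$ vanishes depending on whether $i \in \lbs(p)$: if $i \in \lbs(p)$ the singleton $\{i\} \in p$ stays isolated in any alternating walk with $r_2$ (which has no $i$), forcing $p \not\sim r_2$, and symmetrically $p \not\sim r_1$ when $i \notin \lbs(p)$, since the added singleton $\{i\} \in r_1$ cannot meet any set of $p$. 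Hence $[p \sim r] \equiv [p \sim r_1] + [p \sim r_2] \pmod 2$, and iterating this across all labels in $\inc(r)$ reduces the parity-representation hypothesis to the complete patterns of $R_{q'}$, completing the argument.
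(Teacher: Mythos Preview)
Your plan mirrors the paper's proof exactly: the same chain of congruences via \cref{lem:exclusion-preserve-parity} and the transfer lemmas (\cref{lem:rel-is-adding-label-other-side}, \cref{lem:join-is-join-other-side}, \cref{cor:union-is-join-other-side}), swapping one coordinate at a time. You have in fact gone further than the paper by flagging a real subtlety that the paper glosses over: when the paper invokes the parity-representation hypothesis against test patterns such as $p_1\join q$ or $q_{j\curvearrowleft i}$, it does not note that these need not lie in $\Cp$, even though parity-representation is only defined against complete test patterns. Your identity $[p\sim r]\equiv[p\sim\fix(r,i)]+[p\sim\forget(r,i)]\pmod 2$ for $p\in\Cp$ and $i\in\inc(r)$ is correct (it follows from \cref{lem:forget-and-fix-rep} together with \cref{lem:missing-singleton-not-consistent}), and iterating it over $\inc(r)$ indeed reduces the count against $r$ to a $\bin$-sum of counts against the patterns in $R_r\subseteq\Cp$.

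Make explicit, though, that your fix relies on all the argument sets $P_i,Q_i$ lying in $\Cp$; without this restriction the lemma as stated is actually false. Over $U=\{1\}$, take $P_1=\{[01]\}$, $Q_1=\{[0],[01,1]\}$, and $P_2=Q_2=\{[01]\}$. One checks that $Q_1$ parity-represents $P_1$ (against each of $[0]$, $[0,1]$, $[01,1]$ both sides have count $1$), yet $P_1\exjoin P_2=\{[01]\}$ while $Q_1\exjoin Q_2=\emptyset$, which disagree in parity against $q=[0]\in\Cp$. Since every application of the lemma in the paper has $\bas_{x'}\subseteq\CSP\subseteq\Cp$ and $\drep_{x'}\subseteq\Cp$, the restriction is harmless, but you should state it.
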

\begin{proof}
	In the following, let $S_1, S_2, T_1, T_2 \subseteq \Pat$ such that $T_1$ and $T_2$ parity-represent $S_1$ and $S_2$ respectively.
	\begin{itemize}
        \item[]
    Exclusive relabel: let $S = \exrel{(S_1)}{i}{j}$ and $T = \exrel{(T_1)}{i}{j}$. Let $q \in \Cp$. It holds that
	\begin{alignat*}{3}
		&|\{p \in S\colon p \sim q\}| &&\bquiv
		|\{p\in S_1\colon p_{i\rightarrow j}\sim q\}|&\bquiv\\
	        &|\{p \in S_1\colon p \sim q_{j\curvearrowleft i}\}| &&\bquiv
		|\{p \in T_1\colon p\sim q_{j\curvearrowleft i}\}| &\bquiv\\
	        &|\{p \in T_1\colon p_{i\rightarrow j}\sim q\}| &&\bquiv
		|\{p \in T\colon p \sim q\}|,
    \end{alignat*}
    where the first and the last congruences follow from \cref{lem:exclusion-preserve-parity}, while the second and the fourth follow from \cref{lem:rel-is-adding-label-other-side}.
    \item[] 
    Exclusive union: let $S = S_1 \exunion S_2$ and $T = T_1 \exunion T_2$. It holds for $q\in \Cp$ that
    \begin{alignat*}{3}
        &|\{p\in S\colon p\sim q\}| &&\bquiv
        \sum\limits_{p_1\in S_1}|\{p_2\in S_2\colon p_1 \punion p_2 \sim q\}| &\bquiv\\
        &\sum\limits_{p_1\in S_1}|\{p_2\in S_2\colon p_2 \sim p'_1\join q'\}|&&\bquiv
        \sum\limits_{p_1\in S_1}|\{p_2\in T_2\colon p_2 \sim p'_1\join q'\}|&\bquiv\\
        &\sum\limits_{p_1\in S_1}|\{p_2\in T_2\colon p_2 \punion p_1\sim q'\}|&&\bquiv
        \sum\limits_{p_2\in T_2}|\{p_1\in S_1\colon p_1 \punion p_2 \sim q\}| &\bquiv\\
        &\sum\limits_{p_2\in T_2}|\{p_1\in S_1\colon p_1 \sim p'_2 \join q'\}| &&\bquiv
        \sum\limits_{p_2\in T_2}|\{p_1\in T_1\colon p_1 \sim p'_2 \join q'\}| &\bquiv\\
        &\sum\limits_{p_2\in T_2}|\{p_1\in T_1\colon p_1 \punion p_2 \sim q\}| &&\bquiv
        |\{p\in T\colon p\sim q\}|,
    \end{alignat*} 
    where $p'_1, p'_2, q'$ are the patterns resulting from $p_1, p_2$ and $q$ respectively as defined in \cref{cor:union-is-join-other-side}.
    Note that the first and the last congruences follow from \cref{lem:exclusion-preserve-parity}, the second, fourth, sixth and the eighth congruences follow from \cref{cor:union-is-join-other-side}, while the third and the seventh follow from the assumption, that $T_1$ and $T_2$ parity-represent $S_1$ and $S_2$ respectively.
    \item[] Exclusive join: let $S = S_1 \exjoin S_2$ and $T = T_1 \exjoin T_2$. It holds for $q\in \Cp$ that
    \begin{alignat*}{3}
        &|\{p\in S\colon p\sim q\}| &&\bquiv
        \sum\limits_{p_1\in S_1}|\{p_2\in S_2\colon p_1 \join p_2 \sim q\}| &\bquiv\\
        &\sum\limits_{p_1\in S_1}|\{p_2\in S_2\colon p_2 \sim p_1\join q\}| &&\bquiv
        \sum\limits_{p_1\in S_1}|\{p_2\in T_2\colon p_2 \sim p_1\join q\}| &\bquiv\\
        &\sum\limits_{p_1\in S_1}|\{p_2\in T_2\colon p_2 \join p_1\sim q\}| &&\bquiv
        \sum\limits_{p_2\in T_2}|\{p_1\in S_1\colon p_1 \join p_2 \sim q\}| &\bquiv\\ 
        &\sum\limits_{p_2\in T_2}|\{p_1\in S_1\colon p_1 \sim p_2 \join q\}| &&\bquiv 
        \sum\limits_{p_2\in T_2}|\{p_1\in T_1\colon p_1 \sim p_2 \join q\}| &\bquiv\\ 
        &\sum\limits_{p_2\in T_2}|\{p_1\in T_1\colon p_1 \join p_2 \sim q\}| &&\bquiv 
        |\{p\in T\colon p\sim q\}|,&
    \end{alignat*} 
    where the first and the last congruences follow from \cref{lem:exclusion-preserve-parity}, the second, fourth, sixth and the eighth congruences follow from \cref{lem:join-is-join-other-side}, while the third and the seventh follow from the assumption, that $T_1$ and $T_2$ parity-represent $S_1$ and $S_2$ respectively.
    \end{itemize}
\end{proof}

\begin{lemma}\label{lem:patadd-preserve-parity-rep}
    Given two families of complete patterns $C,D\subseteq \Cp$, such that $C$ parity-represents $D$, it follows that $\exadd_{i,j}C$ parity-represents $\exadd_{i,j}D$.
\end{lemma}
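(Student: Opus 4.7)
The plan is to reduce the statement, via Lemma~\ref{lem:exclusion-preserve-parity}, to showing that for every $q\in\Cp$,
\[
\sum_{p\in C}[\patadd_{i,j}p\sim q]\bquiv \sum_{p\in D}[\patadd_{i,j}p\sim q],
\]
and to rewrite each side as a $\bin$-linear combination of expressions $\sum_{p\in C}[p\sim q']$ with $q'\in\Cp$, at which point two applications of the parity-representation hypothesis finish the proof.

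First I would split $C=C_1\cup C_2$ according to whether $\{i,j\}\subseteq\lbs(p)$, so that $\patadd_{i,j}p=p\join[ij]$ on $C_1$ and $\patadd_{i,j}p=p$ on $C_2$. For $p\in C_1$, Lemma~\ref{lem:patadd-is-union-single-otherside} yields $[p\join[ij]\sim q]=[p\sim q^*]$, where $q^* := q\punion[ij,i,j]$. The crucial observation is that $q^*$ lies in $\Cp$: unfolding the union operation gives $q^*=q\cup\{\{i,j\},\{i\},\{j\}\}$, so $\lbs(q^*)=\lbs(q)\cup\{i,j\}=\sing(q)\cup\{i,j\}=\sing(q^*)$. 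Because $\{i,j\}\subseteq\lbs(q^*)$, Lemma~\ref{lem:complete-consistentcy} forces $\{i,j\}\subseteq\lbs(p)$ whenever $p\sim q^*$, so the conditioning can be dropped and $\sum_{p\in C_1}[p\sim q^*]=\sum_{p\in C}[p\sim q^*]$. Analogously, Lemma~\ref{lem:complete-consistentcy} makes $\sum_{p\in C_2}[p\sim q]$ vanish when $\{i,j\}\subseteq\lbs(q)$ and equal $\sum_{p\in C}[p\sim q]$ otherwise, since a complete $p\in C$ consistent with $q\in\Cp$ is forced to agree with $q$ on labels and hence to lie in $C_2$.

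Combining these two manipulations produces the single congruence
\[
\sum_{p\in C}[\patadd_{i,j}p\sim q]\bquiv [\{i,j\}\not\subseteq\lbs(q)]\cdot\sum_{p\in C}[p\sim q] + \sum_{p\in C}[p\sim q^*],
\]
and the same identity holds with $D$ in place of $C$; since both $q$ and $q^*$ lie in $\Cp$, invoking the parity-representation of $D$ by $C$ at each of these two patterns completes the argument. The main obstacle, and the reason the exclusive-join case of Lemma~\ref{lem:ops-preserve-parity-rep} does not apply directly, is that $\patadd_{i,j}$ is only conditionally equal to a plain join with $[ij]$; the device that resolves this is to absorb the case split into the companion pattern $q^*$ via Lemma~\ref{lem:patadd-is-union-single-otherside}, with the verification that $q^*$ remains a complete pattern being the step without which the parity-representation hypothesis could not be invoked.
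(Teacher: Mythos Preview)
Your proposal is correct and follows essentially the same approach as the paper: split according to whether $\{i,j\}\subseteq\lbs(p)$, use Lemma~\ref{lem:patadd-is-union-single-otherside} on the first part to pass to the companion pattern $q^*=q\punion[ij,i,j]$, drop the conditioning via label-agreement of consistent complete patterns, and handle the second part by a case distinction on $\{i,j\}\subseteq\lbs(q)$. One point you make explicit that the paper leaves implicit is the verification that $q^*\in\Cp$, which is indeed required before the parity-representation hypothesis can be invoked at $q^*$.
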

\begin{proof}
    Let $q\in \Cp$. It holds for $X\in\{C,D\}$ by \cref{lem:exclusion-preserve-parity} that
    \begin{align*}
        &|\{p\in \exadd_{i,j}X\colon p\sim q\}|
        \bquiv |\{p\in X\colon \patadd_{i,j}p \sim q\}|\bquiv\\
        &\underbrace{|\{p\in X\colon \{i,j\}\subseteq p \land \patadd_{i,j}p \sim q\}|}_{\mathcal{L}_X} +
        \underbrace{|\{p\in X\colon \{i,j\}\not\subseteq p \land \patadd_{i,j}p \sim q\}|}_{\mathcal{R}_X}.\\
        \end{align*}
        It holds by \cref{lem:patadd-is-union-single-otherside} that
        \begin{align*}
            \mathcal{L}_X &\bquiv |\{p\in X\colon \{i,j\}\subseteq\lbs(p)\land p\sim q\punion[ij,i,j]\}|\\
            &\bquiv |\{p\in X\colon p\sim q\punion[ij,i,j]\}|,
        \end{align*}
        where the second congruence holds by \cref{lem:missing-singleton-not-consistent}, since $\{i,j\}\subseteq\sing(q\punion[ij,i,j])$.
        Hence, it holds that
        \begin{align*}
            \mathcal{L}_D &\bquiv |\{p\in D\colon p\sim q\punion[ij,i,j]\}|\\
            &\bquiv|\{p\in C \colon p\sim q\punion[ij,i,j]\}|\bquiv \mathcal{L}_C.
        \end{align*}
        On the other hand, it holds that
        \begin{align*}
            \mathcal{R}_X &= |\{p\in X\colon \{i,j\}\not\subseteq p \land \patadd_{i,j}p \sim q\}|\\
            &= |\{p\in X\colon \{i,j\}\not\subseteq p \land p \sim q\}|.\\
        \end{align*}
        If $\{i, j\} \subseteq \lbs(q)$, then it holds by \cref{lem:complete-consistentcy} that $\mathcal{R}_C = \mathcal{R}_D = 0$. Assuming $\{i, j\}\not \subseteq \lbs(q)$, it holds by \cref{lem:complete-consistentcy} as well that 

        \begin{align*}
        \mathcal{R}_C 
        &= |\{p\in C\colon \{i,j\}\not\subseteq p \land p \sim q\}|\\
        &= |\{p\in C\colon p \sim q\}|\\
        &\bquiv |\{p\in D\colon p \sim q\}|\\
        &= |\{p\in D\colon \{i,j\}\not\subseteq p \land p \sim q\}|\\
        &=\mathcal{R}_D.
        \end{align*}

        Finally, it follows that
    \[
        |\{p\in \exadd_{i,j}D\colon p\sim q\}|\bquiv \mathcal{L}_D+\mathcal{R}_D\bquiv
        \mathcal{L}_C+\mathcal{R}_C\bquiv |\{p\in \exadd_{i,j}C\colon p\sim q\}|.
    \]
\end{proof}

\begin{lemma}\label{lem:ac-preserve-parity-rep}
    Let $D, C\subseteq \Cp$ such that $C$ parity-represents $D$. Let $i,j \in U$ with $i<j$. Let $D' = \exadd_{i,j}D$, and $C' =  \exadd_{i,j}C$. Then it holds for all $\ell\in[4]$ that $\exac(C', \ell)$ parity-represents $\exac(D', \ell)$.
\end{lemma}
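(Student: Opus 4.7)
The plan is to reduce the claim to a verifiable parity identity for each $q \in \Cp$. First I apply \cref{lem:patadd-preserve-parity-rep} to conclude that $C'$ parity-represents $D'$, and then by \cref{lem:exclusion-preserve-parity} it suffices to show, for every $q \in \Cp$ and the given $\ell \in [4]$,
\[
\sum_{p \in C'}[\action(p, \ell) \sim q] \bquiv \sum_{p \in D'}[\action(p, \ell) \sim q].
\]
By \cref{obs:complete-closed-ops}, every $p \in C' \cup D'$ has $\inc(p) \in \{\emptyset, \{i,j\}\}$: on the first class $\action(p,\ell) = p$, while on the second $\action(p, \ell)$ equals one of the four complete patterns $\alpha_1(p), \dots, \alpha_4(p)$ comprising $R_p$ from \cref{def:complete-rep}. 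Setting $C_{ij} = \{p \in C : \{i,j\} \subseteq \lbs(p)\}$ and $C_{\overline{ij}} = C \setminus C_{ij}$ (analogously for $D$), I then have the disjoint decomposition $C' = C_{\overline{ij}} \,\dot\cup\, \exadd_{i,j}(C_{ij})$, since $\patadd_{i,j}$ is the identity on $C_{\overline{ij}}$ and produces patterns with $\inc = \{i,j\}$ on $C_{ij}$.

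The core step will be a ``single-action'' principle: for every $p$ with $\inc(p) = \{i,j\}$ and every $q \in \Cp$,
\[
[\action(p, \ell) \sim q] = [\ell = \ell(q)] \cdot [p \sim q],
\]
where $\ell(q) \in [4]$ is a unique index determined purely by whether $i$ and $j$ lie in $\lbs(q)$. This holds because $R_p = \{\alpha_1(p), \dots, \alpha_4(p)\}$ represents $p$ by \cref{lem:pattern-complete-rep}, so $p \sim q$ iff \emph{some} $\alpha_\ell(p) \sim q$; and \cref{lem:complete-consistentcy} applied to the complete patterns $\alpha_\ell(p)$ and $q$ forces $\lbs(\alpha_\ell(p)) = \lbs(q)$, which picks out the unique $\ell$ because the four label sets $\lbs(\alpha_\ell(p))$ differ exactly in whether each of $i$ and $j$ is removed from $\lbs(p)$.

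Using the principle, the target sum collapses into $|\{p \in C_{\overline{ij}} : p \sim q\}| + [\ell = \ell(q)] \cdot |\{p \in \exadd_{i,j}(C_{ij}) : p \sim q\}|$ and analogously for $D$. If $\ell = \ell(q)$, the two terms reassemble into $|\{p \in C' : p \sim q\}|$ and match the $D'$-side by the parity-representation already established. If $\ell \neq \ell(q)$, only the $C_{\overline{ij}}$-summand remains, and \cref{lem:missing-singleton-not-consistent} resolves it: when $\ell(q) = 1$ (i.e., $i, j \in \sing(q)$) every $p \in C_{\overline{ij}}$ lacks an element of $\sing(q)$ from $\lbs(p)$ so $p \not\sim q$, killing both sides; when $\ell(q) \neq 1$ (so $\lbs(q)$ misses $i$ or $j$) every $p \in C_{ij}$ has a singleton outside $\lbs(q)$ so $p \not\sim q$, yielding $|\{p \in C_{\overline{ij}} : p \sim q\}| = |\{p \in C : p \sim q\}|$, and the hypothesis that $C$ parity-represents $D$ closes the argument. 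The main obstacle is the single-action principle, which carefully pairs \cref{lem:pattern-complete-rep} with \cref{lem:complete-consistentcy} to convert the $\action$-dependent (and $\inc$-class-dependent) consistency into a single consistency with $q$, thereby letting the two parity-representation hypotheses for $C$ and $C'$ close the gap.
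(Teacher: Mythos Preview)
Your proof is correct and follows essentially the same approach as the paper's: both split $C'$ into the complete part $C_{\overline{ij}}$ (where $\action$ is the identity) and the part $\exadd_{i,j}(C_{ij})$ with $\inc=\{i,j\}$, and both close the argument by combining the hypothesis that $C$ parity-represents $D$ with \cref{lem:patadd-preserve-parity-rep} (that $C'$ parity-represents $D'$), using \cref{lem:complete-consistentcy} and \cref{lem:missing-singleton-not-consistent} to kill the unwanted terms.

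The one organizational difference is that you package the paper's four conditions into a single index $\ell(q)\in[4]$ determined by $\lbs(q)\cap\{i,j\}$ and the clean ``single-action'' identity $[\action(p,\ell)\sim q]=[\ell=\ell(q)]\cdot[p\sim q]$, which you justify via \cref{lem:pattern-complete-rep} (representation by $R_p$) together with \cref{lem:complete-consistentcy} (forcing a unique $\ell$). The paper instead keeps the two summands $\mathcal{L}_X$ and $\mathcal{R}_X$ separate throughout and argues the equivalence $\action(p,\ell)\sim q \Leftrightarrow p\sim q$ directly in the ``no condition holds'' case. Your formulation is a bit more conceptual and avoids the explicit four-case list, but the underlying case structure (your $\ell=\ell(q)$ versus $\ell\neq\ell(q)$, and within the latter $\ell(q)=1$ versus $\ell(q)\neq 1$) coincides with the paper's.
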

\begin{proof}
    \cref{lem:ops-preserve-parity-rep} implies that $C'$ parity-represents $D'$. For each pattern $p\in D' \cup C'$, it holds either $\lbs(p)\setminus \sing(p)=\{i,j\}$, if $\{i, j\} \subseteq \lbs(p)$, or $\lbs(p)=\sing(p)$ otherwise.
    Let $D_0 = \{p \in D\colon \{i, j\}\not \subseteq \lbs(p)\}$, and $\overline{D_0} = D\setminus D_0$. Let $\overline{D_0}' = D'\setminus D_0$. Similarly, let $C_0 = \{p \in C\colon \{i, j\}\not \subseteq \lbs(p)\}$, and $\overline{C_0} = C\setminus C_0$. Let $\overline{C_0}' = C'\setminus C_0$.
    
    For a pattern $q\in \Cp$ and $X\in\{C, D\}$, it holds by \cref{lem:exclusion-preserve-parity} that
    \begin{equation}
        \begin{split}
    |\{p \in \exac(X', \ell)\colon p\sim q\}| \bquiv |\{p\in X'\colon \action(p, \ell) \sim q\}|\\
        \bquiv \underbrace{|\{p\in X_0\colon \action(p,\ell)\sim q \}|}_{\mathcal{L}_X} + \underbrace{|\{p\in \overline{X_0}'\colon \action(p,\ell)\sim q \}|}_{\mathcal{R}_X}.\\
    \end{split}
\end{equation}
It holds that 
\begin{align*}
    \mathcal{L}_X &\bquiv |\{p \in X_0\colon p\sim q\}|\\
    &\bquiv |\{p \in X'\colon \{i, j\}\not \subseteq \lbs(p) \land p\sim q\}|\\
    &\bquiv |\{p \in X \colon \{i, j\}\not \subseteq \lbs(p) \land p\joinifex[i,j] \sim q\}|\\
    &\bquiv |\{p \in X \colon \{i, j\}\not \subseteq \lbs(p) \land p \sim q\}.
\end{align*}
    If $\{i, j\} \subseteq \lbs(q)$, then it holds that $\mathcal{L}_C = \mathcal{L}_D = 0$. Otherwise, assume that $\{i, j\} \not \subseteq \lbs(q)$. It holds that
    \begin{align*}
        \mathcal{L}_D
        &\bquiv |\{p\in D\colon \{i, j\}\not \subseteq \lbs(p) \land p \sim q\}\\
        &\bquiv |\{p\in D\colon p\sim q\}\\
        &\bquiv |\{p\in C\colon p\sim q\}\\
        &\bquiv |\{p\in C\colon \{i, j\}\not \subseteq \lbs(p)\land p \sim q\}\\
        &\bquiv \mathcal{L}_C.
\end{align*}
In both cases it holds that $\mathcal{L}_C \bquiv \mathcal{L}_D$.
Now we show that $\mathcal{R}_C \bquiv \mathcal{R}_D$. 
Let us define the following four conditions
\begin{itemize}
    \item $\ell \in \{1, 2\} \land i\notin \lbs(q)$.
    \item $\ell \in \{3, 4\} \land i\in \lbs(q)$.
    \item $\ell \in \{1, 3\} \land j\notin \lbs(q)$.
    \item $\ell \in \{2, 4\} \land j\in \lbs(q)$.
\end{itemize}
If any of the four conditions is true, it follows that for each $p \in \overline{X_0}'$ that either $\sing(\action(p, \ell))\setminus\lbs(q)\neq \emptyset$ or $\sing(q)\setminus \lbs(\action(p,\ell)) = \emptyset$. Hence, it follows that $\mathcal{R}_C = \mathcal{R}_D = 0$. Assuming none of the conditions is true, it follows that $\action(p,\ell) \sim q$ if and only if $p\sim q$. Hence, it holds that
\begin{align*}
    \mathcal{R}_X
    &\bquiv |\{p\in \overline{X_0}'\colon \action(p, \ell)\sim q\}|\\
    &\bquiv |\{p\in \overline{X_0}'\colon p\sim q\}|\\
    &\bquiv |\{p\in X'\colon p\sim q\}| - |\{p\in X_0\colon p\sim q\}|.\\
\end{align*}
It follows that 
\begin{align*}
    \mathcal{R}_D
    &\bquiv |\{p\in D'\colon p\sim q\}| - |\{p\in D_0\colon p\sim q\}|\\
    &\bquiv |\{p\in C'\colon p\sim q\}| - |\{p\in C_0\colon p\sim q\}|,\\
    &\bquiv \mathcal{R}_C,
\end{align*}
where the second congruence follows from the fact that $C'$ parity-represent $D'$, and $\mathcal{R}_C \bquiv \mathcal{R}_D$.
\end{proof}

Now we define families of partial solutions over $\CSP$ that parity-represent the families $\drep_x\allind$. In the next section, we show that we can efficiently compute these families, and hence, solve the decision version of the \Stp\ problem with high probability.

\begin{definition}
    For $p\in \Cp$ let $A_p$ be the set of patterns defined in \cref{lem:pat-parity-rep}. We define the operator $\actionst:\Pat\times 4\rightarrow \pow{\CSP}$ as 
    \[\actionst(p,\ell) = A_{\action(p,\ell)},\]
    if $\action(p, \ell)\in \Cp$ or as undefined otherwise. Let $\exacst$ the \emph{exclusive} version of $\actionst$.
\end{definition}
\begin{definition}

    For $x\in V(\syntaxtree), \budget \in [n], \weight \in [n\cdot \W], \actionweight \in [|V(\syntaxtree)|\cdot D]$,
    we define the families $\bas_x\ind{\budget, \weight, \actionweight}$ as follows:

    For an introduce node $i(v)$, let $c = \weightf(v)$. We set $\bas_x\ind{1,c,\actionf(x, i)}= \action\big(\pat_x(\{v\}),i\big)$ for $i\in[2]$. We set $\bas_x\ind{0,0,0}$ to $\{[0]\}$ if $v \notin \term$, and to $\emptyset$ otherwise. For all other values of $\budget, \weight$ and $\actionweight$, we set $\bas_x\ind{\budget,\weight}= \emptyset$.
    For a relabel node $\mu_x = \relabel{i}{j}(\mu_{x'})$, we set
    \[
        \bas_x\ind{\budget,\weight,\actionweight}=  \exrel{\big(\bas_{x'}\ind{\budget,\weight,\actionweight}\big)}{i}{j}
    \]
    For a union node $\mu_x = \mu_{x_1}\union \mu_{x_2}$, we set
    \[
        \bas_x\ind{\budget,\weight,\actionweight} =
        \bigdelta\limits_{\substack{b_1+b_2 = \budget\\c_1+c_2=\weight\\d_1+d_2=\actionweight}}\big(\bas_{x_1}\ind{b_1,c_1,d_1}\exunion\bas_{x_2}\ind{b_2, c_2,d_2}\big).
    \]
    Finally, for a join node $\mu_x = \add{i}{j}(\mu_{x'})$, we define the families
    \[
        \bas'_x\ind{\budget,\weight,\actionweight}= \exadd_{i,j}\big(\bas_{x'}\ind{\budget,\weight,\actionweight}\big),
    \]
    we define the family 
    \[
        \bas_x\ind{\budget,\weight, \actionweight}= \bigdelta\limits_{\ell\in[4]}
            \exacst\big(\bas'_{x}\ind{\budget,\weight,\actionweight - \actionf(x, \ell)}, \ell\big).
    \]
\end{definition}
\begin{lemma}\label{lem:bas-parity-rep-drep}
    For each $x\in V(\syntaxtree)$ and all values of $\budget, \weight, \actionweight$ it holds that $\bas_x\allind \subseteq \CSP$ and that it parity-represents $\drep_x\allind$.
\end{lemma}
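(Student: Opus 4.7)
The plan is a bottom-up induction over $\syntaxtree$, proving both containment in $\CSP$ and parity-representation of $\drep_x\allind$ simultaneously. For the base case (an introduce node $i(v)$), the definitions of $\bas_x\allind$ and $\drep_x\allind$ coincide; both families contain only patterns of the form $[0]$, $[0,i]$ or $[0i,i]$, all of which are $CS$-patterns, so containment in $\CSP$ is immediate and parity-representation is trivial.

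For a relabel node $\mu_x = \relabel{i}{j}(\mu_{x'})$, the induction hypothesis together with closure of $\CSP$ under relabel (\cref{obs:cs-closed-ops}) gives $\bas_x\allind \subseteq \CSP$, while the preservation of parity-representation under exclusive relabel (\cref{lem:ops-preserve-parity-rep}) immediately gives that $\bas_x\allind$ parity-represents $\drep_x\allind$. For a union node $\mu_x = \mu_{x_1}\union\mu_{x_2}$, closure of $\CSP$ under union (\cref{obs:cs-closed-ops}) again gives containment term by term. For each fixed tuple $(b_1,b_2,c_1,c_2,d_1,d_2)$, the induction hypothesis together with preservation of parity-representation under exclusive union (\cref{lem:ops-preserve-parity-rep}) implies that $\bas_{x_1}\ind{b_1,c_1,d_1}\exunion\bas_{x_2}\ind{b_2,c_2,d_2}$ parity-represents $\drep_{x_1}\ind{b_1,c_1,d_1}\exunion\drep_{x_2}\ind{b_2,c_2,d_2}$. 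Iterating \cref{lem:mutex-preserve-parity-rep} over all the tuples in the symmetric difference then gives that $\bas_x\allind$ parity-represents $\drep_x\allind$.

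The main work is in the join case $\mu_x = \add{i}{j}(\mu_{x'})$, which is precisely where the families $\bas$ and $\drep$ diverge. By the induction hypothesis $\bas_{x'}\allind \subseteq \CSP \subseteq \Cp$ parity-represents $\drep_{x'}\allind$. Applying \cref{lem:patadd-preserve-parity-rep} to each weight slice yields that $\bas'_x\allind = \exadd_{i,j}\bas_{x'}\allind$ parity-represents $\drep'_x\allind = \exadd_{i,j}\drep_{x'}\allind$; both families consist of complete patterns $p$ with $\inc(p)\in\{\emptyset,\{i,j\}\}$, so $\action(p,\ell)$ and hence $\actionst(p,\ell)=A_{\action(p,\ell)}$ are defined for every $\ell\in[4]$. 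For each fixed $\ell$, \cref{lem:ac-preserve-parity-rep} then gives that $\exac(\bas'_x\ind{\budget,\weight,\actionweight-\actionf(x,\ell)},\ell)$ parity-represents $\exac(\drep'_x\ind{\budget,\weight,\actionweight-\actionf(x,\ell)},\ell)$.

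It remains to bridge the one step from $\exac$ to $\exacst$. By \cref{lem:pat-parity-rep} each singleton $\{\action(p,\ell)\}$ is parity-represented by $A_{\action(p,\ell)}\subseteq\CSP$; iterating \cref{lem:mutex-preserve-parity-rep} over the symmetric difference defining $\exac(\bas'_x\ind{\budget,\weight,\actionweight-\actionf(x,\ell)},\ell)$ shows that $\exacst(\bas'_x\ind{\budget,\weight,\actionweight-\actionf(x,\ell)},\ell)$ parity-represents $\exac(\bas'_x\ind{\budget,\weight,\actionweight-\actionf(x,\ell)},\ell)$ and is contained in $\CSP$. Transitivity of parity-representation (\cref{obs:par-rep-is-equiv}) chains the last two steps, and a final application of \cref{lem:mutex-preserve-parity-rep} over the outer symmetric difference $\bigdelta_{\ell\in[4]}$ delivers both that $\bas_x\allind \subseteq \CSP$ and that it parity-represents $\drep_x\allind$. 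The only delicate point is verifying that $\action(\cdot,\ell)$ is defined wherever $\exacst$ is invoked, which is ensured by the structural observation (\cref{obs:complete-closed-ops}) that $\patadd_{i,j}$ applied to a complete pattern produces only patterns with $\inc\subseteq\{i,j\}$.
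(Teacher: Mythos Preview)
Your proof is correct and follows essentially the same approach as the paper: a bottom-up induction over $\syntaxtree$, using \cref{obs:cs-closed-ops} for containment in $\CSP$, \cref{lem:ops-preserve-parity-rep} and \cref{lem:mutex-preserve-parity-rep} for relabel and union nodes, and the chain \cref{lem:patadd-preserve-parity-rep}, \cref{lem:ac-preserve-parity-rep}, \cref{lem:pat-parity-rep} (plus \cref{lem:mutex-preserve-parity-rep} and transitivity) for join nodes. You are in fact more explicit than the paper in spelling out the bridge from $\exac$ to $\exacst$ and in checking that $\action(\cdot,\ell)$ is defined wherever it is invoked.
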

\begin{proof}
    It follows from the definition of $\actionst$ and from \cref{obs:cs-closed-ops} that $\bas_x\allind \subseteq \CSP$. For the other part of the lemma, we prove the claim by induction over $\syntaxtree$. For an introduce node $x$, it holds that $\bas_x\allind = \drep_x\allind$. For a relabel or union node, it follows from the induction hypothesis that $\bas_{x'}\ind{b,c,d}$ parity-represents $\drep_{x'}\ind{b,c,d}$ for each child $x'$ of $x$ and all values of $b$, $c$, and $d$. It follows from \cref{lem:ops-preserve-parity-rep} and \cref{lem:mutex-preserve-parity-rep} that $\bas_x\allind$ parity-represents $\drep_x\allind$.

    Finally, for a join node, it holds by \cref{lem:patadd-preserve-parity-rep} that $\bas'_x\ind{\budget, \weight, d}$ parity-represents $\drep'_x\ind{\budget, \weight, d}$, for all values of $d$. It holds by \cref{lem:pat-parity-rep} that $\actionst(p, i)$ parity-represents $\action(p, i)$. Hence, it holds by \cref{lem:ac-preserve-parity-rep} and \cref{lem:mutex-preserve-parity-rep} that $\bas_x\allind$ parity-represents $\drep_x\allind$.
\end{proof}
\begin{corollary}\label{cor:bas-if-drep}
    It holds for all values $\budget, \weight$ and $\actionweight$ that 
    \[[0]\in \bas_r\allind \iff [0] \in \drep_r\allind.\]
\end{corollary}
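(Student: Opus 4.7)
The plan is to apply \cref{lem:bas-parity-rep-drep} with the test pattern $q = [0]$, and then observe that among complete patterns, only $[0]$ itself is consistent with $[0]$. This collapses the parity-representation congruence to the desired biconditional.

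More precisely, I would first verify that both sides of the congruence count at most one pattern. Since $[0]$ contains no label from $U$, \cref{lem:complete-consistentcy} implies that any complete pattern $p$ with $p \sim [0]$ must satisfy $\lbs(p) = \lbs([0]) = \emptyset$, forcing $p = \{\{0\}\} = [0]$. Because $\bas_r\allind \subseteq \CSP \subseteq \Cp$ and (by a straightforward induction using \cref{obs:complete-closed-ops} and the fact that the $\action$ operator produces complete patterns) $\drep_r\allind \subseteq \Cp$, it follows that
\[
|\{p\in \bas_r\allind\colon p\sim [0]\}| = [[0] \in \bas_r\allind],
\qquad
|\{p\in \drep_r\allind\colon p\sim [0]\}| = [[0] \in \drep_r\allind].
\]

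Next, I would invoke \cref{lem:bas-parity-rep-drep}, which asserts that $\bas_r\allind$ parity-represents $\drep_r\allind$. Specialized to $q = [0]$, the defining congruence of parity-representation gives
\[
[[0] \in \bas_r\allind] \;\bquiv\; [[0] \in \drep_r\allind].
\]
Since both Iverson brackets take values in $\{0,1\}$, a congruence modulo $2$ between them is an equality, which is exactly the stated biconditional. There is no serious obstacle here: the only thing to be careful about is the (trivial) containment $\drep_r\allind \subseteq \Cp$, which ensures that no non-complete pattern in $\drep_r\allind$ could accidentally be consistent with $[0]$ and inflate the count on the right-hand side.
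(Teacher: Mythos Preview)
Your proposal is correct and follows essentially the same route as the paper: both use \cref{lem:bas-parity-rep-drep} with the test pattern $q=[0]$ and the fact (from \cref{lem:complete-consistentcy}) that $[0]$ is the unique complete pattern consistent with $[0]$. You are in fact slightly more careful than the paper in explicitly flagging the containment $\drep_r\allind\subseteq\Cp$, which the paper uses tacitly when it writes ``$[0]$ is the only complete pattern consistent with $[0]$'' on the $\drep$ side.
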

\begin{proof}
    Since $[0]$ is the only complete pattern consistent with $[0]$, it holds that
    \begin{align}
        [0]\in \bas_r\allind
        &\iff |\{p\in \bas_r\allind\colon p \sim [0]\}|\bquiv 1\\
        &\iff |\{p\in \drep_r\allind\colon p\sim [0]\}|\bquiv 1\\
        &\iff [0]\in \drep_r\allind,\\
    \end{align}
    where the second equivalence follows from \cref{lem:bas-parity-rep-drep}.
\end{proof}
\section{Algorithm}\label{sec:algo}
In this section, we define the final dynamic programming table that computes $\bas_x\ind{\budget, \weight, \actionweight}$ for $x\in V(\syntaxtree)$ and all values $\budget, \weight$ and $\actionweight$ in time $O^*(3^{k})$.

\begin{definition}
    For all $x\in V(\syntaxtree)$ and all values $\budget \in [k]_0, \weight \in [|V|\cdot \W]_0$ and $\actionweight \in [|V(\syntaxtree)|\cdot \D]_0$, we define the vectors $T = T\ind{x, \budget, \weight, \actionweight} \in \{0,1\}^{\CSP}$ as follows:

\begin{itemize}
\item Introduce node $i(v)$. Let $p = [0, i]$ if $v\neq v_0$, and $p = [0i]$ otherwise.
First we initialize $T[x,\budget, \weight, \actionweight] = \overline 0$ for all values $\budget, \weight, \actionweight$. Now we set 
\[T[x, 0, 0, 0][[0]] = [v \notin \term],\]
and for $\ell \in [2]$ we set
\[T[x, 1, \weightf(v), \actionf(x, \ell)][\action(p, \ell)] = 1.\]
\item Relabel node $\mu_x = \relabel{i}{j}(\mu_{x'})$, where $x'$ is the child of $x$ in $\syntaxtree$. We define
\[T\ind{x, \budget, \weight, \actionweight}[p] = \sum\limits_{\substack{q\in \CSP\\ q_{i\rightarrow j}=p}}T\ind{x', \budget, \weight, \actionweight}[q].\]
In order to compute $T\ind{x,\budget, \weight,\actionweight}$, we first initialize it to $\overline 0$. After that, we iterate over all patterns $q \in \CSP$, and we add $T\ind{x', \budget, \weight, \actionweight}[q]$ to $T\ind{x, \budget, \weight, \actionweight}[{q}_{i\rightarrow j}]$.
\item Join node $\mu_x = \add{i}{j}(\mu_{x'})$. For all values $\budget, \weight$, and $\actionweight$, we define the vector $T'\nodeind \in \{0,1\}^{\Pat}$ over all patterns as
\[T'\nodeind[p]=\sum\limits_{\substack{q\in\CSP,\\\patadd_{ij}q=p}}T\ind{x',\budget,\weight,\actionweight}[q],\]
for all patterns $p\in \Pat$. Now for all $p\in \CSP$ we set 
\[T\nodeind[p]= \sum\limits_{\ell\in[4]}
\sum\limits_{\substack{q\in\Pat,\\p\in\actionst(q, \ell)}} T'\ind{x,\budget,\weight,\actionweight - \actionf(x, \ell)}[q].\]
\item Union node $\mu_x = \mu_{x_1}\union \mu_{x_2}$. For all $p\in \CSP$, we define
\[T\nodeind[p] = \sum\limits_{\substack{
    b_1+b_2=\budget\\c_1+c_2=\weight\\d_1+d_2=\actionweight}}
    \sum\limits_{\substack{p_1,p_2\in\CSP\\p_1\punion p_2=p}}
T[x_1, b_1, c_1, d_1][p_1]\cdot T[x_2, b_2, c_2, d_2][p_2].\]
\end{itemize}    
\end{definition}

\begin{lemma}\label{lem:compute-add-node}
    Given a join node $x\in V(\syntaxtree)$, and all tables $T[x', \budget, \weight, \actionweight]$ for all values of $\budget, \weight$ and $\actionweight$, where $x'$ is the child of $x$ in $\syntaxtree$. Then all values $T\nodeind$ can be computed in time $\ostar(3^k)$.
\end{lemma}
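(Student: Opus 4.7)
The plan is to compute, for each fixed triple $(\budget,\weight,\actionweight)$, the output vector $T\ind{x,\budget,\weight,\actionweight}$ by pushing contributions from the input vectors $T\ind{x',\cdot,\cdot,\cdot}$ directly into the $\CSP$-indexed entries at $x$, without ever materializing the intermediate $\Pat$-indexed vector $T'\nodeind$. Unfolding the two lines of the definition at a join node and swapping the order of summation yields, for every $p^\star\in\CSP$,
\[
T\ind{x,\budget,\weight,\actionweight}[p^\star]\;\bquiv\;\sum_{\ell\in[4]}\;\sum_{q\in\CSP}\;\bigl[p^\star\in\actionst(\patadd_{i,j}q,\ell)\bigr]\cdot T\ind{x',\budget,\weight,\actionweight-\actionf(x,\ell)}[q]\pmod{2},
\]
so the algorithm iterates over pairs $(q,\ell)\in\CSP\times[4]$ and XOR-accumulates the corresponding input value into every location $p^\star$ in the right output vector.

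The crucial structural claim underwriting this approach is that $|\actionst(\patadd_{i,j}q,\ell)|\leq 4$ and that every element of this set lies in $\CSP$, for every $q\in\CSP$ and every $\ell\in[4]$. We verify this by case analysis. If $\{i,j\}\not\subseteq\lbs(q)$, then $\patadd_{i,j}q=q\in\CSP$, $\inc(q)=\emptyset$, and $\actionst(q,\ell)=\{q\}$. Otherwise both $\{i\}$ and $\{j\}$ belong to $q$ (since $q$ is a $CS$-pattern), and we split on whether any of $i,j$ lies in $Z_q$: (a) if neither does, the join creates a unique bad set $\{i,j\}$ in $\patadd_{i,j}q$; (b) if at least one does, the join absorbs $\{i\},\{j\},\{i,j\}$ into the zero-set, leaving no bad set. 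In both subcases $\inc(\patadd_{i,j}q)=\{i,j\}$, and in subcase (b) all four actions (which fix or forget each of $i,j$) yield $CS$-patterns directly. In subcase (a) the three actions that forget at least one label shrink $\{i,j\}$ to a singleton that merges with a freshly fixed one (or vanishes), yielding a $CS$-pattern; only the ``fix both'' action preserves $\{i,j\}$, and for the resulting complete pattern \cref{lem:pat-parity-rep} bounds $|\actionst|$ by $2^{|\{i,j\}|}=4$, each element being a $CS$-pattern.

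Granted the claim, the algorithm stores each $T\ind{x,\cdot,\cdot,\cdot}$ as a dense $\bin$-vector on the $3^k$-element set $\CSP$ (encoded by maps $[k]\to\statescs$) initialized to $\overline 0$. For every triple $(\budget,\weight,\actionweight')$ and every $q\in\CSP$, it computes $\patadd_{i,j}q$ once in $k^{\Oh(1)}$ time, and then for each $\ell\in[4]$ computes $\action(\patadd_{i,j}q,\ell)$, enumerates the at-most-four elements $p^\star$ of $\actionst(\patadd_{i,j}q,\ell)$, and XORs $T\ind{x',\budget,\weight,\actionweight'}[q]$ into $T\ind{x,\budget,\weight,\actionweight'+\actionf(x,\ell)}[p^\star]$. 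Counting gives total work $n^{\Oh(1)}\cdot 3^k\cdot\Oh(1)\cdot k^{\Oh(1)}=\ostar(3^k)$. The main obstacle is the case analysis certifying the constant fan-out and $\CSP$-membership; once that is in hand, correctness follows immediately from the displayed identity and the definition of $\bas_x$.
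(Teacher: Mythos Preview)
Your proposal is correct and takes essentially the same approach as the paper: both unfold the two-line definition into a single sum over $(q,\ell)\in\CSP\times[4]$ and push contributions directly to output entries, relying on a constant bound on $|\actionst(\patadd_{i,j}q,\ell)|$. The paper asserts this bound as $\leq 3$ without justification, whereas you provide an explicit case analysis and arrive at $\leq 4$; your bound is slightly weaker (the explicit formula $A^S_p=\{\ldots:S'\subsetneq S\}$ in \cref{lem:set-parity-rep} has $2^{|S|}-1=3$ elements for $|S|=2$), but either constant suffices for the $\ostar(3^k)$ conclusion.
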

\begin{proof}
    First we initialize all vectors $T\nodeind$ to $\overline 0$. After that, we iterate over all values $\budget,\weight, \actionweight$, and for each, we iterate over all patterns $r \in\CSP$. Let $q = \patadd_{i,j}r$. For each value $\ell \in [4]$, and for each pattern $p\in \actionst(q, \ell)$ we add $T\ind{x',\budget,\weight,\actionweight}[q]$ to $T[x, \budget, \weight, \actionweight + \actionf(x, \ell)][p]$.

    It holds that the number of values $\budget, \weight$ and $\actionweight$ we iterate over is bounded by a polynomial in $n$. For each, we iterate over all $CS$-patterns.
    Since it holds by \cref{lem:num-of-cs-pats} that $|\CSP|\leq 3^k$, it suffices to show that we spend at most polynomial time in $k$ for each such pattern. Clearly, we can compute $\patadd_{i,j}p$ and $\actionst(p, \ell)$ in polynomial time in $k$. Since $|\actionst(p, \ell)|\leq 3$, for each pattern we iterate over, and for each $\ell\in[4]$, we update at most three values in the tables $T\nodeind$. Since we operate over $\bin$, we can assume that we can apply each such update in constant time. Hence the running time can be bounded in $\ostar(3^k)$. It holds that 
    \begin{align*}
        T\nodeind[p]
        &=
        \sum\limits_{\substack{r\in\CSP}}
        \sum\limits_{\ell \in [4]}[p\in \actionst(\patadd_{i,j}r)]\cdot T\ind{x', \budget, \weight, \actionweight - \actionf(x, \ell)}[r]\\
        &=\sum\limits_{\ell \in [4]} \sum\limits_{\substack{q\in\Pat\\ p\in\actionst(q, \ell)}}
        \sum\limits_{\substack{r\in\CSP\\\patadd_{i,j}r = q}}
        T\ind{x',\budget,\weight,\actionweight-\actionf(x,\ell)}[r]\\
        &=\sum\limits_{\ell \in [4]} \sum\limits_{\substack{q\in\Pat\\ p\in\actionst(q, \ell)}}
        T'\ind{x,\budget,\weight,\actionweight-\actionf(x,\ell)}[q].
    \end{align*}
\end{proof}

Although computing the tables $T\nodeind$ for a union node $x$ in the naive way yields a running time polynomial in $|\CSP|^2$ which exceeds the bound we seek by far, we make use of a result by Hegerfeld and Kratsch \cite{DBLP:conf/esa/HegerfeldK23} that allows us to compute convolutions over a lattice (called the $\lor$-product) more efficiently. This result is based on another result by Björklund et al. \cite{BjorklundHKKNP16}, and allows us to compute the tables $T\nodeind$ in polynomial time in $|\CSP|$.
\begin{definition}
    Given a lattice $(\mathcal{L}, \preceq)$, and two tables $A,B:\mathcal{L}\rightarrow \mathbb{F}$ for some field $\mathbb{F}$, we define the \emph{join-product} (or $\lor$-product) $A\otimes_{\mathcal{L}} B$ as follows: for each $x\in\mathcal{L}$ we define
    \[A\otimes_{\mathcal{L}} B (x)= \sum\limits_{\substack{y,z\in\mathcal{L}\\y\lor z = x}} A(y)\cdot B(z). \]
\end{definition}
\begin{definition}
    We define a new ordering $\csleq$ over $\CSP$ where $p_1\csleq p_2$, if for each $i \in U$ it holds that $i\in Z_{p_1}$ if $i\in Z_{p_2}$ and $i\in \lbs(p_1)$ if $i \in \lbs(p_2)$.
\end{definition}
\begin{observation}\label{obs:union-is-convolution}
    The join operation over the lattice $(\CSP, \csleq)$ is given by $p\lor q = r$, where $Z_r = Z_p\cup Z_q$ and $\lbs(r)=\lbs(p)\cup \lbs(q)$. This corresponds exactly to the union operation over $CS$-patterns $r = p \punion q$. Hence it holds for a union node that
    \[T\nodeind = \sum\limits_{\substack{
    b_1+b_2=\budget\\c_1+c_2=\weight\\d_1+d_2=\actionweight}}
T[x', b_1, c_1, d_1] \otimes_{\CSP} T[x'', b_2, c_2, d_2].\]
\end{observation}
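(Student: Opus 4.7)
The plan is to verify the observation in two steps: first, identify explicitly the join in the lattice $(\CSP,\csleq)$ and show it coincides with $\punion$; second, rewrite the definition of $T$ at a union node as a $\lor$-product using that identification.

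For the first step, each $CS$-pattern $p$ is determined by the pair $(Z_p\setminus\{0\},\lbs(p))$ with $Z_p\setminus\{0\}\subseteq\lbs(p)\subseteq U$ (\cref{def:cs-pat}). The relation $\csleq$ is precisely componentwise inclusion on these pairs, turning $\CSP$ into a sublattice of $2^U\times 2^U$ whose join is $(X_1\cup X_2,Y_1\cup Y_2)$; this remains in $\CSP$ because $X_i\subseteq Y_i$ implies $X_1\cup X_2\subseteq Y_1\cup Y_2$. Translating back, $p\lor q$ is the $CS$-pattern $r$ with $Z_r=Z_p\cup Z_q$ and $\lbs(r)=\lbs(p)\cup\lbs(q)$. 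To see that $p\punion q=p\lor q$, I would unfold the definition $p\punion q=(p\setminus\{Z_p\})\cup(q\setminus\{Z_q\})\cup\{Z_p\cup Z_q\}$ from \cref{def:patops}: because $p$ and $q$ are $CS$-patterns, their non-zero-set elements are exactly the singletons $\{u\}$ for $u\in\lbs(p)$ and $u\in\lbs(q)$ respectively, so the set union collapses duplicates and leaves precisely the singletons indexed by $\lbs(p)\cup\lbs(q)$, together with the merged zero-set $Z_p\cup Z_q$.

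For the second step, the claim follows by comparing the existing definition of $T\nodeind[p]$ at a union node with the definition of the $\lor$-product. Substituting $p_1\punion p_2=p_1\lor p_2$ into
\[T\nodeind[p]=\sum_{\substack{b_1+b_2=\budget\\c_1+c_2=\weight\\d_1+d_2=\actionweight}}\sum_{\substack{p_1,p_2\in\CSP\\p_1\punion p_2=p}}T[x_1,b_1,c_1,d_1][p_1]\cdot T[x_2,b_2,c_2,d_2][p_2]\]
identifies the inner sum as $(T[x_1,b_1,c_1,d_1]\otimes_{\CSP}T[x_2,b_2,c_2,d_2])(p)$, yielding the stated formula. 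No serious obstacle arises; the only care needed is in the first step, namely verifying that duplicate singletons collapse correctly when $p\punion q$ is formed as a set union, so that the resulting pattern lies in $\CSP$ and matches the lattice join on both coordinates. Everything else reduces to routine unfolding of definitions.
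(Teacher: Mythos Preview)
Your proposal is correct and complete. The paper states this as an observation without giving any proof, so there is no alternative argument to compare against; your two-step verification (identifying the lattice join via the pair $(Z_p\setminus\{0\},\lbs(p))$ and then unfolding the definition of $T$ at a union node) is exactly the intended justification and fills in what the paper leaves implicit.
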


\begin{definition}
    Given a lattice $(\mathcal{L}, \preceq)$. We say that an element $x\in\mathcal{L}$ is \emph{join-irreducible} if $x = a\lor b$ implies that $x=a$ or $x=b$. We denote by $\mathcal{L}_{\lor}$ the set of all join-irreducible elements of $\mathcal{L}$. Note that $\hat 0 \in \mathcal{L}$ always holds.

    We say that a lattice $(\mathcal{L}, \preceq)$ is given in the \emph{join-representation} if the elements of $\mathcal{L}$ are represented as $O(\log |\mathcal{L}|)$-bit strings, together with the set of join-irreducible elements $\mathcal{L}_{\lor} \subseteq \mathcal{L}$, and an algorithm $\mathcal{A}_{\mathcal{L}}$ that computes the join $a\lor v$ given an element $a\in\mathcal{L}$ and a join-irreducible element $x\in\mathcal{L}_{\lor}$.
\end{definition}
First we state the result by Björklund et al. that serves as a basis for the following result by Hegerfeld and Kratsch.
\begin{theorem}[\cite{BjorklundHKKNP16}]
    Let $(\mathcal{L},\preceq)$ be a finite lattice given in join-representation and $A,B\colon\mathcal{L}\rightarrow \mathbb{F}$ be two tables, where $\mathbb{F}$ is some field. The $\lor$-product $A\otimes_{\mathcal{L}}B$ can be computed in $O(|\mathcal{L}||\mathcal{L}_{\lor}|)$ field operations and calls to algorithm $\mathcal{A}_{\mathcal{L}}$ and further time $O(|\mathcal{L}||\mathcal{L}_{\lor}|^2)$.
\end{theorem}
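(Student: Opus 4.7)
The plan is to reduce the $\lor$-product to a pointwise product via a zeta--Möbius pair. For any $A\colon \mathcal{L} \to \mathbb{F}$, define the down-zeta transform $\hat A(x) = \sum_{y \preceq x} A(y)$. The key algebraic fact is the convolution identity $\widehat{A\otimes_{\mathcal{L}} B}(x) = \hat A(x)\cdot \hat B(x)$, which follows from the universal property of $\lor$:
\[
\widehat{A\otimes_{\mathcal{L}}B}(x) = \sum_{z \preceq x}\sum_{a\lor b = z} A(a)B(b) = \sum_{a\lor b \preceq x} A(a)B(b) = \Bigl(\sum_{a\preceq x} A(a)\Bigr)\Bigl(\sum_{b\preceq x} B(b)\Bigr),
\]
using that $a\lor b \preceq x$ holds iff both $a\preceq x$ and $b\preceq x$. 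Hence, once $\hat A$ and $\hat B$ are available, a single pointwise multiplication over $|\mathcal{L}|$ entries produces $\hat C$, and we recover $C = A\otimes_{\mathcal{L}}B$ by inverting $\zeta$.

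The first non-trivial step is computing $\hat A$ and $\hat B$ fast. I would use a Yates-style sweep over join-irreducibles: fix any linear order $v_1,\dots,v_k$ of $\mathcal{L}_{\lor}$ and maintain a table $A^{(i)}$ such that $A^{(i)}(x)$ equals the sum of $A(y)$ over all $y$ that agree with $x$ outside $\{v_1,\dots,v_i\}$ and may drop any subset of those join-irreducibles already present in $x$'s canonical join-decomposition. The stage-$i$ update sets $A^{(i)}(x) = A^{(i-1)}(x) + A^{(i-1)}(x^{\flat})$ whenever $v_i \preceq x$, where $x^{\flat}$ is obtained by removing $v_i$ from the antichain of join-irreducibles representing $x$; one call to $\mathcal{A}_{\mathcal{L}}$ certifies $x^{\flat}\lor v_i = x$. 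Summed over all $i$ and $x$, this uses $O(|\mathcal{L}|\cdot|\mathcal{L}_{\lor}|)$ calls to $\mathcal{A}_{\mathcal{L}}$ and the same number of field additions; the same procedure applies to $B$, and the pointwise product then costs $O(|\mathcal{L}|)$ field operations.

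The main obstacle is the inverse transform, since on a general lattice the Möbius function does not factor and one cannot simply run the Yates sweep backwards. The remedy is to recover $C$ directly from $\hat C$ by induction along a linear extension of $\preceq$, using the identity $C(x) = \hat C(x) - \sum_{y \prec x} C(y)$. Done naively this would be $\Theta(|\mathcal{L}|^2)$, but only the principal ideal below $x$ contributes, and this ideal is traversed efficiently by enumerating joins of the join-irreducibles $\{v \in \mathcal{L}_{\lor} : v \preceq x\}$ using $\mathcal{A}_{\mathcal{L}}$; amortising the enumeration across levels and over the $k$-element antichain structure leaves $O(|\mathcal{L}_{\lor}|^2)$ work per target, giving the extra $O(|\mathcal{L}|\cdot|\mathcal{L}_{\lor}|^2)$ term. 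Combined with the transform cost and the pointwise multiplication, this matches the bound claimed by the theorem.
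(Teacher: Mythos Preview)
This theorem is not proved in the paper --- it is quoted from \cite{BjorklundHKKNP16} and used as a black box --- so there is no in-paper argument to compare against. Your high-level strategy (compute down-zeta transforms, multiply pointwise, invert) is correct and is indeed the approach of the cited work; the convolution identity you derive is exactly the right reduction.

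The gap is in your implementation of the transforms on a \emph{general} lattice. Your Yates sweep relies on a ``canonical antichain'' decomposition of each $x$ and on the update $A^{(i)}(x)=A^{(i-1)}(x)+A^{(i-1)}(x^{\flat})$, where $x^{\flat}$ is the join of that antichain with $v_i$ removed. This only yields $\hat A$ on distributive lattices. In a non-distributive lattice such a minimal antichain need not be unique, and even after fixing one, iterating your rule does not visit all $y\preceq x$. Concretely, in $M_3=\{\hat 0,a,b,c,\hat 1\}$ (three pairwise incomparable atoms, any two joining to $\hat 1$) there is no canonical minimal antichain for $\hat 1$; choosing $\{a,b\}$, your sweep at $\hat 1$ accumulates $A(\hat 1)+A(a)+A(b)+A(\hat 0)$ and misses $A(c)$ although $c\preceq\hat 1$. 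The algorithm of Bj\"orklund et al.\ avoids this by working with the full down-set $J(x)=\{v\in\mathcal L_\lor:v\preceq x\}$ rather than a minimal antichain, and by precomputing, for each pair $(x,v_i)$ with $v_i\preceq x$, the appropriate predecessor to use in the sweep; that precomputation is exactly what the extra $O(|\mathcal L|\,|\mathcal L_\lor|^{2})$ term pays for.

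Your inversion step has an analogous problem. The identity $C(x)=\hat C(x)-\sum_{y\prec x}C(y)$ is fine, but ``enumerating joins of the join-irreducibles $\preceq x$'' may produce all of $\{y:y\preceq x\}$, which can be $\Theta(|\mathcal L|)$ elements; nothing you wrote bounds this by $O(|\mathcal L_\lor|^{2})$ per $x$, and no amortisation argument is supplied. In the cited reference the M\"obius transform is obtained by a structured sweep mirroring the zeta transform (using the same precomputed predecessor table), not by summing over principal ideals.
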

Based on this theorem, Hegerfeld and Kratsch proved the following result
\begin{corollary}[\cite{DBLP:conf/esa/HegerfeldK23} - Corollary A.10]\label{cor:falko-lattice}
    Let $(\mathcal{L}, \preceq)$ be a finite lattice given in join-representation and $k$ be a natural number. Given two tables $A,B\colon\mathcal{L}^k\rightarrow \mathbb{Z}_2$, the $\lor$-product $A\otimes_{\mathcal{L}^k}B$ in $\mathcal{L}^k$ can be computed in time $O(k^2|\mathcal{L}|^{k+2})$ and $O(k|\mathcal{L}|^{k+1})$ calls to algorithm $\mathcal{A}_{\mathcal{L}}$.
\end{corollary}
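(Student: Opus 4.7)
The plan is to apply the Björklund et al.\ theorem directly to the product lattice $\mathcal{L}^k$ equipped with the coordinate-wise join $(x_1,\dots,x_k)\lor (y_1,\dots,y_k) = (x_1\lor y_1,\dots,x_k\lor y_k)$. Since $|\mathcal{L}^k| = |\mathcal{L}|^k$, what remains is to bound $|\mathcal{L}^k_{\lor}|$ and to produce a join-representation of $\mathcal{L}^k$ from the one for $\mathcal{L}$.

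First I would characterize the join-irreducibles of $\mathcal{L}^k$: a tuple $(x_1,\dots,x_k)$ is join-irreducible if and only if at most one coordinate $x_i$ is nonzero, and that coordinate (if any) is itself join-irreducible in $\mathcal{L}$. The nontrivial direction is that any tuple with two distinct nonzero coordinates $x_i$ and $x_j$ decomposes nontrivially, e.g.\ by writing it as the join of the tuple that agrees in coordinate $j$ and is $\hat 0$ elsewhere with the tuple that is $\hat 0$ in coordinate $j$ and agrees elsewhere; the other direction follows by inspecting coordinate $i$ in any putative decomposition. This yields $|\mathcal{L}^k_{\lor}| \le 1 + k(|\mathcal{L}_{\lor}|-1) = O(k|\mathcal{L}|)$.

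Next I would build the join-representation of $\mathcal{L}^k$ out of the given one for $\mathcal{L}$: encode elements as concatenations of their $k$ coordinate encodings (total $O(k\log|\mathcal{L}|)$ bits), list the $O(k|\mathcal{L}|)$ join-irreducibles explicitly, and simulate $\mathcal{A}_{\mathcal{L}^k}(a,v)$ for a join-irreducible $v$ whose unique nonzero coordinate lies at position $i$ by copying the $k-1$ unaffected coordinates of $a$ and invoking $\mathcal{A}_{\mathcal{L}}(a_i,v_i)$ on coordinate $i$. Each simulation costs $O(k)$ time plus a single call to $\mathcal{A}_{\mathcal{L}}$.

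Finally, I would invoke the Björklund et al.\ theorem on $\mathcal{L}^k$: it performs $O(|\mathcal{L}^k|\cdot|\mathcal{L}^k_{\lor}|) = O(k|\mathcal{L}|^{k+1})$ field operations over $\mathbb{Z}_2$ (constant time each) and the same number of calls to $\mathcal{A}_{\mathcal{L}^k}$, plus further time $O(|\mathcal{L}^k|\cdot|\mathcal{L}^k_{\lor}|^2) = O(k^2|\mathcal{L}|^{k+2})$. By the simulation this translates to $O(k|\mathcal{L}|^{k+1})$ calls to $\mathcal{A}_{\mathcal{L}}$, and the per-call $O(k)$ copying overhead contributes an additional $O(k^2|\mathcal{L}|^{k+1})$, which is absorbed into the stated $O(k^2|\mathcal{L}|^{k+2})$ bound. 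The only (mild) obstacle is the bookkeeping for the product lattice: verifying the join-irreducible characterization and checking that $\mathcal{A}_{\mathcal{L}^k}$ can indeed be realized from $\mathcal{A}_{\mathcal{L}}$ at the promised cost. Once that is done, the corollary falls out of a single application of the Björklund et al.\ black box.
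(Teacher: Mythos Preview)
The paper does not give its own proof of this corollary; it is quoted verbatim as Corollary~A.10 from Hegerfeld and Kratsch~\cite{DBLP:conf/esa/HegerfeldK23} and used as a black box. Your reconstruction is correct and is precisely the natural argument behind that cited result: bound $|\mathcal{L}^k_{\lor}|$ by $O(k|\mathcal{L}|)$ via the coordinate-wise characterization of join-irreducibles, build the join-representation of $\mathcal{L}^k$ from that of $\mathcal{L}$, and then invoke the Bj\"orklund et al.\ theorem once on $\mathcal{L}^k$.
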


Now we show that the lattice $(\CSP, \csleq)$ is isomorphic to a simpler lattice that can be written as an exponent of a very basic lattice.
\begin{definition}
    For $k\in\mathbb{N}$, we define the family of tuples $F_k = \{0,1,2\}^k$ (we call each of these tuples a \emph{state function}). Let $\tau$ be the bijective mapping  between $\CSP(U)$ and $F_k$, where we define $\tau(p) = (\ell_1, \dots \ell_k)$, such that 
    \[
        \ell_i=
        \begin{cases}
            2 &\colon i \in Z_p,\\
            1 &\colon i \in \lbs(p)\setminus Z_p,\\
            0 &\colon \text{otherwise.}
        \end{cases}
        \]
\end{definition}
For a tuple $t$, let $t_i$ be the $i$th element in this tuple.
\begin{definition}
    We define the ordering $\preceq$ over $F_k$, where for two tuples $a,b \in F_k$ it holds $a \preceq b$ if $a_i \leq b_i$ for all $i\in [k]$. Hence, the join operation $c = a\lor b$ over the lattice $(F_k, \preceq)$ is given by $c_i = \max(a_i, b_i)$ for all $i\in[k]$.
\end{definition}
\begin{lemma}\label{lem:tau-is-isomorphism}
    The mapping $\tau$ is an isomorphism between the lattice $(\CSP(U), \csleq)$ and the lattice $(F_k, \preceq)$.
\end{lemma}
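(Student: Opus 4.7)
The plan is to reduce the lemma to a coordinate-wise verification. Since $\tau$ is already a bijection by its definition, it suffices to show that $\tau$ and its inverse are order-preserving, that is, $p_1 \csleq p_2$ if and only if $\tau(p_1) \preceq \tau(p_2)$; an order-preserving bijection between finite lattices is automatically a lattice isomorphism.

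First, I would rephrase $p_1 \csleq p_2$ as the conjunction of $Z_{p_1} \subseteq Z_{p_2}$ and $\lbs(p_1) \subseteq \lbs(p_2)$. The right-hand side $\tau(p_1) \preceq \tau(p_2)$ unfolds to $\tau(p_1)_i \leq \tau(p_2)_i$ for every $i \in [k]$. I would then observe that the three possible values in $\{0,1,2\}$ of $\tau(p)_i$ correspond exactly to the three disjoint cases $i \notin \lbs(p)$, $i \in \lbs(p) \setminus Z_p$, and $i \in Z_p$; these are genuinely disjoint because $Z_p \setminus \{0\} \subseteq \lbs(p)$ holds for every pattern $p \in \CSP(U)$.

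With this stratification in hand, the coordinate inequality $\tau(p_1)_i \leq \tau(p_2)_i$ is equivalent to the conjunction of the two implications ``$i \in Z_{p_1} \Rightarrow i \in Z_{p_2}$'' and ``$i \in \lbs(p_1) \Rightarrow i \in \lbs(p_2)$'' (the first covers the jump from state $2$; the second covers the jump from state $1$). Quantifying over $i \in [k]$, and recalling that $0 \in Z_p$ for every pattern $p$ so it plays no role, this yields the claimed equivalence. As a sanity check I would also verify join-preservation directly from \cref{obs:union-is-convolution}: for $r = p \lor q$, a three-case inspection of $(\tau(p)_i, \tau(q)_i) \in \{0,1,2\}^2$ together with $Z_r = Z_p \cup Z_q$ and $\lbs(r) = \lbs(p) \cup \lbs(q)$ yields $\tau(r)_i = \max(\tau(p)_i, \tau(q)_i)$, which is precisely the $i$-th coordinate of $\tau(p) \lor \tau(q)$ in $F_k$.

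The argument is essentially a routine unpacking of definitions, so I do not anticipate any real obstacle. The only minor subtlety is keeping track of the distinguished label $0$, which is not an element of $U = [k]$ and is automatically in every zero-set, so it drops out of the coordinate-wise comparison; and the containment $Z_p \setminus \{0\} \subseteq \lbs(p)$, which is what guarantees that the three states $0$, $1$, $2$ faithfully encode the pair $(Z_p, \lbs(p))$.
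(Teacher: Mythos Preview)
Your proposal is correct and follows essentially the same route as the paper: both arguments reduce the claim to checking that the bijection $\tau$ is an order isomorphism by unpacking $p_1 \csleq p_2$ coordinate-wise into the two implications $i\in Z_{p_1}\Rightarrow i\in Z_{p_2}$ and $i\in\lbs(p_1)\Rightarrow i\in\lbs(p_2)$, and matching these against $\tau(p_1)_i\le\tau(p_2)_i$. Your additional sanity check on join-preservation via \cref{obs:union-is-convolution} is not in the paper's proof but is harmless and consistent.
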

\begin{proof}
    Let $p, q$ be two complete patterns, and $t= \tau(p), t'=\tau(q)$. 
    It holds that $p\csleq q$ if and only if for each $i\in [k]$ it holds that $i \in Z_p$ implies that $i \in Z_q$ and $i\in \lbs(p)$ implies that $i\in\lbs(q)$. This is the case if and only if for each $i \in [k]$ it holds that $t_i = 2$ implies that $t'_i = 2$ and $t_i = 1$ implies that $t'_i \geq 1$, which is the case if and only if $t \preceq t'$.
\end{proof}
\begin{observation}\label{obs:Fk-is-power-of-012}
    Let $\mathcal{L}$ be the lattice $(\{0,1,2\}, \leq)$. It holds that $(F_u, \preceq) = \mathcal{L}^k$.
\end{observation}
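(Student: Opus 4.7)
The statement is essentially a definitional unfolding, so the plan is to verify that both the underlying set and the order relation of $(F_k,\preceq)$ coincide with those of the $k$-fold product $\mathcal{L}^k$. First I would recall that by definition the $k$-fold product of a lattice $(\mathcal{L}_0,\leq_0)$ has carrier set $\mathcal{L}_0^k$ and its partial order is defined component-wise: for $a,b\in\mathcal{L}_0^k$ one sets $a\preceq' b$ iff $a_i\leq_0 b_i$ for all $i\in[k]$. Taking $\mathcal{L}_0=(\{0,1,2\},\leq)$ this yields carrier $\{0,1,2\}^k=F_k$, which matches the definition of $F_k$.

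Next I would compare the orders. The order $\preceq$ on $F_k$ was defined by $a\preceq b$ iff $a_i\leq b_i$ for every $i\in[k]$, which is literally the component-wise product order described above. Hence $(F_k,\preceq)$ and $\mathcal{L}^k$ have identical carrier sets and identical partial orders, so they are equal (not merely isomorphic) as posets. Since a product of lattices is itself a lattice whose meet and join are given coordinate-wise, the already-noted fact that the join in $(F_k,\preceq)$ satisfies $(a\lor b)_i=\max(a_i,b_i)$ is consistent with the join in $\mathcal{L}^k$, confirming the lattice structures agree as well.

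There is no real obstacle; the only thing to be careful about is matching the conventions for ``equal'' versus ``isomorphic'' lattices. Since the paper constructs $F_k$ precisely as the set of $k$-tuples over $\{0,1,2\}$ with pointwise order, identifying it with $\mathcal{L}^k$ is immediate and requires no further argument beyond writing out the definitions. The observation is therefore a one-line check, serving only to justify invoking \cref{cor:falko-lattice} on $\mathcal{L}=(\{0,1,2\},\leq)$ (which has the trivial join-representation with three elements) in the subsequent algorithmic step for union nodes.
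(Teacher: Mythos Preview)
Your proposal is correct and matches the paper's treatment: the paper states this as an observation without proof, precisely because it is the immediate definitional unfolding you describe. Both the carrier set $F_k=\{0,1,2\}^k$ and the componentwise order $\preceq$ coincide by definition with those of $\mathcal{L}^k$, so nothing further is needed.
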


\begin{corollary}\label{cor:compute-union-node}
    For a union node $x$, and some values $\budget, \weight, \actionweight$, the table $T\nodeind$ can be computed in time $O^*(3^k)$.
\end{corollary}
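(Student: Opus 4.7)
The plan is to reduce the computation of $T\nodeind$ at a union node to polynomially many $\lor$-product computations on the lattice $(\CSP,\csleq)$, and then invoke the Hegerfeld--Kratsch convolution bound (\cref{cor:falko-lattice}) via the product-lattice structure identified in \cref{lem:tau-is-isomorphism} and \cref{obs:Fk-is-power-of-012}.

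First, I would use \cref{obs:union-is-convolution} to rewrite
\[
T\nodeind \;=\; \sum_{\substack{b_1+b_2=\budget\\ c_1+c_2=\weight\\ d_1+d_2=\actionweight}}
T[x_1,b_1,c_1,d_1] \otimes_{\CSP} T[x_2,b_2,c_2,d_2].
\]
Since $\budget\leq n$, $\weight\leq n\W = \operatorname{poly}(n)$, and $\actionweight \leq |V(\syntaxtree)|\cdot\D = \operatorname{poly}(n)$, the number of split tuples $(b_1,b_2,c_1,c_2,d_1,d_2)$ is polynomial in $n$. Hence it suffices to show that a single $\lor$-product over $(\CSP,\csleq)$ of two $\bin$-valued tables can be computed in time $\ostar(3^k)$.

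Next, I would use the isomorphism $\tau$ of \cref{lem:tau-is-isomorphism} together with \cref{obs:Fk-is-power-of-012} to transport both input tables to tables on $\mathcal{L}^k$, where $\mathcal{L}=(\{0,1,2\},\leq)$. This transport takes $O(k\cdot 3^k)$ time, since $|\CSP|=3^k$ and $\tau$ can be evaluated in time $O(k)$ per pattern. The lattice $\mathcal{L}$ has three elements, each encodable in $O(1)$ bits; all three elements are join-irreducible under the given definition (the equation $\max(a,b)=c$ forces $a=c$ or $b=c$), so $|\mathcal{L}_{\lor}|\leq 3$; and the join $a\lor b = \max(a,b)$ is computable in constant time, giving $\mathcal{L}$ in join-representation with $\mathcal{A}_{\mathcal{L}}$ running in $O(1)$.

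With this setup I would directly invoke \cref{cor:falko-lattice}: the $\lor$-product of two tables in $\mathcal{L}^k$ with values in $\bin$ is computable in time $O(k^2\cdot 3^{k+2})$ plus $O(k\cdot 3^{k+1})$ calls to $\mathcal{A}_{\mathcal{L}}$, i.e.\ overall $\ostar(3^k)$. Pulling the result back through $\tau^{-1}$ yields the $\lor$-product on $(\CSP,\csleq)$ within the same time bound. Summing over the polynomially many splits gives the total bound $\ostar(3^k)$ claimed by the corollary. There is no serious obstacle here, since all the heavy lifting is already packaged into \cref{cor:falko-lattice}; the only things to check are the elementary facts that $(\CSP,\csleq)$ decomposes as a $k$-fold product of a fixed three-element lattice in join-representation, and that the outer sum over $(b_1,b_2,c_1,c_2,d_1,d_2)$ contributes only a polynomial overhead.
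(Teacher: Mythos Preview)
Your proposal is correct and follows essentially the same approach as the paper: rewrite $T\nodeind$ as a polynomial-in-$n$ sum of $\lor$-products via \cref{obs:union-is-convolution}, transport each product to the lattice $\mathcal{L}^k$ using the isomorphism $\tau$ from \cref{lem:tau-is-isomorphism} and \cref{obs:Fk-is-power-of-012}, and invoke \cref{cor:falko-lattice} to compute each product in $\ostar(3^k)$. The paper's proof is slightly less explicit about the join-representation details of $\mathcal{L}$ but otherwise matches your argument step for step.
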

\begin{proof}
    First we initialize $T\nodeind = \overline 0$.
    For all non-negative integers $b_1,b_2,c_1,c_2$ and $d_1, d_2$, such that $b_1+b_2 = \budget$, $c_1+c_2 = \weight$, and $d_1 + d_2 = \actionweight$,
    using the bijection $\tau$, we build the tables $S_1, S_2 : F_k \rightarrow \{0,1\}$, where $S_1[\tau(p)]=T\ind{x_1,b_1,c_1,d_1}[p]$, and $S_2[\tau(p)] = T\ind{x_2,b_2,c_2,d_2}[p]$ for all $p\in \CSP$.
    Let $S = S_1\otimes_{F_k}S_2$. Since we can compute the join of two elements in $F_k$ in polynomial time in $k$, it holds by \cref{cor:falko-lattice} and \cref{obs:Fk-is-power-of-012} that $S$ can be computed in polynomial time in $|\mathcal{L}|^k = 3^k$, where $\mathcal{L} = \{0, 1, 2\}$.

    We define the table $T'\colon\CSP\rightarrow \mathbb{Z}_2$ as $T'[p] = S[\tau(p)]$.
    By \cref{lem:tau-is-isomorphism}, it holds that $\tau$ is an isomorphism. Hence, it holds for $p\in \CSP$ that
    \begin{align*}
        T'[p]
        &= S[\tau(p)]\\ 
        &= S_1\otimes_{F_k}S_2[\tau(p)]\\
        &=\left(T\ind{x_1,b_1,c_1,d_1}\otimes_{\CSP}T\ind{x_2,b_2,c_2,d_2}\right)[p].
    \end{align*}
    We add $T'$ to $T\nodeind$. The correctness follows from \cref{obs:union-is-convolution}.
    The number of different values $b_1, b_2, c_1, c_2, d_1$ and $d_2$ we iterate over is bounded by polynomial in $n$, and for each we spend time polynomial in $3^k$. Hence, the whole computation runs in time $\ostar(3^k)$.
\end{proof}

\begin{lemma}\label{lem:algo-running-time}
    The families $T\nodeind$ for all $x\in V(\syntaxtree)$ and all values $\budget, \weight,\actionweight$ can be computed in time $\ostar(3^k)$.
\end{lemma}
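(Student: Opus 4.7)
The plan is to bound the work at each node of $\syntaxtree$ separately and then sum over the tree, relying on the fact that all auxiliary ranges are polynomial. Concretely, the budget $\budget$ lies in $[n]_0$, the weight $\weight$ lies in $[\W\cdot n]_0$ with $\W = (2+\sqrt 2)n$, and the action-weight $\actionweight$ lies in $[|V(\syntaxtree)|\cdot\D]_0$ with $\D = 4(2+\sqrt 2)|V(\syntaxtree)|$; so the number of admissible triples $(\budget,\weight,\actionweight)$ at each node is bounded by a fixed polynomial in $n$ and $k$. Moreover, as noted in Section~\ref{sec:pre}, any $k$-expression can be assumed to have $O(n)$ union nodes and $O(nk^2)$ unary nodes, so $|V(\syntaxtree)| = O(nk^2)$.

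I would then go through the four node types in turn. For an introduce node $i(v)$, the vector $T\ind{x,\budget,\weight,\actionweight}$ is nonzero only for $O(1)$ triples and is initialized entry by entry in polynomial time. For a relabel node $\relabel{i}{j}(\mu_{x'})$, the recurrence can be evaluated by iterating over every $q\in\CSP$, computing $q_{i\to j}$ in time polynomial in $k$, and accumulating $T\ind{x',\budget,\weight,\actionweight}[q]$ into $T\ind{x,\budget,\weight,\actionweight}[q_{i\to j}]$ over $\bin$; since $|\CSP| = 3^k$ by Lemma~\ref{lem:num-of-cs-pats}, each triple costs $\ostar(3^k)$, and the polynomial number of triples keeps the per-node total at $\ostar(3^k)$.

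For a join node, Lemma~\ref{lem:compute-add-node} already gives an $\ostar(3^k)$ bound aggregated across all triples. For a union node, Corollary~\ref{cor:compute-union-node} gives an $\ostar(3^k)$ bound for a single triple $(\budget,\weight,\actionweight)$; summing this over the polynomially many triples at the node still yields $\ostar(3^k)$ per union node. Combining all four cases, the cost per node is $\ostar(3^k)$, and multiplying by $|V(\syntaxtree)| = \text{poly}(n,k)$ the total running time remains $\ostar(3^k)$.

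There is no real obstacle here; the technical content has been discharged by Lemma~\ref{lem:compute-add-node} (the nontrivial combination of $\patadd_{i,j}$ with $\actionst$ kept within the basis $\CSP$) and by Corollary~\ref{cor:compute-union-node} (the fast $\lor$-product via the lattice isomorphism of Lemma~\ref{lem:tau-is-isomorphism} and Observation~\ref{obs:Fk-is-power-of-012}). The present lemma is essentially a bookkeeping step, absorbing the polynomial factors coming from the $(\budget,\weight,\actionweight)$ ranges and the number of nodes into the $\ostar$ notation.
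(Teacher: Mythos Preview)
Your proposal is correct and follows essentially the same approach as the paper: bound the number of triples $(\budget,\weight,\actionweight)$ and the number of nodes by polynomials, handle introduce and relabel nodes directly via $|\CSP|=3^k$, and defer join and union nodes to Lemma~\ref{lem:compute-add-node} and Corollary~\ref{cor:compute-union-node} respectively. The paper's proof is slightly terser but structurally identical.
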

\begin{proof}
    Since all tables $T\nodeind$ are defined over $\bin$, we assume that 
    we can apply addition and multiplication over elements of these vectors in constant time.
    We also assume that we can apply the operations in \cref{def:patops} and \cref{def:patadd} over patterns in time polynomial in $k$.
    Since $|V(\syntaxtree)|$ is polynomial in $n$, it suffices to show for each node $x\in V(\syntaxtree)$ that it takes time $\ostar(3^k)$ to compute all tables $T\nodeind$. For an introduce node this is clearly the case. For a relabel node, we iterate over all values of $\budget \in [n], \weight \in [\W\cdot n], \actionweight \in [\D\cdot |V(\syntaxtree)|]$, and over all patterns in $\CSP$ and we process each such pattern in polynomial time in $k$. Since the three former values are bounded by a polynomial in $n$, and since it holds by \cref{lem:num-of-cs-pats} that $|\CSP| = 3^k$, it follows that we process a relabel node in time $\ostar(3^k)$.

    For a join node this follows from \cref{lem:compute-add-node}. For a union node, we iterate over polynomially many different values $\budget, \weight, \actionweight$, and for each we compute $T\nodeind$ by \cref{cor:compute-union-node} in time $\ostar(3^k)$.
\end{proof}
\begin{lemma}\label{lem:algo-counts-par-rep}
    It holds for $x\in V(\syntaxtree)$, and all values $\budget, \weight, \actionweight$, and a $CS$-pattern $p\in \CSP$ that 
    \[T\nodeind[p] = \left[p\in \bas_x\allind\right].\]
\end{lemma}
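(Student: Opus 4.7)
The plan is to prove the statement by bottom-up structural induction on $\syntaxtree$, matching each case of the DP definition of $T\nodeind$ to the corresponding case of the definition of $\bas_x\allind$. The guiding principle throughout is that since all tables live in $\bin$, arithmetic sums over $\bin$ correspond precisely to counting modulo two, while symmetric differences $\bigdelta$ correspond to taking the set of elements that occur an odd number of times in the underlying multiset. In other words, for every multi-indexed collection of patterns produced by a pattern operation, $p \in \bigdelta \{\ldots\}$ iff the number of tuples that produce $p$ is odd, and this is exactly what the $\bin$-sum in $T\nodeind[p]$ computes.

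For the \emph{base case}, if $x$ is an introduce node $i(v)$, the definitions of $T\nodeind$ and $\bas_x\allind$ are literally the same up to the indicator encoding, so the claim is immediate. For a \emph{relabel node} $\mu_x = \relabel{i}{j}(\mu_{x'})$, by the inductive hypothesis $T\ind{x',\budget,\weight,\actionweight}[q] = [q \in \bas_{x'}\ind{\budget,\weight,\actionweight}]$, and then
\[
T\nodeind[p] \;=\; \sum_{q \in \CSP,\; q_{i\to j}=p} [q \in \bas_{x'}\ind{\budget,\weight,\actionweight}] \;\bquiv\; |\{q \in \bas_{x'}\allind : q_{i \to j} = p\}|,
\]
which, working modulo $2$, equals $[p \in \exrel{\bas_{x'}\allind}{i}{j}] = [p \in \bas_x\allind]$ by the definition of the exclusive relabel.

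For a \emph{union node} $\mu_x = \mu_{x_1}\union\mu_{x_2}$, expanding the definition of $T\nodeind$ and applying the inductive hypothesis gives
\[
T\nodeind[p] \;\bquiv\; \sum_{\substack{b_1+b_2=\budget\\ c_1+c_2=\weight\\ d_1+d_2=\actionweight}} \sum_{\substack{p_1,p_2 \in \CSP\\ p_1 \punion p_2 = p}} [p_1 \in \bas_{x_1}\ind{b_1,c_1,d_1}]\cdot[p_2 \in \bas_{x_2}\ind{b_2,c_2,d_2}],
\]
which counts, modulo two, the number of tuples $(b_1,b_2,c_1,c_2,d_1,d_2,p_1,p_2)$ that form $p$ via $\punion$, and this is exactly $[p \in \bas_x\allind]$ by the definition of the iterated $\bigdelta$ over $\exunion$.

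The \emph{join node} case is the main obstacle because two successive operations ($\exadd_{i,j}$ and $\exacst$) must be composed correctly, and the intermediate table $T'\nodeind$ ranges over $\Pat$, not $\CSP$. The plan is to first verify the intermediate identity
\[
T'\nodeind[q] \;\bquiv\; [q \in \bas'_x\allind] \qquad \text{for all } q \in \Pat,
\]
by the same odd-count argument as in the relabel case, using that each $q \in \Pat$ arises from a unique set of preimages $r \in \CSP$ under $\patadd_{i,j}$. Once this is established, we expand
\[
T\nodeind[p] \;=\; \sum_{\ell \in [4]} \sum_{\substack{q \in \Pat\\ p \in \actionst(q,\ell)}} T'\ind{x,\budget,\weight,\actionweight - \actionf(x,\ell)}[q] \;\bquiv\; \sum_{\ell \in [4]} |\{q \in \bas'_x\ind{\budget,\weight,\actionweight-\actionf(x,\ell)} : p \in \actionst(q,\ell)\}|,
\]
which modulo two matches $[p \in \bigdelta_{\ell\in[4]} \exacst(\bas'_x\ind{\budget,\weight,\actionweight-\actionf(x,\ell)},\ell)] = [p \in \bas_x\allind]$. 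The only subtle point here is to confirm that the sum indeed enumerates every tuple $(\ell,q)$ with $p \in \actionst(q,\ell)$ exactly once, which is clear since the inner sum ranges over all of $\Pat$ while the outer sum ranges over the four actions independently; thus each contribution to the symmetric difference defining $\bas_x\allind$ is counted exactly once, and the equality modulo two follows.
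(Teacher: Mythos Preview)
Your proposal is correct and follows essentially the same approach as the paper: structural induction over $\syntaxtree$, matching in each case the $\bin$-sum defining $T\nodeind[p]$ to the parity of the number of tuples producing $p$ in the symmetric-difference definition of $\bas_x\allind$, including the two-stage treatment of the join node via the intermediate identity $T'\nodeind[q]\bquiv[q\in\bas'_x\allind]$. The only cosmetic remark is that your phrase ``arises from a unique set of preimages'' is superfluous---what matters is simply that the sum over preimages in $\CSP$ computes the preimage count modulo two, which is exactly membership in $\exadd_{i,j}(\bas_{x'}\allind)$.
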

\begin{proof}
    We prove the claim by induction over $\syntaxtree$. For an introduce node, the claim holds clearly. For a relabel node, it holds that 
    \begin{align*}
    p\in \bas_x\allind 
    &\iff |\{p'\in C_{x'}\allind\colon p'_{i\rightarrow j} = p\}| \bquiv 1\\
    &\iff \sum\limits_{p'_{i\rightarrow j} = p}T[x', \budget, \weight, \actionweight][p'] \bquiv 1\\
    &\iff T\nodeind[p] \bquiv 1,
    \end{align*}
    where the second equivalence holds due to the induction hypothesis.
    For a union node, it holds that 
    \begin{align*}
    p\in \bas_x\allind 
    &\iff \Big|\bigdelta\limits_{\substack{b_1+b_2=\budget\\c_1+c_2=\weight\\d_1+d_2=\actionweight}}\big\{(p_1, p_2)\in \bas_{x_1}\ind{b_1,c_1, d_1}\times \bas_{x_2}\ind{b_2,c_2,d_2}\colon p_1\punion p_2 = p\big\}\Big| \bquiv 1\\
    &\iff \sum\limits_{\substack{
        b_1+b_2=\budget\\c_1+c_2=\weight\\d_1+d_2=\actionweight\\p_1\punion p_2 = p}}
    T[x', b_1, c_1, d_1][p_1]\cdot T[x'', b_2, c_2, d_2][p_2] \bquiv 1\\
    &\iff T\nodeind[p] \bquiv 1.
    \end{align*}
    For a join node, it holds for $p\in \Pat$ that 
    \begin{equation}\label{eq:Tprime}
        \begin{split}            
        p\in \bas'_x\allind 
        &\iff |\{p'\in \bas_{x'}\allind\colon \patadd_{i,j}p' = p\}| \bquiv 1\\
        &\iff \sum\limits_{\patadd_{i,j}p' = p}T[x', \budget, \weight, \actionweight][p'] \bquiv 1\\
        &\iff T'\nodeind[p] \bquiv 1.
    \end{split}
\end{equation}
    Hence, it holds that
    \begin{align*}
        p \in \bas_x\allind
        &\iff p \in \bigdelta\limits_{\ell\in[4]}
        \exacst\big(\bas'_{x}\ind{\budget,\weight,\actionweight - \actionf(x, \ell)}, \ell\big)\\
        &\iff \sum\limits_{\ell\in[4]}|\{q\in\bas'_{x}\ind{\budget,\weight,\actionweight - \actionf(x, \ell)}\colon p\in\actionst(q, \ell)\}|\bquiv 1\\
        &\iff \sum\limits_{\ell\in[4]}
        \sum\limits_{p\in\actionst(q, \ell)}T'\ind{\budget,\weight,\actionweight - \actionf(x, \ell)}[q]\bquiv 1\\
        &\iff T\nodeind \bquiv 1,
    \end{align*}
    where the third congruence follows from (\ref{eq:Tprime}).
\end{proof}
\begin{corollary}
    It holds for all values $\budget, \weight$ and $\actionweight$ that 
    \[T\ind{r,\budget,\weight,\actionweight}\left[[0]\right]\bquiv 1 \iff [0] \in \drep_r\allind.\]
\end{corollary}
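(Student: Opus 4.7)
The plan is to chain two results that have already been established in the previous sections, so no new conceptual work is required. The statement is a direct combination of \cref{lem:algo-counts-par-rep} (which says that the dynamic programming table at any node correctly computes membership in $\bas$) and \cref{cor:bas-if-drep} (which says that membership of the all-trivial pattern $[0]$ in $\bas_r$ matches its membership in $\drep_r$).

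Concretely, I would first observe that $[0] \in \CSP$, since it consists solely of the zero-set $\{0\}$ with no other labels, and thus falls under the hypothesis of \cref{lem:algo-counts-par-rep}. Applying that lemma at the root node $x = r$ and to the pattern $p = [0]$ yields
\[
  T\ind{r,\budget,\weight,\actionweight}\bigl[[0]\bigr] \;=\; \bigl[[0] \in \bas_r\ind{\budget,\weight,\actionweight}\bigr],
\]
so the left-hand side of the corollary is congruent to $1$ modulo $2$ if and only if $[0] \in \bas_r\allind$.

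Next I would invoke \cref{cor:bas-if-drep}, which gives the equivalence $[0] \in \bas_r\allind \iff [0] \in \drep_r\allind$. Composing these two equivalences immediately gives
\[
  T\ind{r,\budget,\weight,\actionweight}\bigl[[0]\bigr] \bquiv 1 \iff [0] \in \drep_r\ind{\budget,\weight,\actionweight},
\]
as required. Since both ingredients are already proved, there is no genuine obstacle here; the only thing to be careful about is to note explicitly that $[0]$ is a $CS$-pattern so that \cref{lem:algo-counts-par-rep} applies (that lemma is stated only for $p \in \CSP$). The corollary then serves as the bridge, via the remaining isolation argument, to the main theorem.
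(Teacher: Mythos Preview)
Your proposal is correct and matches the paper's own proof, which likewise states that the corollary follows directly from \cref{lem:algo-counts-par-rep} and \cref{cor:bas-if-drep}. Your additional remark that $[0]\in\CSP$ (so that \cref{lem:algo-counts-par-rep} applies) is a welcome bit of extra care that the paper leaves implicit.
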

\begin{proof}
    This follows directly from \cref{lem:algo-counts-par-rep} and \cref{cor:bas-if-drep}.
\end{proof}

\begin{proof}[Proof (\cref{theorem:intro:mainresult})]
The algorithm first computes all families $T\ind{x,b,\weight,\actionweight}$ for all $x\in V(\syntaxtree)$ and all values of $b, \weight, \actionweight$. The algorithm states that there exists a Steiner tree of size $\budget$, if and only if there exist two values $\weight, \actionweight$ with $T\ind{r, \budget, \weight, \actionweight}\left[[0]\right] = 1$.

By \cref{lem:algo-running-time}, we can compute all families $T\ind{x,b,\weight,\actionweight}$ in time $\ostar(3^k)$. 
By \cref{lem:algo-counts-par-rep}, it holds for all values $\budget,\weight,\actionweight$ that
\[T\ind{r,\budget,\weight,\actionweight}\left[[0]\right]\bquiv 1 \iff [0] \in \drep_r\allind.\]
By \cref{lem:no-sol-no-drep} it holds that if no Steiner tree of size $\budget$ exists, then the algorithm will always return false. On the other hand if a Steiner tree of size $\budget$ exists, by choosing $\W = (2+\sqrt 2) |V|$, and $\D = 4(2+\sqrt 2) |V(\syntaxtree)|$, and by choosing both function $\weightf\in [\W]^{V}$, and $\actionf\in [\D]^{V(\syntaxtree)\times [4]}$ uniformly and independently at random, it holds by \cref{lem:iso-total-prob} that the algorithm will correctly decide the existence of such a tree with probability at least one half.
\end{proof}

\section{Conclusion}\label{section:conclusion}

We design a one-sided Monte Carlo algorithm for the \textsc{Steiner Tree[$\cw$]} problem that given a graph $G=(V,E)$, a set $T\subseteq V$ of terminals, and a $k$-clique-expression computes in time $3^k\cdot n^{\Oh(1)}$ the minimum number of vertices over all connected subgraphs $H$ of $G$ that span $T$. Due to an existing lower bound for \textsc{Steiner Tree[$\pw$]} this establishes (modulo SETH) that the basis for the correct parameter dependence for \textsc{Steiner Tree[$\cw$]} is $3$. This answers an open problem of Hegerfeld and Kratsch~\cite{DBLP:conf/esa/HegerfeldK23} and settles the complexity of arguably \emph{the} prototypical connectivity problem relative to clique-width. 

One technical contribution is the identification of so-called \emph{complete patterns} for dealing with connectivity in the setting of clique-width and labeled graphs, as these are representative for all patterns. This alone may be sufficient to settle the complexity of further connectivity problems such as \textsc{Connected Odd Cycle Transversal[$\cw$]} (cf.~\cite{DBLP:conf/esa/HegerfeldK23}). Our second technical contribution, to \emph{isolate a representative solution}, is likely more broadly applicable: By giving weights to the possible actions in a DP, which in particular may create new representatives for a given partial solution, we are able to isolate among representatives. Thus, if there is also a unique solution to begin with (as can be ensured w.h.p.), then we also have a unique representative (again w.h.p.). This permits the combined use of both representative sets as well as low-rank factorization-based approaches for matrices of low GF(2)-rank and may be instrumental in overcoming rank-submatrix gaps beyond the immediate realm of connectivity problems.

Apart from \textsc{Connected Odd Cycle Transversal[$\cw$]}, both \textsc{Feedback Vertex Set[$\cw$]} and \textsc{Connected Feedback Vertex Set[$\cw$]} are good problems whose exact complexity to aim for. In particular, \textsc{Feedback Vertex Set[$\cw$]} brings a known difficulty in counting the number of edges outside the solution to enforce the acyclicity requirement (see Hegerfeld and Kratsch~\cite{DBLP:conf/esa/HegerfeldK23} and Bergougnoux and Kant\'e~\cite{DBLP:journals/tcs/BergougnouxK19,DBLP:journals/siamdm/BergougnouxK21}).

\bibliography{ref}
\end{document}